\newtheorem{proposition}{Proposition}
\newtheorem{theorem}{Theorem}
\newtheorem{definition}{Definition}
\def\spacingset#1{\renewcommand{\baselinestretch}
{#1}\small\normalsize} \spacingset{1.5}
\author{Sam van Meer$^{\text{a}}$, Nick Koning$^{\text{a}}$ \\ \footnotesize {\vspace{0pt} $^{\text{a}}$Econometric Institute, Erasmus University Rotterdam; Tinbergen Institute, Netherlands.}\\
} \vspace{12pt}\date{April 2025}
\begin{document}

\title{Real-time Program Evaluation using Anytime-valid Rank Tests*}

\maketitle

\begingroup
\renewcommand\thefootnote{*}
\footnotetext{Corresponding author: Sam van Meer, \texttt{vanmeer@ese.eur.nl}. We thank Alberto Abadie, Dick van Dijk, Stan Koobs, Bernhard van der Sluis, and Wendun Wang for their valuable feedback. We are also grateful to participants at IAAE 2024, EAYE 2024, Camp Econometrics 2024, and the ISEO Summer School 2024 for their helpful comments.}

\endgroup

\begin{abstract}
    \noindent Counterfactual mean estimators such as difference-in-differences and synthetic control have grown into workhorse tools for program evaluation.
    Inference for these estimators is well-developed in settings where all post-treatment data is available at the time of analysis.
    However, in settings where data arrives \emph{sequentially}, these tests do not permit real-time inference, as they require a pre-specified sample size $T$.
    We introduce real-time inference for program evaluation through anytime-valid rank tests. 
    Our methodology relies on interpreting the absence of a treatment effect as exchangeability of the treatment estimates.
    We then convert these treatment estimates into sequential ranks, and construct optimal finite-sample valid sequential tests for exchangeability.
    We illustrate our methods in the context of difference-in-differences and synthetic control. In simulations, they control size even under mild exchangeability violations.
    While our methods suffer slight power loss at~$T$, they allow for early rejection (before $T$) and preserve the ability to reject later (after $T$).
\end{abstract}

\newpage 
\tableofcontents

\section{Introduction}\label{s1}
    We sequentially observe outcomes $Y^T = (Y_1, Y_2, \dots, Y_{T_0}, \dots, Y_{T})$, of some individual or \emph{unit} that undergoes a treatment at time $T_0$, up until some possibly infinite final time $T$.
    Starting from the treatment time $T_0$ onwards, we are interested in testing the hypothesis that the treatment has no effect on our outcome $Y^T$.
    
    We monitor this by comparing the post-treatment outcomes to the pre-treatment outcomes.
    To make these comparable, we assume they are appropriately transformed to be exchangeable in absence of a treatment effect.
    Depending on the context, this means the treatment outcomes may be directly observed, or they may be estimates obtained from some counterfactual mean method such as difference-in-differences or synthetic control, as we demonstrate later.
    We now formulate the hypothesis of no treatment effect $H_0$ as
    \begin{align}
        H_0 : Y_1, Y_2, \dots, Y_{T_0}, \dots, Y_{T} \text{ are exchangeable}.
    \end{align}
    Here, exchangeability means that for any finite $t\leq T$, the distribution of data $Y_1, \dots Y_{t}$ is invariant under re-ordering of its observations.
    The advantage of this abstract formulation of the notion of `no treatment effect' is that it can not only be used to capture changes in the mean, but also other changes in the distribution.

    This problem is well-studied, but only for situations where the number of observations $T$ is pre-specified or, equivalently, independently specified.
    For example, \citet{chernozhukov2021exact} and \cite{abadie2021synthetic} propose a permutation $p$-value $p_T$ that is valid for every pre-specified $T > T_0$:
    \begin{align}\label{eq:type-I_error}
        \sup_{T > T_0} \mathbb{P}^{H_0}(p_T \leq \alpha) \leq \alpha, \text{ for all $\alpha > 0$}.
    \end{align}
    This guarantees that at every point in time $T$, the probability that the $p$-value $p_T$ is below $\alpha$ is at most $\alpha$.

    When evaluating the impact of a treatment, the requirement to pre-specify the number of observations $T$ is restrictive.
    Indeed, if we set $T$ too large, then we incur the cost of collecting too much data, and an opportunity cost due to the delay.
    On the other hand, setting $T$ too small may lead to an underpowered test, which comes with the opportunity cost associated to not discovering an effective treatment.

    To prevent collecting too much data, it may be tempting to reject the hypothesis $H_0$ at the first time $T$ at which $p_T \leq \alpha$.
    However, for traditional tests, such real-time monitoring is prohibited, as this makes the number of observations $T$ data-dependent while the Type-I error guarantee in \eqref{eq:type-I_error} only holds for data-independent $T$.
    The probability that a standard $p$-value \emph{ever} dips below $\alpha$ is typically much larger than $\alpha$.

    \subsection{Contribution}
        The primary contribution of this paper is to develop \emph{anytime-valid} $p$-values $\mathfrak{p}_T$ for real-time program evaluation.
        In contrast to a regular $p$-value as in \eqref{eq:type-I_error}, such an anytime-valid $p$-value is valid at \emph{arbitrary data-dependent times $T$}.
        In particular, the probability under $H_0$ that $\mathfrak{p}_T$ \emph{ever} dips below $\alpha$ is bounded by $\alpha$:
        \begin{align} \label{eq:anytimeval}
            \mathbb{P}^{H_0}\left(\inf_{T > T_0} \mathfrak{p}_T \leq \alpha\right) \leq \alpha.
        \end{align}
        A consequence is that we need not pre-specify a number of observations, but we can test at every moment in time without invalidating the Type I error guarantee.
    
        We tailor our anytime-valid $p$-values to the key feature of this problem: the fact that at the moment we start testing, we already have an entire batch of $T_0$ pre-treatment observations available to us.
        If we order the pre-treatment observations by magnitude, the exchangeability hypothesis $H_0$ implies that the first incoming post-treatment observation is equally likely to land in any position between these pre-treatment observations.
        We exploit this to design two frameworks, centered around two alternative hypotheses, for which we construct our anytime-valid $p$-values. \\
        
        \noindent \textbf{Post-exchangeable alternative} \\
        In the first framework, we construct our $p$-values by keeping track of how the newly incoming post-treatment observations rank among the pre-treatment observations.
        As this framework does not depend on the ranking of the post-treatment data among themselves, this can be translated into an alternative hypothesis in which the post-treatment data are exchangeable:
        \begin{align}
            H_1^{\text{post-exch.}} : Y_1, Y_2, \dots, Y_{T_0}, \dots, Y_T \textnormal{ are not exchangeable, but } Y_{T_0 + 1}, \dots, Y_T \textnormal{ are}.
        \end{align}
        An interpretation of this alternative is that after time $T_0$, the treatment shifts the data generating process to some new stable equilibrium from which the post-treatment data are drawn. \\
    
        \noindent \textbf{Generic alternative} \\
        In our second framework, we do not just consider how the post-treatment data ranks among the pre-treatment data, but also how the post-treatment data ranks among itself.
        This translates into the `generic' alternative hypothesis that the treatment outcomes are simply not exchangeable:
        \begin{align}
            H_1^{\text{generic}} : Y_1, Y_2, \dots, Y_{T_0}, \dots, Y_T \textnormal{ are not exchangeable}.
        \end{align}
        This lends itself better, for example, to situations in which the underlying treatment effect fluctuates or continues to grow over time. Note that this hypothesis nests the post-exchangeable alternative defined before. Therefore, tests against this generic alternative are generally expected to be less powerful against the narrower post-exchangeable alternative.\\
    
        For both these alternatives, we derive a flexible way to construct an anytime-valid $p$-value.
        In particular, our construction works for \emph{any} choice of non-negative test statistic, that may arbitrarily depend on the previously observed ranks or on external information.
        If the data generating process under the alternative is known, we show how to exploit this to derive a $p$-value with a type of optimal shrinkage rate \citep{grunwald2024safe}.
        As an example, we derive the optimal anytime-valid procedure under a Gaussian alternative.
        If the data generating process under the alternative is not known, we derive an adaptive test statistic that learns the alternative during the sequential testing procedure \citep{fedorova2012plug}.
    
        In addition, we demonstrate how our methodology can be applied in an interactive fixed-effects model.
        There, both difference-in-differences and synthetic control produce exchangeable treatment estimates in absence of a treatment effect under standard assumptions \citep{abadie2021synthetic, chernozhukov2021exact}.
        We use this setting to illustrate our methods in a simulation study, and demonstrate that our methods control size over time when the treatment estimators are (approximately) exchangeable under the null.
        If exchangeability is strongly violated due to presence of serial dependence, we show that size distortions can be mitigated by imposing a block structure, as proposed in the fixed-$T$ setting by \cite{chernozhukov2021exact}.
        In terms of power, we find that the cost of performing anytime-valid inference is limited, especially when the expected effect size is accurately specified. 
        We highlight the benefit of our real-time anytime-valid inference in a discounted utility model, where we find our anytime-valid tests to be preferred over the fixed-$T$ test for a wide range of intertemporal discount factors.

    \subsection{Example: exchangeable treatment estimators}
        In this section, we briefly illustrate how counterfactual and synthetic control (CSC) methods \citep{chernozhukov2021exact} can give rise to exchangeable treatment estimates, so that our methods are applicable.

        We consider a panel with $J + 1$ units over time, where only the first unit $i = 1$ is treated after time $T_0$, and the remaining $J$ units are never treated.
        The observed outcomes of interest $Y^{\text{obs}}_{it}$ are modelled through the potential outcomes framework:
        \begin{equation}
             Y^{\text{obs}}_{it}
            :=
            \begin{cases}
                Y_{it}(1) \quad \text{if $i=1$ and $t>T_0$}, \\
                Y_{it}(0)  \quad \text{else},
            \end{cases}
        \end{equation}
        where $Y_{it}(1)$ is the outcome at time $t$ of unit $i$ if it is treated, and $Y_{it}(0)$ if not.
        Assuming the data is real-valued, we are interested in the causal effect of treatment on the treated unit, which is defined as $\tau_t=Y_{1t}(1)-Y_{1t}(0)$.
        
        Unfortunately, by construction, we only observe $Y_{1t}(0)$ up until time $T_0$ and only observe $Y_{1t}(1)$ beyond time $T_0$: we never observe both $Y_{1t}(1)$ and $Y_{1t}(0)$ at any particular time $t$.
        As a consequence, $\tau_{t}$ cannot simply be computed, but needs to be inferred. 
        CSC methods estimate $\tau_t$ by constructing an approximation $\widehat{Y}_{1t}(0)$ of the counterfactual outcome $Y_{1t}(0)$ of the treated unit, had it not been treated.
        This approximation is then used to construct the treatment estimator $\widehat{\tau}_t = Y_{1t}^{\text{obs}} - \widehat{Y}_{1t}(0)$. 

        To construct the counterfactual approximation, it is common to use the observed outcomes $Y^{\text{obs}}_{it}$ of the untreated units. 
        The Synthetic Control Method \citep{abadie2003economic}, then constructs the approximation of the counterfactual $\widehat{Y}_{1t}(0)$ as a linear combination of the untreated outcomes, $\widehat{Y}_{1t}(0)=\sum_{i=2}^{J+1}w_i Y^{\text{obs}}_{it}$ for some appropriate weights $w_i$.  

        Let $u_t$ denote the approximation error of this proxy, such that $\widehat{\tau}_t=\tau_t+u_t$. 
        If $u_1, u_2, ...$ are exchangeable under $H_0$, then we can recast the null-hypothesis $H_0: \tau_{T_0+1}=\tau_{T_0+2}=...=0$ into $H_0: \{\widehat{\tau}_t\}_{t>0}$ is exchangeable \citep{chernozhukov2021exact}. 
        Hence, sequential testing for a treatment effect can be done by sequentially testing for exchangeability of the CSC estimator. 
        Later, we show how the interactive fixed effects model by \cite{bai2009panel} can give rise to these exchangeable approximation errors.       

    \subsection{Related literature}
        As mentioned, our work is related to that of \cite{chernozhukov2021exact}, which studies inference for counterfactual mean models.
        Their work is also based on exchangeability over time, and relies on the same assumptions.
        However, they only consider testing at a fixed number of observations $T$, whereas our methods permit real-time inference with an adaptive number of observations.
    
        Our paper also relates to the new \emph{synthetic design} literature, which applies the synthetic control method in experimental setting \citep{abadie2021synthetic,doudchenko2019designing,doudchenko2021synthetic}.
        Compared to traditional A/B testing, synthetic control has the advantage that it can account for regional market effects and spillovers.
        A key focus of this literature is on experimental design, particularly in optimizing the selection of treatment units.
        Since our methods allow for endogenous treatment selection, they can be integrated with such experimental designs to enable sequential inference in the context of synthetic design.
        
        There already exists some work on sequential causal inference under more restrictive assumptions. 
        \cite{maharaj2023anytime} study anytime-valid inference for A/B testing, but they assume a multiple-treated-units framework in which the outcome of a new treated unit is observed with every observation. 
        In contrast, our methodology also works for a single treated unit that undergoes one treatment. 
        Similarly, \cite{ham2022design} develop \textit{asymptotically} anytime-valid confidence intervals for causal inference. 
        Their methods work in the time-series/panel setting, but require that treatment is randomly assigned at every time point. 
        Our method is different from theirs in that we allow for endogenous treatment selection, and do not require new treatment assignment at every time point. 
        Also, our proposed methods do not rely on asymptotic approximations but have finite sample size guarantees.

        Rank-based tests for sequential exchangeability in an abstract context can be traced back to at least \cite{vovk2003testing}.
        However, to the best of our knowledge, they have not been previously applied in treatment evaluation and have generally been surprisingly underappreciated in statistical inference.
        Other recent applications include testing independence \citep{henzi2024rank} and outlier detection \citep{bates2023testing}.
        Moreover, \citet{koning2024post} studies how this can be generalized beyond testing exchangeability to other forms of invariance.

        Compared to existing work on rank-based anytime-valid testing of exchangeability, our work has two unique features.
        First, we have an entire batch of pre-treatment observations available to us when we start testing.
        Second, we exploit the pre-/post-treatment structure in an alternative hypothesis under which we only consider the ranks of the post-treatment data among these pre-treatment observations.
        Interestingly, \citet{fischer2024sequential} study a related approach in a completely unrelated context: reducing the number of Monte Carlo simulations in a Monte Carlo-based hypothesis test. 
        Abstractly speaking, their approach can be interpreted as the special case of our method, in which only a single pre-treatment period is considered.
        
    \section{Background: \textit{e}-values and anytime-valid inference}\label{sec:background}
        To simplify the discussion, we consider testing a simple null hypothesis that contains a single distribution $\mathbb{P}$ throughout this section.
        While our null hypothesis of exchangeability is highly composite, we show that it can be reduced to a simple hypothesis in Section \ref{sec:seqtest}.
        
        The construction of our anytime-valid sequential tests relies on a recently introduced measure of evidence called the $e$-value \citep{shafer2011test, grunwald2024safe,howard2021time,vovk2021values, ramdas2023game,grunwald2024safe,koning2024continuous}.
        An $e$-value is typically defined as a non-negative random variable $e$ with expectation bounded by 1 under the null hypothesis:
        \begin{align*}
            \mathbb{E}[e] \leq 1.
        \end{align*}
        
        The definition of the $e$-value does not immediately convey its usefulness in a testing context, but a simple application of Markov's inequality shows that we can construct a valid test from an $e$-value.
        In particular, a test that rejects when $e \geq 1/\alpha$ is valid at level $\alpha > 0$:
        \begin{align*}
            \mathbb{P}(e \geq 1/\alpha)
                \leq \alpha \mathbb{E}[e]
                \leq 1.
        \end{align*}
        
        Moreover, $e$-values inherit simple merging rules from the properties of the expectation: the average of two $e$-values and the product of independent $e$-values is also an $e$-value. 
        A generalization of this product-merging property combined with the possibility to convert an $e$-value into a test forms the basis for a powerful machinery for constructing anytime-valid sequential tests \citep{ramdas2020admissible}.
        We explain the key parts of this machinery in this section.
    
    \subsection{Sequential \textit{e}-values and test martingales}
        In the context of sequential testing, we must carefully describe the available information at every moment in time.
        This means that our sample space is not just equipped with some sigma algebra of events, but with an entire filtration $\{\mathcal{F}_t\}_{t \geq 0}$ that describes the available information $\mathcal{F}_t$ at time $t \geq 0$.
        With such a filtration, we can speak of a sequential $e$-value $e_t$ at time $t$, if it is an $e$-value with respect to the available information $\mathcal{F}_{t-1}$.
        Without loss of generality, we set $e_0 = 1$ out of convenience.
    
        \begin{definition}[Sequential $e$-value]
            For $t > 0$, a sequential $e$-value $e_t$ with respect to the filtration $\mathcal{F} = \{ \mathcal{F}_t \}_{t \geq 0}$, is a non-negative random variable such that under the null hypothesis:
            \begin{equation}
                \mathbb{E} \left[e_t | \mathcal{F}_{t-1}\right] \leq 1.
            \end{equation}
        \end{definition} 
    
        In order to construct an anytime-valid sequential test out of these sequential $e$-values, we start by stringing them together through multiplication.
        In particular, the running product $W_t = \prod_{s = 0}^t e_s$ of a sequence of sequential $e$-values $e_0, e_1\dots$, where $e_0=1$, constitutes a non-negative supermartingale that starts at 1.
        Such martingales are also referred to as \emph{test martingales}, for their applications in testing \citep{shafer2011test}.
    
        \begin{definition}
            Let $\{W_t\}_{t \geq 0}$ be a non-negative random process adapted to a filtration $\mathcal{F}=\{\mathcal{F}_t\}_{t\geq 0}$.
            We say that this is a test martingale for our null hypothesis if it starts at 1, is integrable at every $t$, and
           \begin{align*}
                \mathbb{E}\left[W_t | \mathcal{F}_{t-1}\right] \leq W_{t-1}, \quad \text{for all $t > 0$}.
           \end{align*}
        \end{definition}
    

    \subsection{From test martingale to anytime-validity through Ville's inequality}
        Recall that our aim is to construct an anytime-valid $p$-value $\mathfrak{p}_t$, as defined in (\ref{eq:anytimeval}).
        It turns out that this can be accomplished by simply setting the $p$-value equal to the reciprocal of a test martingale: $\mathfrak{p}_t:=\frac{1}{W_t}$.
        Indeed, this follows from an application of Ville's inequality:
        \begin{align}\label{ineq:ville}
            \mathbb{P}\left(\inf_{t\geq0}\mathfrak{p}_t\leq\alpha\right)
                = \mathbb{P}\left(\sup_{t\geq0}W_t\geq\frac{1}{\alpha}\right)
                \leq \alpha, \quad \textnormal{for all } \alpha > 0,
        \end{align}      
        under the null hypothesis \citep{ville1939etude,grunwald2024safe}.
        In words, this means that for the $p$-value $\mathfrak{p}_t$, the probability that it \emph{ever} dips below $\alpha$ is bounded by $\alpha$.
        As such $p$-values are actually stochastic processes, they are sometimes also referred to as $p$-processes.

        Ville's inequality, the inequality in \eqref{ineq:ville}, can be viewed as a sequential generalization of Markov's inequality for martingales.
        If we interpret Markov's inequality as stating that the odds to double our money in a fair bet are no more than 50/50, then Ville's inequality states that the odds to \emph{ever} double our money in a sequence of fair bets is no more than 50/50.
    
        A testing interpretation of Ville's inequality is that the probability our test martingale $W_t$ \emph{ever} exceeds the critical value $1/\alpha$ is bounded by $\alpha$.
        This is true in particular for the first data-dependent time $\inf\{t > 0 \mid W_t \geq 1/\alpha\}$ at which the test martingale exceeds the threshold, but also for any other possibly data-dependent time.
        A consequence is that we can continuously evaluate whether our test martingale exceeds $1/\alpha$, which allows for early rejection if the evidence is strong or continuation if the evidence remains weak.

    \subsection{Constructing good anytime-valid \textit{p}-values and test martingales}\label{sec:good_martingales}
        It remains to discuss how to construct a good anytime-valid $p$-value.
        Ideally, we would like a $p$-value that shrinks `as quickly as possible' when the alternative hypothesis is true.
        As our anytime-valid $p$-value is the reciprocal of a test martingale, this is equivalent to finding a test martingale that grows as quickly as possible.

        Before we can even define what it means for a test martingale to grow as quickly as possible, we must first define when it should grow as quickly as possible.
        To do this, let us assume for now that if the alternative hypothesis is true, then the data are sampled from the distribution $\mathbb{Q}$.
        This means our test martingale should grow as quickly as possible if the data comes from $\mathbb{Q}$.

        As our test martingale is the running product of sequential $e$-values, the problem can be decomposed into choosing sequential $e$-values that lead to a high growth rate.
        Let $\mathbb{Q}_t$ denote the conditional distribution given the available information $\mathcal{F}_{t-1}$ at time $t -1$, which is the moment at which we must choose the sequential $e$-value that we will use at time $t$.
        
        A naive strategy would be to choose the sequential $e$-value that has the largest expectation under $\mathbb{Q}_t$.
        Unfortunately, this sequential $e$-value is usually zero with a substantial positive probability.
        This is problematic: if we encounter a single zero along the way, this will multiply the current value of our martingale by zero, which means it will equal zero and can never grow again.
        For this reason, the anytime-validity literature typically considers different targets.
        
        A popular target is to maximize the expected logarithm $\mathbb{E}^{\mathbb{Q}_t}[\log e_t]$, which is equivalent to maximizing the geometric expectation.
        The motivation for this target comes from the i.i.d. setting, where this maximizes the long-run growth rate \citep{kelly1956new}.
        Indeed, if $e_1, \dots e_t$ are i.i.d., then a combination of the strong law of large numbers and the continuous mapping theorem tells us that this maximizes the average asymptotic growth rate:
        \begin{align*}
            (W_t)^{1/t}
                = \prod_{s = 0}^t (e_s)^{1/t}
                = \exp\left\{\ln \prod_{s = 0}^t (e_s)^{1/t}\right\}
                = \exp\left\{\tfrac{1}{t}\sum_{s = 0}^t \log  e_s\right\}
                \rightarrow \exp\left\{\mathbb{E}\log e_1\right\},
        \end{align*}
        
        Remarkably, for a simple null hypothesis with distribution $\mathbb{P}$ and alternative $\mathbb{Q}$, where $\mathbb{P}\gg \mathbb{Q}$, this `log-optimal' sequential $e$-value simply equals the conditional likelihood ratio \citep{kelly1956new, koolen2022log, grunwald2024safe, larsson2024numeraire}
        \begin{align} \label{eq:logopt}
            e_t
                = \frac{d\mathbb{Q}_t}{d\mathbb{P}_t}.
        \end{align}
        This means that the growth-rate maximizing test martingale is the product of these conditional likelihood ratios.

    \subsection{Composite alternatives}
        In our application, the alternative hypothesis $H_1$ is typically composite, as it is often hard to know beforehand how effective a treatment will be.
        This means that we do not know the precise distribution $\mathbb{Q}$ from which the data are sampled when the alternative hypothesis is true.
        Fortunately, we can use the sequential nature of the setting to adaptively learn about this distribution while we are testing, and update our choice of sequential $e$-values accordingly.

        One such approach is the plug-in approach, where we plug our current estimator of $\mathbb{Q}$, given the available information, in for $\mathbb{Q}$ in the target.
        Another approach is the method of mixtures, where our sequential $e$-value at time $t$ is an adaptive mixture over multiple candidate sequential $e$-values at time $t$ \citep{ramdas2023game, ramdas2020admissible}.
        Here, an adaptive mixture means that we can update the mixing weights based on the historical performances of each of these $e$-values.

        In case we consider $k \in \mathbb{N}$ candidate sequential $e$-values, there exists a weighting scheme for these $e$-values with an attractive minimax regret property, as demonstrated in Appendix 
        \ref{app:minimax}.
        This adaptive weighting scheme, when applied to the candidate $e$-values, reduces to taking a simple average over their respective martingales. 
        The regret bound states that the average growth of this mixture martingale at time $t$ is at most $\frac{\log k}{t}$ away from that of the $e$-value candidate with highest average growth.

\section{Rank-based anytime-valid tests for program evaluation} \label{sec:seqtest}
    In this section, we present our main contribution: the construction of anytime-valid $p$-values for program evaluation.
    In Section \ref{sec:background}, we showed that such an anytime-valid $p$-value $\mathfrak{p}_t$ can be constructed as the reciprocal $\mathfrak{p}_t = 1/W_t$ of a test martingale $W_t$.
    In turn, a test martingale is formed as the running product $W_t = \prod_{s = 1}^t e_s$ of sequential $e$-values $e_1, \dots, e_t$.
    Therefore, we focus in this section on the construction of sequential $e$-values for program evaluation, as they are the building blocks for anytime-valid $p$-values. 

    Recall that we sequentially observe the treatment estimates $Y_1, Y_2, \dots, Y_{T_0}, \dots$, where $Y_1, \dots Y_{T_0}$ are considered untreated, and $Y_{T_0 + 1},  Y_{T_0 + 2}, \dots$ treated.
    We assume throughout that these treatment estimates have been appropriately de-meaned or otherwise transformed so that they are exchangeable in absence of a treatment effect.
    This means we can reject the hypothesis of no treatment effect, if we reject the hypothesis of exchangeability:
    \begin{align*}
        H_0 : Y_1, Y_2, \dots, Y_{T_0}, Y_{T_0 + 1}, \dots Y_T \textnormal{ are exchangeable}.
    \end{align*}
    
    In Section~\ref{sec:fixed_effects}, we detail in an interactive fixed effects model how assumptions on the data generating process for difference-in-differences and synthetic control give rise to such exchangeable treatment estimates.

    Moreover, recall our two alternatives: the more specific alternative $H_1^\textnormal{post-exch.}$, where post-treatment data are assumed to be exchangeable among each other, but not exchangeable with the pre-treatment data, and the broader alternative $H_1^\textnormal{generic}$, where the post-treatment data are not exchangeable with the pre-treatment data or among each other.

    \subsection{Coarsening the filtration}
        Interestingly, there exist no non-trivial test martingales for exchangeability under the natural filtration: the filtration under which we have full information about the original data $Y_1, Y_2, \dots, Y_T$ \citep{vovk2021testing, ramdas2022testing}.
        This means it is necessary to consider a less informative `coarsened' filtration.
        The practical consequence is that in exchange for having a powerful test martingale, we are only allowed to base our decision to stop or continue testing on some statistic of the data; not the underlying data itself.

        This may superficially seem paradoxical: if we only base our decision to stop or continue on a statistic of the data, then we must surely lose power?
        This paradox is resolved by realizing the coarsening of the filtration simultaneously reduces the collection of data-dependent (`stopping') times under which our method is required to be anytime-valid.
        This is aptly named `the power of forgetting' by \cite{vovk2023power}.

        Another interpretation is that we may only be interested in alternatives that depend on this statistic, so that the precise values of the underlying data are uninteresting to us.
        This means that we consider a narrower alternative hypothesis towards which we can direct our power.
        In this interpretation, the coarsening of the filtration is inconsequential: it discards irrelevant information.
        
    \subsection{The generic alternative: sequential ranks}\label{sec:seq_ranks}
        For exchangeability, the natural\footnote{Ranks are natural because, if there are no ties, they are in bijection with the group of permutations that underlies exchangeability \citep{koning2024post}. 
        This means that testing uniformity of the ranks can be interpreted as testing uniformity on the group of permutations.} reduction of the filtration is obtained by converting the data into their sequential ranks: the rank $R_t$ of the new observation $Y_t$ among the previous observations $Y_1, \dots, Y_{t-1}$.
        We denote the filtration generated by the sequential ranks by $\mathcal{G} = \{\mathcal{G}_t\}_{t\geq0}$.
        The consequence of this reduction is that we can obtain powerful test martingales, at the cost that our inference can only depend on the sequential ranks.
    
         A key property of the sequential ranks is that under exchangeability, the previous ranks are completely uninformative about the next rank.
        This means that under this reduced filtration, our null hypothesis can be recast as sequentially testing whether the sequential ranks are uniformly distributed \citep{vovk2003testing}:
        \begin{align}\label{eq:hyp_rank}
            H_0^{\text{rank}} : R_t \overset{\text{indep.}}{\sim}\text{Unif}\{1, \dots, t\},
        \end{align}
        for every $t > T_0$. 
        
        This reduces the highly composite hypothesis of exchangeability to a simple hypothesis on the ranks.
        Under this coarsened filtration, the generic alternative hypothesis is simply the negation of the null:
        \begin{align}
             H_1^{\text{generic}} : R_t \overset{\text{indep.}}{\not\sim}\text{Unif}\{1, \dots, t\},
         \end{align}
        which means that the alternative remains composite.
        We handle this in Section \ref{sec:choosing}.
            
        In Theorem \ref{th:rank_test}, we define the building block for our anytime-valid $p$-value for this hypothesis: a sequential $e$-value $e_t$.
        This sequential $e$-value offers great flexibility, as it can be defined for an arbitrary non-negative test statistic $S_t$, that may depend on the past sequential ranks $\mathcal{G}_{t-1}$ and on external randomization.
        An interpretation of this $e$-value is that it quantifies how the current value of the statistic $S_t(R_t)$ compares to the average among its other possible realizations. 

        Moreover, Theorem \ref{th:rank_test} also shows that if the alternative of the $t$'th sequential rank conditional on $\mathcal{G}_{t-1}$ is simple, then log-optimality (see Section \ref{sec:good_martingales}) is attained by choosing $S_t$ proportional to the conditional density of this alternative. 
        All proofs are presented in Appendix \ref{sec:proofs}.
        In Section \ref{sec:choosing}, we appropriate choices for composite alternatives.            
    
        \begin{theorem} \label{th:rank_test}
           Let $\{S_t\}_{t\geq 0}$, $S_t:\{1,...,t\}\rightarrow[0,\infty)$, denote a sequence of functions such that, under $H_0$, $S_t\perp R_t|\mathcal{G}_{t-1}$. 
           Then, for all $t\geq0$,
           \begin{equation}\label{eq:eval}
               e_t = \frac{S_t(R_t)}{\tfrac{1}{t}\sum_{i=1}^t S_t(i)}
           \end{equation}
           is a sequential $e$-value for $H_0$ with respect to filtration $\mathcal{G}$.
           Here, $0/0$ is understood as 1.

           Moreover, for a simple alternative with $g_t|\mathcal{G}_{t-1}$ as the conditional density of $R_t|\mathcal{G}_{t-1}$ with respect to the counting measure on $\{1,\dots,t\}$, the log-optimal test martingale is obtained by choosing a $\mathcal{G}_{t-1}$-measurable test statistic $S_t\propto g_t|\mathcal{G}_{t-1}$.
        \end{theorem}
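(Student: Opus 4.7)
The plan is to verify the $e$-value property by a short conditioning argument, and then derive log-optimality by identifying the proposed $e$-value with the one-step likelihood ratio and invoking the classical information inequality.

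For the $e$-value property, I would first condition on $\mathcal{G}_{t-1}$ together with the external randomization defining $S_t$, so that in this enlarged conditioning $S_t$ becomes a deterministic function on $\{1, \ldots, t\}$. The assumption $S_t \perp R_t \mid \mathcal{G}_{t-1}$ under $H_0$ ensures that enlarging the conditioning does not alter the law of $R_t$, which remains $\text{Unif}\{1, \ldots, t\}$ by \eqref{eq:hyp_rank}. Hence $\mathbb{E}^{H_0}[S_t(R_t) \mid S_t, \mathcal{G}_{t-1}] = \tfrac{1}{t}\sum_{i=1}^{t} S_t(i)$, which is exactly the denominator in \eqref{eq:eval}. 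When this denominator is strictly positive, the conditional expectation of $e_t$ is exactly $1$; when it vanishes, $S_t(R_t)$ is almost surely zero and the $0/0$ convention yields $e_t = 1$ pointwise. Taking outer expectations by the tower property then gives $\mathbb{E}^{H_0}[e_t \mid \mathcal{G}_{t-1}] = 1$.

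For log-optimality, I would substitute $S_t = c \cdot g_t(\cdot \mid \mathcal{G}_{t-1})$ into \eqref{eq:eval}. The constant $c$ cancels and, because $g_t(\cdot \mid \mathcal{G}_{t-1})$ sums to $1$ on $\{1, \ldots, t\}$, the $e$-value collapses to $e_t^{\star} = t \cdot g_t(R_t \mid \mathcal{G}_{t-1})$, which is precisely $d\mathbb{Q}_t / d\mathbb{P}_t$ relative to the uniform null. For any competing $\mathcal{G}$-sequential $e$-value $e_t$, Jensen's inequality applied to the concave logarithm gives
\[
\mathbb{E}^{\mathbb{Q}_t}\!\left[\log \tfrac{e_t}{e_t^{\star}} \,\Big|\, \mathcal{G}_{t-1}\right] \leq \log \mathbb{E}^{\mathbb{Q}_t}\!\left[\tfrac{e_t}{e_t^{\star}} \,\Big|\, \mathcal{G}_{t-1}\right] = \log \mathbb{E}^{\mathbb{P}_t}[e_t \mid \mathcal{G}_{t-1}] \leq 0,
\]
which rearranges to $\mathbb{E}^{\mathbb{Q}_t}[\log e_t \mid \mathcal{G}_{t-1}] \leq \mathbb{E}^{\mathbb{Q}_t}[\log e_t^{\star} \mid \mathcal{G}_{t-1}]$. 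Since the log-growth of a test martingale decomposes additively into the per-step conditional expectations of $\log e_s$, maximizing each step in turn yields the log-optimal martingale, as claimed.

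The main obstacle is bookkeeping rather than conceptual depth: one must set up the probability space so that the external randomization entering $S_t$ is measurable while preserving $S_t \perp R_t \mid \mathcal{G}_{t-1}$ under $H_0$, and one must handle the $0/0$ branch without upsetting measurability or integrability. The Jensen step also tacitly uses $\mathbb{Q}_t \ll \mathbb{P}_t$, which is automatic because the null distribution is uniform with full support on $\{1, \ldots, t\}$.
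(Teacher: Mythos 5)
Your proof is correct and follows essentially the same route as the paper's: condition on $S_t$ and $\mathcal{G}_{t-1}$, use the independence assumption so that $R_t$ stays uniform on $\{1,\dots,t\}$, cancel the resulting conditional mean of $S_t(R_t)$ against the denominator (with the same pointwise treatment of the $0/0$ branch), and then identify the $e$-value under $S_t \propto g_t$ with the conditional likelihood ratio $t\, g_t(R_t \mid \mathcal{G}_{t-1})$ against the uniform null. The only difference is cosmetic: you re-derive the log-optimality of the likelihood ratio from first principles via Jensen's inequality, whereas the paper simply invokes this known fact from the cited literature once the likelihood-ratio form is established.
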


    \subsection{The post-exchangeable alternative: reduced sequential ranks}\label{sec:reduced_ranks}
        Recall that the post-exchangeable alternative posits that the entire sequence of observations $Y_1, Y_2, \dots, Y_{T_0}, \dots, Y_T$ is not exchangeable, but the post-treatment observations $Y_{T_0+1}, Y_{T_0+2}, \dots$ are.
        For this alternative, we are unconcerned with the ranks of the post-treatment observations $Y_{T_0 +1}, \dots Y_{T}$ amongst themselves: we only care about how the post-treatment observations rank among the pre-treatment observations $Y_1, \dots Y_{T_0}$.
        As a consequence, we can reduce the entire problem to a question concerning what we call the \emph{reduced sequential ranks}.
        \begin{definition}
            The reduced sequential rank $\widetilde{R}_t$ is the rank of $Y_t$, $t > T_0$, among the pre-treatment observations $Y_1, \dots, Y_{T_0}$.
        \end{definition} 
        We call these ranks \emph{reduced}, because they are less informative than the sequential ranks and take value on the reduced space $\{1, \dots, T_0 + 1\} \subseteq \{1, \dots, t\}$, $t > T_0$.
        As we are only interested in these reduced sequential ranks here, we may freely coarsen the filtration to the one induced by these reduced ranks $\{\widetilde{\mathcal{G}}_t\}_{t \geq 0}$.
        
        Under exchangeability, the reduced sequential ranks are generally \emph{not} independent nor uniform like the conventional sequential ranks in \eqref{eq:hyp_rank}. 
        Instead, they follow a conditional categorical distribution over $T_0 + 1$ options, 
        \begin{align} \label{eq:redh0}
            H_0^{\text{red-rank}} : \widetilde{R}_t|\mathcal{G}_{t-1} &\ {\sim}\ \text{Cat} \{q_t^{1}, \dots, q_t^{T_0+1}\},\quad \text{for all } t>T_0.
        \end{align}
        where the probability $q_t^{i}$ that the $t$'th reduced sequential rank $\widetilde{R}_t$ equals $i \in \{1, \dots, T_0 + 1\}$ is proportional to one plus the number of previous sequential ranks that equal $i$:
         \begin{align}
            q_t^i &= \frac{1}{t}\left(1+\sum_{j=T_0+1}^{t-1}I\left[\widetilde{R}_j=i\right]\right),\quad \text{for all } t>T_0.
        \end{align}
        This distribution is uniform on $\{1, \dots, T_0 + 1\}$ at time $t = T_0 + 1$, but generally different at later times, depending on where the previous post-treatment observations rank among the pre-treatment observations.
        As this may be counterintuitive, we provide intuition for this null distribution in Section \ref{sec:vis}.

        For the reduced sequential ranks, the post-exchangeable alternative hypothesis is again composite, and simply states that the reduced sequential ranks do not follow this distribution
        \begin{align}
            H_1^{\text{post-exch.}} : \widetilde{R}_t &\ {\not\sim}\ \text{Cat} \{q_t^{1}, \dots, q_t^{T_0+1}\}.
        \end{align}

         In Theorem \ref{th:red_rank_test}, we present a sequential $e$-value \eqref{eq:evalred} for the reduced sequential ranks that is similar to the one for the sequential ranks from Section \ref{sec:seq_ranks}, but now rescaled using the previously observed reduced ranks. 
         The proof of this theorem is similar to that of Theorem \ref{th:rank_test} and is presented in Appendix \ref{sec:prredranks}.
        \begin{theorem} \label{th:red_rank_test}
           Let $\{\widetilde{S}_t\}_{t\geq T_0}$, $\widetilde{S}_t:\{1,...,T_0+1\}\rightarrow[0,\infty)$, denote a sequence of non-negative functions such that $\widetilde{S}_t\perp \widetilde{R}_t|{\mathcal{G}}_{t-1}$.
           Then, for all $t>T_0$,
            \begin{equation}\label{eq:evalred}
                \tilde{e}_t = \frac{\widetilde{S}_t(\widetilde{R}_t)/q_t^{\widetilde{R}_t}}{\sum_{i=1}^{T_0+1} \widetilde{S}_t(i)}
            \end{equation}
           is a sequential $e$-value for $H_0$ with respect to both filtrations $\widetilde{\mathcal{G}}$ and ${\mathcal{G}}$.
           Here, we use the convention $0/0 = 1$.
           
           Moreover, for a simple alternative $H_1^\text{post-exch.}$ with $\tilde{g}_t|\widetilde{\mathcal{G}}_{t-1}$ as the conditional density of $\widetilde{R}_t|\widetilde{\mathcal{G}}_{t-1}$ with respect to the counting measure on $\{1,\dots,T_0+1\}$, the log-optimal test martingale is obtained by choosing a $\mathcal{G}_{t-1}$-measurable test statistic $\widetilde{S}_t\propto \tilde{g}_t|\widetilde{\mathcal{G}}_{t-1}$.
        \end{theorem}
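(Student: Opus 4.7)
The plan is to closely parallel the argument for Theorem~\ref{th:rank_test}, substituting the uniform null of the (full) sequential rank by the P\'olya-urn-like categorical null in~\eqref{eq:redh0}. First I would verify the $e$-value property with respect to the finer filtration $\mathcal{G}$. The weights $q_t^i$ depend only on the previously observed reduced ranks, so they are $\mathcal{G}_{t-1}$-measurable, and by assumption $\widetilde{S}_t\perp \widetilde{R}_t\mid \mathcal{G}_{t-1}$. I would therefore condition on $\widetilde{S}_t$ in addition to $\mathcal{G}_{t-1}$, invoke~\eqref{eq:redh0} that $\widetilde{R}_t\mid\mathcal{G}_{t-1}\sim\text{Cat}(q_t^1,\dots,q_t^{T_0+1})$ under $H_0$, and compute
\begin{equation*}
\mathbb{E}\bigl[\tilde{e}_t \,\big|\, \mathcal{G}_{t-1},\widetilde{S}_t\bigr]
=\sum_{i=1}^{T_0+1} q_t^i\cdot\frac{\widetilde{S}_t(i)/q_t^i}{\sum_{j=1}^{T_0+1}\widetilde{S}_t(j)}
=\frac{\sum_{i=1}^{T_0+1}\widetilde{S}_t(i)}{\sum_{j=1}^{T_0+1}\widetilde{S}_t(j)}=1,
\end{equation*}
after which averaging over $\widetilde{S}_t\mid\mathcal{G}_{t-1}$ yields $\mathbb{E}[\tilde{e}_t\mid\mathcal{G}_{t-1}]=1$. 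The convention $0/0=1$ absorbs the measure-zero event that $\widetilde{S}_t$ vanishes identically, and the terms in which $q_t^i=0$ contribute zero to the categorical expectation and hence can be omitted.

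Next, because the reduced sequential ranks are a deterministic function of the full sequential ranks, $\widetilde{\mathcal{G}}_{t-1}\subseteq\mathcal{G}_{t-1}$. Applying the tower property immediately gives
\begin{equation*}
\mathbb{E}\bigl[\tilde{e}_t\,\big|\,\widetilde{\mathcal{G}}_{t-1}\bigr]
=\mathbb{E}\!\left[\mathbb{E}[\tilde{e}_t\mid\mathcal{G}_{t-1}]\,\big|\,\widetilde{\mathcal{G}}_{t-1}\right]=1,
\end{equation*}
completing the $e$-value claim for both filtrations. For log-optimality, I would plug $\widetilde{S}_t=c\,\tilde{g}_t(\,\cdot\mid\widetilde{\mathcal{G}}_{t-1})$ into~\eqref{eq:evalred}; the proportionality constant $c$ cancels and the expression reduces to $\tilde{e}_t=\tilde{g}_t(\widetilde{R}_t\mid\widetilde{\mathcal{G}}_{t-1})/q_t^{\widetilde{R}_t}$, which is precisely the conditional likelihood ratio of the alternative against the null density in~\eqref{eq:redh0}. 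By the standard log-optimality characterization recalled in~\eqref{eq:logopt}, this is the growth-rate optimal sequential $e$-value, and hence the running product is the log-optimal test martingale.

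The main obstacle is mostly bookkeeping: one must track that $q_t^i$ is $\widetilde{\mathcal{G}}_{t-1}$-measurable, that the null law invoked in~\eqref{eq:redh0} legitimately applies under the finer conditioning on $\mathcal{G}_{t-1}$ (so that the tower step is valid), and that the conditional independence of $\widetilde{S}_t$ and $\widetilde{R}_t$ is what allows the auxiliary conditioning on $\widetilde{S}_t$. Once these measurability details are pinned down, the computation is a one-line rearrangement mirroring the proof of Theorem~\ref{th:rank_test}, and the log-optimality statement follows directly from the likelihood-ratio characterization in~\eqref{eq:logopt}.
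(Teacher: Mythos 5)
Your proposal is correct and follows essentially the same route as the paper: condition on $\widetilde{S}_t$, use the categorical null law \eqref{eq:redh0} so that the $q_t^{\widetilde{R}_t}$ factors cancel in the expectation, and obtain log-optimality by observing that $\widetilde{S}_t\propto\tilde{g}_t$ reduces $\tilde{e}_t$ to the conditional likelihood ratio of \eqref{eq:logopt}. The only cosmetic difference is ordering --- you compute under the finer filtration $\mathcal{G}$ and tower down to $\widetilde{\mathcal{G}}$, whereas the paper first transfers the conditional independence to $\widetilde{\mathcal{G}}_{t-1}$ and computes there directly; both are valid and equivalent.
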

        
    \subsection{Visualization: sequential ranks and reduced sequential ranks} \label{sec:vis}
        In Figure \ref{fig:redranks}, we illustrate the difference between the null distributions of the sequential ranks and reduced sequential ranks.
        It pictures the null distribution as a histogram of both the rank and reduced rank of the $t = 8$th observation $Y_8$, having observed the same $T_0 = 4$ pre-treatment observations and 3 post-treatment observations.
        In Panel \ref{fig:a}, we see that the exchangeability means that $Y_8$ is equally likely to fall in any slot between pre- or post-treatment observations, and so its rank $R_8$ is uniform on $\{1, \dots, 8\}$.
        In Panel \ref{fig:b}, we picture the null distribution of the reduced sequential rank, which discards the position of $Y_8$ among the post-treatment observations and only conveys its position among the pre-treatment observations.
        
        Comparing the two panels, we see that the mass of the three slots between the first two pre-treatment observations in Panel \ref{fig:a} is 3 out of 8, which exactly corresponds to the mass at slot 2 between the first two pre-treatment observations in Panel \ref{fig:b}.
        More generally, the mass of the null distribution of the reduced rank at a slot between two pre-treatment observations equals one plus the number of post-treatment observations that previously fell into the same slot.
        
        \begin{figure}[h!]
            \centering
            \vspace{0cm}
            \begin{tikzpicture}[scale=0.5]    
                \begin{scope}
                    \foreach \x in {2, 3, ..., 8} {
                        \ifnum\x=3 \fill[black] (\x*2, -1) circle (4.5pt);
                        \else \ifnum\x=4 \fill[black] (\x*2, -1) circle (4.5pt);
                        \else \ifnum\x=6 \fill[black] (\x*2, -1) circle (4.5pt);
                        \else \draw[black] (\x*2, -1) circle (4.5pt);
                        \fi\fi\fi
                    }
                    \foreach \x in {1, 2, ..., 8} {
                        \draw[black] (\x*2 + 0.8, 0.5) rectangle (\x*2 + 1.2, 1.2);
                    }
                    \fill[black] (2*5, -4.1) circle (4.5pt);
                    \node[anchor=west] at (2*5-0.15, -4.45) {\small $Y_8$};
                    \draw[black] (2, 0.5) -- (18.2, 0.5);
            
                    \foreach \x in {1, 2, ..., 8} {
                        \node[scale=0.62] at (\x*2+1, 0) {\x};
                    }
            
                    \foreach \x in {2, 3, 4} {
                        \draw[->, black, bend left] (2*5, -4) to (\x*2 - 1, -1);
                    }
                    \foreach \x in {5, 6} {
                        \draw[->, black] (2*5, -4) to (\x*2 - 1, -1);
                    }
                    \foreach \x in {7, 8, 9} {
                        \draw[->, black, bend right] (2*5, -4) to (\x*2 - 1, -1);
                    }
                \end{scope}
        
                \begin{scope}[xshift=18cm]
                    \foreach \x in {2, 3, ..., 5} {
                        \draw[black] (\x*2, -1) circle (4.5pt);
                    }
                    \fill[black] (2*2+1, -0.75) circle (4.5pt);
                    \fill[black] (2*2+1, -0.4) circle (4.5pt);
                    \fill[black] (2*3+1, -0.75) circle (4.5pt);
                    \fill[black] (2*3+1, -4.1) circle (4.5pt);
                    \node[anchor=west] at (2*3+1-0.1, -4.45) {$Y_8$};
            
                    \foreach \x in {1, 2, ..., 5} {
                        \draw[black] (\x*2 + 0.8, 0.5) rectangle (\x*2 + 1.2, 1.2);
                    }
                    \draw[black] (2*2 + 0.8, 1.2) rectangle (2*2 + 1.2, 1.9);
                    \draw[black] (2*2 + 0.8, 1.9) rectangle (2*2 + 1.2, 2.6);
                    \draw[black] (3*2 + 0.8, 1.2) rectangle (3*2 + 1.2, 1.9);
            
                    \draw[black] (2, 0.5) -- (12.2, 0.5);
            
                    \foreach \x in {1, 2, ..., 5} {
                        \node[scale=0.62] at (\x*2+1, 0) {\x};
                    }
            
                    \foreach \x in {2, 3} {
                        \draw[->, black, bend left] (2*3+1, -4) to (\x*2 - 1, -1);
                    }
                    \foreach \x in {4} {
                        \draw[->, black] (2*3+1, -4) to (\x*2 - 1, -1);
                    }
                    \foreach \x in {5, 6} {
                        \draw[->, black, bend right] (2*3+1, -4) to (\x*2 - 1, -1);
                    }
                \end{scope}
            \end{tikzpicture}
            \begin{subfigure}{0.59\textwidth}
                \centering
                \caption{Null distribution of $R_t$.} \label{fig:a}
            \end{subfigure}
            \hfill
            \begin{subfigure}{0.39\textwidth}
                \centering
                \caption{Null distribution of $\widetilde{R}_t$.} \label{fig:b}
            \end{subfigure}
            \caption{Illustration of the null distribution of the sequential rank $R_t$ (a) and the reduced sequential rank $\widetilde{R}_t$ (b). 
            The circles denote the previous observations: an open circle represents a pre-treatment observation, and a solid circle a post-treatment observation. 
            The bars represent the probability mass of the (reduced) sequential ranks.} \label{fig:redranks}
        \end{figure}
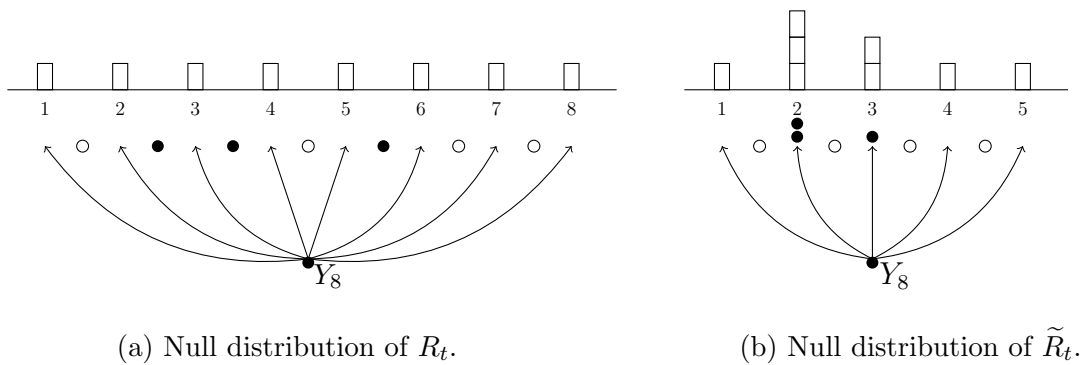

    \section{Choosing a test statistic}\label{sec:choosing}
        We now consider the construction of good test statistics $S_t$ for the alternatives \( H_1^\text{generic} \) and \( H_1^\text{post-exch.} \). 
        Recall from Theorems \ref{th:rank_test} and \ref{th:red_rank_test} that the log-optimal test statistic is proportional to the conditional density of the true data generating distribution under the alternative.
        Unfortunately, this true data generating distribution is unknown as \( H_1^\text{generic} \) and \( H_1^\text{post-exch.} \) are composite, so that we cannot choose the log-optimal test statistic as such.
        To address this problem, we consider two approaches: sequentially learning the distribution under the alternative through a plug-in method, and transforming a (possibly composite) alternative on the outcomes to a single implied distribution on the ranks.

        \subsection{Plug-in method}\label{sec:plugin}
            The idea behind the plug-in method is that if the distribution of the post-treatment ranks is unknown, then perhaps we can sequentially learn it due to our sequential setup.
            The idea is to choose the test statistic $S_t$ to be proportional to (a smoothed version of) the empirical density of the post-treatment ranks that we have observed so-far.
            If the distribution of the post-treatment ranks is stable over time, then after sufficiently many post-treatment draws we should start to learn this distribution.

            The plug-in method we describe here is inspired by the work of \cite{fedorova2012plug}, who use it for testing exchangeability abstractly.
            We modify their approach to accommodate the pre- and post-treatment structure of our problem.
            Moreover, we enhance the plug-in approach for testing against the narrower alternative $H_{1}^\text{post-exch.}$.

            \paragraph*{Rank based test}
                For $H_1^\text{generic}$, we aim to approximate the conditional distribution of the sequential ranks $R_t$, based on $\mathcal{G}_{t-1}$. 
                As the support of the ranks $\{1,\dots,t\}$ is discrete and increases with $t$, we rescale and smooth the sequential ranks before estimating the kernel density.\footnote{Adding independent noise for kernel estimation of discrete data is also called jittering, which is shown to work well in finite samples \citep{nagler2018asymptotic}.} 
                Using independent noise $u_t\sim U(0,1)$ for smoothing, we define the smoothed rescaled ranks as
                \begin{equation} \label{eq:calib}
                    V_t(R_t) = \frac{R_t-u_t}{t},
                \end{equation}
                where $V_t$ has support on $[0,1]$ and is independently uniformly distributed under $H_0^\text{rank}$.
                
                We estimate the distribution of $V_t$ with a continuous kernel density. 
                Similar to \cite{fedorova2012plug}, we apply a standard kernel trick to improve the accuracy of kernel estimates near the boundaries of \([0,1]\).
                Specifically, we estimate the kernel density using an extended sample, defined as
                $\mathcal{V}_{t-1}=\cup_{s=T_0+1}^{t-1}\{-V_s,V_s,2-V_s\}$.
                Let $\widehat{f}_t$ denote the resulting kernel density estimate at time $t$, that is suitably normalized to integrate to 1 on the domain $[0, 1]$ of $V_t$.
                This then yields the plug-in test statistic
                \begin{align}
                    S_t^\text{plug-in}(R_t)
                        = 
                        \widehat{f}_t(V_t(R_t)), \quad \text{for $t>T_0$}, \label{eq:vovkrank}
                \end{align}
                which can be interpreted as the estimated kernel density of the sequential ranks, based on the information in $\mathcal{G}_{t-1}$.
                Since $\widehat{f}_t$ is estimated based on only $R_s,u_s$ for $s<t$, we have that $S_t\perp R_t|\mathcal{G}_{t-1}$. 
                Also, the test statistic is non-negative.
                Hence $S_t$ adheres to all conditions of Theorem \ref{th:rank_test} and can be used for anytime-valid inference.\footnote{In this particular case, the denominator in (\ref{eq:eval}) can be set equal to 1 to decrease variation, while still maintaining validity.}
    
                The log-optimality results imply that if the kernel is a good estimator of the true distribution, then we expect our test martingale to grow quickly.
                \cite{fedorova2012plug} show that this holds asymptotically if the underlying distribution of sequential ranks is \emph{stable}, i.e, the empirical distribution of $\{V_t\}_{t\geq0}$ converges to some distribution with a continuous density.
                Moreover, they show that for a consistent kernel estimator, the asymptotic log growth of the martingale corresponding to $S_t^{\text{plug-in}}$ dominates every martingale based on a test statistic $S_t$ that is constant over time.
            \paragraph*{Reduced rank based test}
                We now propose a specialized improvement of the plug-in method that is more powerful for the post-exchangeable alternative.
                This improvement relies on the fact that the reduced sequential ranks disregard any ordering of post-treatment observations and only retains their position relative to the pre-treatment observations.
                As these post-treatment observations are exchangeable under $H_1^\text{post-exch.}$, the fact that the reduced ranks disregard this information can improve the estimation.
                
                We consider the same kernel density estimator $\widehat{f}_t$, and its CDF $\widehat{F}_t$ as the ones we used for the rank based test.
                We estimate the discrete distribution of $\widetilde{R}_{t}$ by integrating $\widehat{f}_t$ over the regions corresponding to every realization of $\widetilde{R}_t$. 
                Some algebraic manipulation shows that this can be written as the difference of the kernel CDF evaluated at points $\sum_{i=1}^{\widetilde{R}_t-1}q_t^{i}$ and $\sum_{i=1}^{\widetilde{R}_t}q_t^{i}$,
                \begin{equation}\label{eq:vovkrr}
                    \widetilde{S}_t^\text{plug-in}(\widetilde{R}_t)
                        = \widehat{F}\left(\sum_{i=1}^{\widetilde{R}_t}q_t^{i}\right)-\widehat{F}\left(\sum_{i=1}^{\widetilde{R}_t-1}q_t^{i}\right), \quad \text{for }t>T_0.
                \end{equation}
                
                The conditions for anytime-validity as stated in Theorem \ref{th:red_rank_test} hold, as $\widetilde{S}_t^\text{plug-in}$ is non-negative, and independent of $\widetilde{R}_t$ given $\mathcal{G}_{t-1}$.
                The following theorem shows that, under $H_1^{\text{post-exch.}}$, an anytime-valid test based on this $\widetilde{S}^\text{plug-in}$ statistic weakly dominates the sequential $e$-value based on $S^\text{plug-in}$ in terms of the log-target (proof in Appendix \ref{app:prvovk}). 
                Moreover, in simulations we see that it performs substantially better in practice, if the alternative is indeed post-exchangeable.
                
                \begin{theorem} \label{th:vovkdom}
                     Let $e_t$ and $\tilde{e}_t$ denote the sequential $e$-values of the plug-in method based on the sequential ranks (\ref{eq:vovkrank}) and the reduced ranks (\ref{eq:vovkrr}) respectively, based on the same kernel. 
                     Then, for any $\mathbb{Q}\in$ $H_1^{\text{post-exch.}}$, and for all $t>T_0$, 
                    \begin{align}\label{eq:vovkineq}
                        \mathbb{E}^\mathbb{Q}\left[\log e_t|\mathcal{G}_{t-1}\right]\leq \mathbb{E}^\mathbb{Q} \left[\log\tilde{e}_t|{\mathcal{G}}_{t-1}\right].
                    \end{align}
                \end{theorem}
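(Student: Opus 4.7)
The plan is to identify $\tilde{e}_t$ with the conditional expectation of $e_t$ given $\widetilde{R}_t$ and $\mathcal{G}_{t-1}$ under any $\mathbb{Q} \in H_1^{\text{post-exch.}}$, and then apply Jensen's inequality for the concave logarithm. I will work with the denominator-free forms $e_t = \widehat{f}_t(V_t(R_t))$ (justified in the footnote to equation (\ref{eq:vovkrank})) and $\tilde{e}_t = \widetilde{S}_t^{\text{plug-in}}(\widetilde{R}_t)/q_t^{\widetilde{R}_t}$, the latter simplification holding because $\sum_{i=1}^{T_0+1}q_t^i = 1$ and $\widehat{f}_t$ integrates to $1$ on $[0,1]$, so that $\sum_{i=1}^{T_0+1} \widetilde{S}_t^{\text{plug-in}}(i) = \widehat{F}(1) - \widehat{F}(0) = 1$. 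Writing $a_r = \sum_{i=1}^r q_t^i$, this shows that $\tilde{e}_t$ equals the average of $\widehat{f}_t$ over the interval $[a_{\widetilde{R}_t - 1}, a_{\widetilde{R}_t}]$ of length $q_t^{\widetilde{R}_t}$.

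The core step is to show that under any $\mathbb{Q} \in H_1^{\text{post-exch.}}$, conditional on $\widetilde{R}_t = r$ and $\mathcal{G}_{t-1}$, the calibrated rank $V_t = (R_t - u_t)/t$ is uniformly distributed on $(a_{r-1}, a_r]$. The possible values of $R_t$ under this conditioning form a block of $n_r + 1 = t q_t^r$ consecutive integers $\{ta_{r-1}+1, \dots, ta_r\}$, corresponding to the possible sorted positions of $Y_t$ among the $n_r$ pre-existing post-treatment observations in slot $r$. To establish uniformity on this block, I would invoke post-exchangeability: the sub-sequence of post-treatment observations falling in slot $r$ (the pre-existing $n_r$ together with $Y_t$) is itself exchangeable, so conditional on its multiset all $(n_r + 1)!$ orderings are equally likely. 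Because $\mathcal{G}_{t-1}$ only fixes the relative ordering of the pre-existing $n_r$ observations, exactly $n_r + 1$ of these permutations are consistent with the conditioning, one for each sorted position at which $Y_t$ can be inserted. This yields conditional uniformity of $R_t$; since the uniform distribution on the block does not depend on the multiset, marginalizing over the multiset preserves it. Convolving with the independent jitter $u_t \sim U(0,1)$ then gives uniformity of $V_t$ on $(a_{r-1}, a_r]$.

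Given this uniformity, and using that $\widehat{f}_t$ is measurable with respect to the conditioning sigma-algebra (enlarged to include the past jitter variables $\{u_s\}_{s<t}$), we obtain $\mathbb{E}^{\mathbb{Q}}[e_t \mid \widetilde{R}_t, \mathcal{G}_{t-1}] = \tfrac{1}{q_t^{\widetilde{R}_t}}\int_{a_{\widetilde{R}_t - 1}}^{a_{\widetilde{R}_t}} \widehat{f}_t(v)\,dv = \tilde{e}_t$. Jensen's inequality applied to the concave logarithm then yields $\mathbb{E}^{\mathbb{Q}}[\log e_t \mid \widetilde{R}_t, \mathcal{G}_{t-1}] \leq \log \tilde{e}_t$, and the tower property, integrating out $\widetilde{R}_t$ given $\mathcal{G}_{t-1}$, delivers the claimed inequality (\ref{eq:vovkineq}).

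The main obstacle is the uniformity argument in the second paragraph. The subtle point is that $\mathcal{G}_{t-1}$ encodes the relative ranks of all pre-existing observations but not their values, so one must carefully justify that this partial information respects the symmetry between $Y_t$ and the pre-existing post-treatment observations in slot $r$. A secondary technicality is the filtration enlargement needed to treat $\widehat{f}_t$ as measurable given the jitter variables used for smoothing, which must be done in a way compatible with the assumption $\widetilde{S}_t \perp \widetilde{R}_t \mid \mathcal{G}_{t-1}$ from Theorem~\ref{th:red_rank_test}.
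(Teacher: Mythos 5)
Your proposal is correct and follows essentially the same route as the paper's proof: drop the normalization constants (the paper's Step 1, which you justify via the telescoping sum and the footnote), show that under $H_1^{\text{post-exch.}}$ the calibrated rank $V_t$ conditional on $\widetilde{R}_t$ and $\mathcal{G}_{t-1}$ is uniform on the interval of length $q_t^{\widetilde{R}_t}$ so that $\tilde{e}_t=\mathbb{E}^{\mathbb{Q}}[e_t\mid \widetilde{R}_t,\mathcal{G}_{t-1}]$ (the paper's Step 2, which it phrases via the auxiliary insertion position $N_t$), and finish with Jensen's inequality and the tower property (Step 3). The two "obstacles" you flag are handled in the paper exactly as you anticipate.
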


    \subsection{Alternative on the outcome space} \label{sec:gauss}
        Even if we would like to specify a simple distribution on the (reduced) sequential ranks, it can be hard to express such a distribution. 
        Indeed, we may only have an idea about the distribution of the outcomes $Y_1,\dots,Y_T$. 
        Luckily, we can simply take such a distribution on the outcome space and recover the distribution it induces on the (reduced) ranks. 
        Moreover, many distributions on the outcome space may have the same induced distribution on the ranks.
    
        In this section, we illustrate this for a Gaussian alternative. 
        We find that Gaussian distributions with the same effect size / signal-to-noise ratio all have the same implied distribution on the sequential ranks. 
        As a consequence we do not need to specify both the mean and the variance: we only need to choose an effect size. 
    
        While it may feel as a burden that one has to specify such an effect size, note that this is usually also required in the fixed-$T$ setting: either explicitly to plan the number of to-be-gathered observations, or implicitly by gauging whether the number of available observations would be sufficient to find an effect.
    
        \paragraph*{Rank based test}
            Let us consider a simple setting for $\mathbb{Q}$, where the outcomes (pre- and post-treatment) are independently\footnote{The more general exchangeable setting with a correlation coefficient of $\rho$ can also be easily accommodated by dividing $Y_t$ by $\sqrt{(1-\rho^2)}$.} normally distributed, $Y_t \sim \mathcal{N}(\mu_t, \sigma^2)$, with some mean $\mu_t$ and variance $\sigma^2$.
            Suppose the impact of the treatment is a change in the mean of this normal distribution. 
            In particular, $\mu_t = \mu_{\text{pre}}$ for $t \leq T_0$ and $\mu_t=\mu_\text{t,post}$,  otherwise. 

            By Theorem \ref{th:rank_test}, the log-optimal sequential $e$-value is obtained for $S_t\propto f_t$ where
            \begin{align*}
                f_t 
                    = \mathbb{Q}(R_t=r_t|\mathcal{G}_{t-1})
                    =\frac{\mathbb{Q}(R_t=r_t,R_{t-1}=r_{t-1},\dots,R_{T_0+1}=r_{T_0+1})}{\mathbb{Q}(R_{t-1}=r_{t-1},\dots,R_{T_0+1}=r_{T_0+1})},
            \end{align*}
            for all $t>T_0$.
            
            These densities do not permit a simple expression, but they can be easily simulated through Monte Carlo draws of the Gaussian distribution $Y_t\sim \mathcal{N}(\frac{\mu_t-\mu_{\text{pre}}}{\sigma},1)$, where we use that the distribution of the ranks is invariant to univariate shifts in the mean or variance.

            A practical disadvantage is that we must specify a compete path for $\mu_t$, $t > T_0$ under the alternative, which may be difficult.
            The post-exchangeable alternative is exactly the setting where this is not necessary, and we simply fix $\mu_t$ over time.
        
        \paragraph*{Reduced rank based test}
            Under post-exchangeability, the post-treatment observations are exchangeable and so identically distributed. 
            As a consequence, the mean is the same for the entire post-treatment period: $\mu_{t,post}=\mu_{\text{post}}$, for all $t > T_0$.

            Choosing the test statistic $\widetilde{S}_t=\widetilde{f}_t$, where $\tilde{f}_t$ is the distribution of $\widetilde{R}_t|\mathcal{G}_{t-1}$, ensures log-optimality through Theorem \ref{th:red_rank_test}. 
            An efficient way to approximate this density for the Gaussian setting, based on the expected effect size $\frac{\mu_\text{post}-\mu_\text{pre}}{\sigma}$, is provided in Appendix \ref{app:effgauss}.
            The calculation of this $\tilde{f}_t$ is easier than that of $f_t$ as each $\tilde{f}_t$ only requires draws from a $T_0$-dimensional standard normal distribution, instead of a $t$-dimensional distribution.

            In practice, this effect size $ \frac{\mu_\text{post} - \mu_\text{pre}}{\sigma}$ is unknown, but it is possible to average over multiple candidates using the adaptive mixture method described in Appendix~\ref{app:minimax}.
            An appealing property of this weighted averaging is that its cumulative log-growth converges to that of the best-performing candidate in hindsight.
            
            This model averaging approach can be viewed as a computationally tractable, discrete approximation of the mixture method for learning alternatives in the anytime-valid inference literature. 
            While we focus on a finite set of candidates with uniform weighting for simplicity and efficiency, one could also incorporate informative priors or adopt continuous mixture models for greater flexibility.

\section{Illustration: Exchangeability for interactive fixed effects model}\label{sec:fixed_effects}
    In this section, we illustrate how testing for a treatment effect in the interactive fixed effects (IFE) model by \cite{bai2009panel} can be recast to testing for exchangeability. 
    We first present the IFE, and then state conditions under which difference-in-differences (DiD) and the synthetic control method (SCM) generate exchangeable treatment estimates. 
    The exchangeability conditions are identical to the conditions imposed by \cite{abadie2021synthetic} and \cite{chernozhukov2021exact}. 
    If these are not strictly met—for instance, due to serial correlation in treatment estimates—we adopt a block structure, as recommended by \cite{chernozhukov2018exact}.

    \subsection{Interactive fixed effects model}
        The IFE model is a common way to model treatment effects for the synthetic control estimator as it can capture violations of the parallel trend assumption. 
        Consider the following panel for the outcomes of $N+1$ units, $i = 1, \dots, N+1$, of which only the first unit $i = 1$ is treated after time $T_0$,
        \begin{equation} \label{eq:intfixed}
            Y_{it} = \bm{\mu}'_i\bm{\lambda}_t+ \bm{\theta}_t'\bm{Z}_i+ I[i=1 \text{ and }t>T_0]\tau_t+\varepsilon_{it}.
        \end{equation}
        Here, $I[\cdot]$ is the indicator function, $\bm{\mu}_i$ and $\bm{\lambda}_t$ denote the $r$-dimensional factor loadings and time-varying factors, $\tau_t$ is the treatment effect at time $t$, and $\varepsilon_{it}$ is mean zero noise. 
        The $\ell$-dimensional covariates of unit $i$ are denoted with $\bm{Z}_i$ and its corresponding time-varying parameter is denoted with $\bm{\theta}_t$. 
        This specification allows units to have different responses to time trends and thereby violate the parallel trends assumption.

        We follow \cite{abadie2021synthetic} by splitting the pre-treatment observations into a set of \textit{blank} periods $\mathcal{B}=\{1,...,T_{\mathcal{B}}\}$ and a set of \textit{training} periods used for estimation $\mathcal{E}=\{1,...,T_0\}\setminus\mathcal{B}$.
        The training periods are used for estimating the pre-treatment means (DiD) and weights (SCM), and the blank periods are compared to the post-treatment observations. 
        The fact that we do not use these blank periods for estimating weights, helps ensuring exchangeability of the blank- and post-treatment periods under the null.

    \subsection{Inference with counterfactual and synthetic control methods}
        We are interested in conducting inference on the treatment effect $\tau_t$.
        Specifically, we want to test that there is no treatment effect,
        \begin{equation}
            H_0^\text{treatment}: \tau_t=0,\quad \text{ for all } t>T_0 \label{eq:h0}.
        \end{equation}
        In the following two sections, we show under which conditions DiD and SCM generate treatment estimates $\widehat{\tau}_t$, such that $H_0^\text{treatment}$ can be reduced to the hypothesis studied in Sections \ref{sec:seqtest} and \ref{sec:choosing}:
        \begin{align*}
            H_0: \widehat{\tau}_t \text{ is exchangeable over }t \in \mathcal{B} \cup \{T_0+1,...,T\}.
        \end{align*}

        \subsubsection{Difference-in-differences}
            Suppose all units respond equally to changes in the time-varying factors $\bm{\lambda}_t$. In the DiD literature, this is also known as the \emph{parallel trends assumption}. 
            This corresponds to the IFE model with no covariates ($\ell=0$) and, without loss of generality, 
            \begin{equation} \label{eq:partrends}
                \bm{\mu}_i = [\mu_{1i},\mu_2,...,\mu_r]' \quad 
                \bm{\lambda}_t = [1,\lambda_{2t},...,\lambda_{rt}]',\quad  \text{for all $i=1,\dots,N,$ $t=1,\dots,T.$} 
            \end{equation}
            The goal of DiD is to estimate treatment effects for the blank pre-treatment periods $\mathcal{B}$ and the post-treatment observations $\{T_0+1,...,T\}$.
            The DiD estimator does this by filtering out the unit- and time-fixed effects by double differencing,
            \begin{align}
                \widehat{\tau}_t
                    = (Y_{1t}-\bar{Y}_t) - (\bar{Y_1}-\bar{Y}),\quad\text{for all } t \in \mathcal{B} \cup \{T_0+1,...,T\},
            \end{align}
            where $\bar{Y}_t$ is the mean outcome of control units at time $t$, $\bar{Y}_1$ is the average outcome of unit 1 over training periods $\mathcal{E}$, and $\bar{Y}$ is the average outcome of all control units over training period $\mathcal{E}$.
            
            For the IFE model under the parallel trends assumption, the treatment estimator equals
            \begin{equation}\label{eq:diff}
                \widehat{\tau}_t=\tau_t + (\varepsilon_{1t} -\bar{\varepsilon}_t) - (\bar{\varepsilon}_1-\bar{\varepsilon}),
            \end{equation}
            where $\bar{\varepsilon}_t$, $\bar{\varepsilon}_1$, and $\bar{\varepsilon}$ are the respective error terms of $\bar{Y}_t$, $\bar{Y}_1$, and $\bar{Y}$.
            The following proposition states that exchangeability of $\bm{\varepsilon_t}$ implies exchangeability of this DiD estimator under $H_0$.
            
            \begin{proposition} \label{th:DIDth}
                Assume the outcomes $Y_{it}$ follow the IFE model in (\ref{eq:intfixed}) with the parallel trend assumption in (\ref{eq:partrends}). 
                Then, under $H_0^\text{treatment}$, the DiD estimator
                $\{\widehat{\tau}_t\}_{t\in \mathcal{B}\cup \{T_0+1,...,T\}}$ is exchangeable if $\bm{\varepsilon}_t=[\varepsilon_{1t},\dots,\varepsilon_{N+1,t}]'$ is exchangeable over $t=1,...,T$.
            \end{proposition}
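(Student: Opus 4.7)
The plan is to reduce the exchangeability of $\{\widehat{\tau}_t\}$ to the assumed exchangeability of the noise vectors $\{\bm{\varepsilon}_t\}$, by writing $\widehat{\tau}_t$ as a well-behaved function of these noise vectors that is compatible with permutations of $\mathcal{B} \cup \{T_0+1,\dots,T\}$.

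First I would substitute the IFE model with parallel trends (\ref{eq:partrends}) into the DiD estimator and verify that, under $H_0^{\text{treatment}}$ (so $\tau_t=0$), the fixed effects and loadings cancel by double differencing, yielding exactly the expression in (\ref{eq:diff}), namely
\begin{equation*}
    \widehat{\tau}_t = (\varepsilon_{1t} - \bar{\varepsilon}_t) - (\bar{\varepsilon}_1 - \bar{\varepsilon}).
\end{equation*}
This step is mostly bookkeeping: $\bm{\mu}_1'\bm{\lambda}_t$ cancels against $\bar{\bm{\mu}}'\bm{\lambda}_t$ because under (\ref{eq:partrends}) the components $\mu_2,\ldots,\mu_r$ are common across units and the first factor is constant in $t$.

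Next I would isolate the two pieces. Define $f(\bm{\varepsilon}_t) := \varepsilon_{1t} - \tfrac{1}{N}\sum_{i=2}^{N+1}\varepsilon_{it}$, so that the first piece is $f(\bm{\varepsilon}_t)$, a time-invariant function of $\bm{\varepsilon}_t$ alone. The second piece $C := \bar{\varepsilon}_1 - \bar{\varepsilon}$ is a function only of $\{\bm{\varepsilon}_s\}_{s \in \mathcal{E}}$, and crucially $\mathcal{E}$ is disjoint from the index set $\mathcal{A} := \mathcal{B} \cup \{T_0+1,\dots,T\}$ over which we want exchangeability. Hence for any permutation $\pi$ of $\mathcal{A}$,
\begin{equation*}
    \bigl(\widehat{\tau}_{\pi(t)}\bigr)_{t \in \mathcal{A}} = \bigl(f(\bm{\varepsilon}_{\pi(t)}) - C\bigr)_{t \in \mathcal{A}},
\end{equation*}
with the same $C$ appearing on both sides.

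To conclude, extend $\pi$ to a permutation $\tilde{\pi}$ of $\{1,\dots,T\}$ that is the identity on $\mathcal{E}$. The hypothesis that $(\bm{\varepsilon}_1,\ldots,\bm{\varepsilon}_T)$ is exchangeable then gives
\begin{equation*}
    \bigl((\bm{\varepsilon}_s)_{s\in\mathcal{E}}, (\bm{\varepsilon}_t)_{t\in\mathcal{A}}\bigr) \stackrel{d}{=} \bigl((\bm{\varepsilon}_s)_{s\in\mathcal{E}}, (\bm{\varepsilon}_{\pi(t)})_{t\in\mathcal{A}}\bigr),
\end{equation*}
so applying the measurable map $((\bm{\varepsilon}_s)_{s\in\mathcal{E}},(\bm{\varepsilon}_t)_{t\in\mathcal{A}})\mapsto (f(\bm{\varepsilon}_t) - C)_{t\in\mathcal{A}}$ to both sides preserves equality in distribution, which is exactly exchangeability of $\{\widehat{\tau}_t\}_{t\in\mathcal{A}}$.

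The only potential obstacle is the joint-versus-marginal subtlety introduced by the shared random constant $C$: a careless argument that merely notes $(f(\bm{\varepsilon}_t))_{t\in\mathcal{A}}$ is exchangeable and then ``adds a constant'' risks ignoring the dependence between $C$ and the $\bm{\varepsilon}_t$'s. The resolution, as above, is that by restricting to permutations $\pi$ supported on $\mathcal{A}$ and using that $\mathcal{E}\cap\mathcal{A}=\emptyset$, the joint distribution of the full noise block is invariant under $\tilde{\pi}$, so $C$ and the permuted vector of $f(\bm{\varepsilon}_t)$'s are coupled in exactly the same way as $C$ and the original vector. This is why the split of the pre-treatment sample into training and blank periods by \cite{abadie2021synthetic} is essential for the argument.
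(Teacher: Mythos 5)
Your proof is correct, but it follows a genuinely different route from the paper, which offers no self-contained argument for this proposition: it simply declares it ``a special case of Proposition \ref{th:scm}'', which in turn restates Theorem 2 of \cite{abadie2021synthetic} and is not proved in the appendix. Your direct argument --- reducing $\widehat{\tau}_t$ to $f(\bm{\varepsilon}_t) - C$ with $f$ a fixed measurable function of $\bm{\varepsilon}_t$ alone and $C$ measurable with respect to $\{\bm{\varepsilon}_s\}_{s\in\mathcal{E}}$, then extending a permutation of $\mathcal{B}\cup\{T_0+1,\dots,T\}$ by the identity on $\mathcal{E}$ and pushing the joint distributional invariance through the map --- is sound, and your handling of the shared random constant $C$ correctly avoids the joint-versus-marginal pitfall. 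What your approach buys is twofold. First, it is self-contained and elementary, whereas the paper's reader must chase the reduction through Proposition \ref{th:scm} and the cited literature. Second, it exposes why Proposition \ref{th:DIDth} is not literally a special case of Proposition \ref{th:scm}: the latter assumes $\{(\bm{\lambda}_t,\bm{\theta}_t)\}$ is exchangeable and independent of the errors, while the DiD statement imposes no condition on $\bm{\lambda}_t$ at all. Your computation shows the factor terms cancel \emph{exactly} under parallel trends (the common loadings $\mu_2,\dots,\mu_r$ make $\bm{\mu}_i'\bm{\lambda}_t - \tfrac{1}{N}\sum_{j}\bm{\mu}_j'\bm{\lambda}_t$ free of $\bm{\lambda}_t$), so exchangeability of the factors is simply not needed --- a point the paper leaves implicit.
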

            \noindent This proposition is a special case of Proposition \ref{th:scm}, which is presented next in Section \ref{sec:scm_theory}.

        \subsubsection{Synthetic control method} \label{sec:scm_theory}
            In the more general case where the parallel trends assumption is violated, one can use the synthetic control method \citep{abadie2003economic}. 
            This technique attempts to filter out the interactive fixed effects and time-varying parameter $\theta_t$ by constructing a synthetic control unit that approximates the factor loadings and covariates of the treated unit.
            Specifically, it first estimates an $N$-dimensional vector of linear weights $\widehat{\bm{w}}=[\widehat{w}_2,...,\widehat{w}_{N+1}]$ based on the training periods $\mathcal{E}$, and then estimates the synthetic outcomes as $\sum_{i=2}^{N+1} \widehat{w}_iY_{it}$. The treatment effect estimator is then the difference between the outcome of the treated unit and the synthetic outcome,
            \begin{equation}
                \widehat{\tau}_t = Y_{1t}-\sum_{i=2}^{N+1} \widehat{w}_iY_{it}, \quad \text{for all } t \in \mathcal{B} \cup \{T_0+1,...,T\}. \label{eq:tauhat}
            \end{equation}
            
            The SCM weights $\widehat{\bm{w}}$ are usually constructed as follows. 
            Let $\bm{X}_1$ be a $K$-dimensionional vector that contains $K$ characteristics based on the pre-treatment data in $\mathcal{T}$, and collect the same characteristics for the $N$ control units in the $K\times N$ matrix $\bm{X}_0$. 
            For a given non-negative diagonal matrix $\bm{V}$ that determines the importance of these characteristics\footnote{See \cite{abadie2010synthetic} for a data-driven approach to determine $\bm{V}$.}, $\widehat{\bm{w}}$ minimizes the weighted Euclidean distance between the pre-treatment characteristics of the treated and the synthetic control unit,
            \begin{align*}
                \widehat{\bm{w}} = \text{argmin}_{\bm{w}\in \Delta_N} (\bm{X}_1-\bm{X}_0\bm{w})'\bm{V}(\bm{X}_1-\bm{X}_0\bm{w}),
            \end{align*}
            where $\Delta_N$ denotes the $N$-dimensional simplex. 

            To show how a properly chosen $\widehat{\bm{w}}$ can lead to an (approximately) exchangeable treatment estimator, we decompose the treatment effect estimator presented in \eqref{eq:tauhat},

            \begin{align*}
                \widehat{\tau_t} = \tau_t + (\underbrace{\bm{\mu}_1-\sum_{i=2}^{N+1}\widehat{w}_i\bm{\mu}_i}_{\approx 0})\bm{\lambda}_t +(\underbrace{\bm{Z}_1-\sum_{i=2}^{N+1}\widehat{w}_i\bm{Z}_i}_{\approx 0})\bm{\theta}_t+\underbrace{\varepsilon_{1t}-\sum_{i=2}^{N+1}\widehat{w}_i\varepsilon_{it}}_{\text{exchangeable}}.
            \end{align*}

             It is easy to verify whether the covariates are accurately reconstructed by the synthetic control unit ($\bm{Z}_1 \approx \sum_{i=2}^{N+1}\bm{Z}_i\widehat{w}_i$) as the covariates are observable. 
            This is more difficult for the unobserved factor loadings $\bm{\mu}_i$. \cite{abadie2010synthetic} show however, that under mild conditions, when the $K$ characteristics include the pre-treatment outcomes, the weights $\widehat{\bm{w}}^{(T_{\mathcal{E}})}$ asymptotically reconstruct the factor loadings of the treated unit:
            \begin{align*}        \sum_{i=2}^{N+1}\widehat{w}^{(T_{\mathcal{E}})}_i\bm{\mu}_i \xrightarrow{p} \bm{\mu}_1.
            \end{align*}
            
            \noindent Unfortunately, in finite samples, the SCM does not exactly reconstruct the factor loadings and hence we require stronger assumptions for exchangeability than just the exchangeable error term assumption of DiD. 
            The following result restates Theorem 2 of \cite{abadie2021synthetic} for the conditions on exchangeability of the SCM estimator.
            
            \begin{proposition}\label{th:scm}
                Assume the outcomes $Y_{it}$ follow the IFE model in (\ref{eq:intfixed}) where \newline $\{(\bm{\lambda}_t,\bm{\theta}_t)\}_{t\in \mathcal{B}\cup \{T_0+1,...,T\}}$ is exchangeable and independent of $\{\bm{\varepsilon}_t\}_{t\in \mathcal{B}\cup \{T_0+1,...,T\}}$. 
                Also, assume that $\{\bm{\varepsilon}_t\}_{t\in \mathcal{B}\cup \{T_0+1,...,T\}}$ is a sequence of exchangeable random variables. \newline
                Then, under $H_0^\text{treatment}$, $\{\widehat{\tau}_t\}_{t\in \mathcal{B}\cup \{T_0+1,...,T\}}$ is exchangeable.
               The exchangeability assumption of $\{\bm{\theta}_t\}_{t\in \mathcal{B}\cup \{T_0+1,...,T\}}$ can be dropped in the case of perfect pre-treatment fit of the covariates $(\bm{Z}_1 = \sum_{i=2}^{N+1}\bm{Z}_i\widehat{w}_i)$.
            \end{proposition}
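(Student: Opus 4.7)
The plan is to substitute $H_0^\text{treatment}$ into the decomposition of $\widehat{\tau}_t$ and show that the resulting expression is a common measurable function of $(\bm{\lambda}_t, \bm{\theta}_t, \bm{\varepsilon}_t)$, so that exchangeability of the inputs transfers to exchangeability of the outputs. Specifically, under $H_0^\text{treatment}$ the IFE representation (\ref{eq:intfixed}) substituted into (\ref{eq:tauhat}) gives, for every $t \in \mathcal{B}\cup\{T_0+1,\ldots,T\}$,
\begin{equation*}
  \widehat{\tau}_t = \Bigl(\bm{\mu}_1-\sum_{i=2}^{N+1}\widehat{w}_i\bm{\mu}_i\Bigr)'\bm{\lambda}_t + \Bigl(\bm{Z}_1-\sum_{i=2}^{N+1}\widehat{w}_i\bm{Z}_i\Bigr)'\bm{\theta}_t + \Bigl(\varepsilon_{1t}-\sum_{i=2}^{N+1}\widehat{w}_i\varepsilon_{it}\Bigr),
\end{equation*}
so $\widehat{\tau}_t = h(\bm{\lambda}_t,\bm{\theta}_t,\bm{\varepsilon}_t;\widehat{\bm w},\{\bm\mu_i\},\{\bm Z_i\})$ for a fixed function $h$ that does not depend on $t$.

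Next I would argue that $\widehat{\bm w}$ can be treated as a deterministic coefficient for purposes of evaluating the joint law over $\mathcal{B}\cup\{T_0+1,\ldots,T\}$. Because the training set $\mathcal{E}$ is disjoint from $\mathcal{B}\cup\{T_0+1,\ldots,T\}$ and the weights are computed from data indexed by $t \in \mathcal{E}$, the weights are measurable with respect to a sigma-algebra $\mathcal{C}$ that the assumptions leave untouched by the exchangeability hypothesis. Conditioning on $\mathcal{C}$ (which also fixes the time-invariant $\bm\mu_i$, $\bm Z_i$), the problem reduces to showing that $\{h(\bm\lambda_t,\bm\theta_t,\bm\varepsilon_t)\}_{t\in\mathcal{B}\cup\{T_0+1,\ldots,T\}}$ is exchangeable.

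The exchangeability then follows in two steps. First, joint exchangeability of $\{(\bm\lambda_t,\bm\theta_t,\bm\varepsilon_t)\}_{t\in\mathcal{B}\cup\{T_0+1,\ldots,T\}}$ is obtained by combining the assumed exchangeability of $\{(\bm\lambda_t,\bm\theta_t)\}$ and of $\{\bm\varepsilon_t\}$ over this index set with their mutual independence: for any permutation $\pi$, the joint law of the permuted triples factorises into the marginal laws of the two independent components, each of which is invariant under $\pi$. Second, applying the common measurable map $h$ coordinatewise preserves exchangeability, so $\{\widehat{\tau}_t\}$ is exchangeable conditional on $\mathcal{C}$, and hence unconditionally. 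For the perfect-fit case, $\bm Z_1 - \sum_{i=2}^{N+1}\widehat w_i\bm Z_i = \bm 0$ makes the $\bm\theta_t$ coefficient vanish, so $\widehat\tau_t$ depends only on $(\bm\lambda_t,\bm\varepsilon_t)$, and the argument goes through without any exchangeability assumption on $\{\bm\theta_t\}$. Proposition \ref{th:DIDth} then drops out as the special case $\ell=0$ with the constrained form (\ref{eq:partrends}) by verifying that the DiD weights implicit in (\ref{eq:diff}) are $\mathcal{E}$-measurable.

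The main obstacle is the bookkeeping around $\widehat{\bm w}$: one has to make precise that the exchangeability assumptions on $\{(\bm\lambda_t,\bm\theta_t)\}$ and $\{\bm\varepsilon_t\}$ over $\mathcal{B}\cup\{T_0+1,\ldots,T\}$ are preserved after conditioning on the training-period data that produce $\widehat{\bm w}$. Absent an explicit assumption that the training data are independent of the post-training latent variables, the cleanest route is to state the exchangeability claim conditionally on $\widehat{\bm w}$ (together with the time-invariant $\bm\mu_i,\bm Z_i$), exactly in the spirit of Theorem 2 of \cite{abadie2021synthetic}, and then note that conditional exchangeability implies unconditional exchangeability once we integrate out $\widehat{\bm w}$.
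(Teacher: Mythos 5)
Your argument is correct and matches the route the paper itself takes: the paper displays exactly your decomposition of $\widehat{\tau}_t$ into a fixed linear function of $(\bm{\lambda}_t,\bm{\theta}_t,\bm{\varepsilon}_t)$ with $\mathcal{E}$-measurable coefficients, and then defers the formal result to Theorem 2 of \cite{abadie2021synthetic} rather than supplying a standalone proof. Your explicit handling of the conditioning on $\widehat{\bm{w}}$ --- noting that the stated assumptions alone do not license treating the weights as a bystander under permutations of the blank and post-treatment indices, so the claim is cleanest read conditionally on the training data --- is the one genuinely delicate point, and you resolve it in the same spirit as the cited theorem.
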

        
    \subsection{Block structures for violations of exchangeability}\label{sec:blocks}
        In some cases, size distortions due to violations of the exchangeability conditions can be reduced. 
        If the treatment estimators are expected to be serially dependent, imposing a block-structure on the treatment estimators can reduce some of this dependence \citep{chernozhukov2018exact}. 
        This is done by partitioning the blank periods and post-treatment periods into blocks of $B$ observations (assuming for simplicity that the number of blank periods $T_{\mathcal{B}}$ is divisible by $B$).
        For each block, we then take the block-wise means of the treatment estimators and perform the procedure with these block-wise treatment estimators.
        It is important to note that these blocks also reduce our collection of stopping times; instead of being able to reject after every single observation, we now can only evaluate our test statistic after every $B$ observations.

\section{Simulation study}
    \subsection{Difference-in-differences in a stylized setting}\label{sec:did}
        We start by illustrating the potential benefits of anytime-valid inference in a stylized setting for the DiD estimator.
        A less stylized simulation setup is presented in Section \ref{sec:scm}.
        In our first setting, we assume the conditions of Proposition \ref{th:DIDth} hold, such that the DiD estimator produces exchangeable treatment estimators under $H_0$.
        We compare the anytime-valid tests with the fixed-$T$ test by \cite{abadie2021synthetic} and show how anytime-valid tests are preferred for a wide range of intertemporal preference profiles. 
    
        More specifically, we draw from the IFE model in \eqref{eq:intfixed}, excluding covariates ($\ell=0$) or fixed effects ($r=0$).
        We set $\bm{\varepsilon}_t \overset{\text{i.i.d}}{\sim} N(0,I)$.  
        The anytime-valid tests are based on the reduced ranks, and we use the plug-in and Gaussian statistics as outlined in Sections \ref{sec:plugin} and \ref{sec:gauss}, initializing the plug-in method at $t = T_0 + 1$ with the optimal test statistic of the Gaussian alternative. 
        For now, we correctly specify the effect size of this Gaussian alternative. 
        These anytime-valid tests are compared to the one-sided fixed-$T$ test by \cite{abadie2021synthetic} (see Appendix \ref{app:fixedT}). 
        As the exchangeability conditions hold exactly, we use a block size of $B=1$.           

        \subsubsection{The price paid for anytime-validity}
            We compare the rejection rates of the two anytime-valid tests to that of the not anytime-valid fixed-$T$ test.
            Clearly, these are not direct competitors as the fixed-$T$ test is not anytime-valid, but this comparison allows us to study how much power we lose in exchange for the anytime-validity.

            We consider two variations of the fixed-$T$ test: the \emph{single fixed-$T$ test} where we correctly evaluate the fixed-$T$ $p$-value at a single time $T$, and the \emph{repeated fixed-$T$ test} where we naively apply the fixed-$T$ methodology sequentially by rejecting as soon as the $p$-value dips below $\alpha$.
            The former controls the Type I error but can only be executed \emph{once} at $t=T$, whereas the latter naively applies the fixed-$T$ methodology at every time $t$ but does not control size. 
            
            Figure \ref{fig:comp} illustrates the rejection rates of these tests over time ($\alpha = 0.05$) under $H_0$ and $H_1: \tau_t = 1.5$ for $t > T_0$. 
            The first thing we see in Figure \ref{fig:h0did} is that repeatedly conducting the fixed-$T$ test leads to a large size distortion, reaching up to 20\%.      
            As shown by the `single fixed-$T$' line, applying this test \textit{only} \textit{once} ensures size control.
            Both anytime-valid tests maintain proper size control throughout the post-treatment period.
            We extended this simulation for up to 1000 post-treatment observations, and their size remained below $\alpha = 0.05$.
    
            \begin{figure}[ht!]
                \centering
                \begin{subfigure}[b]{0.45\linewidth}
                    \centering
                    \includegraphics[width=\linewidth]{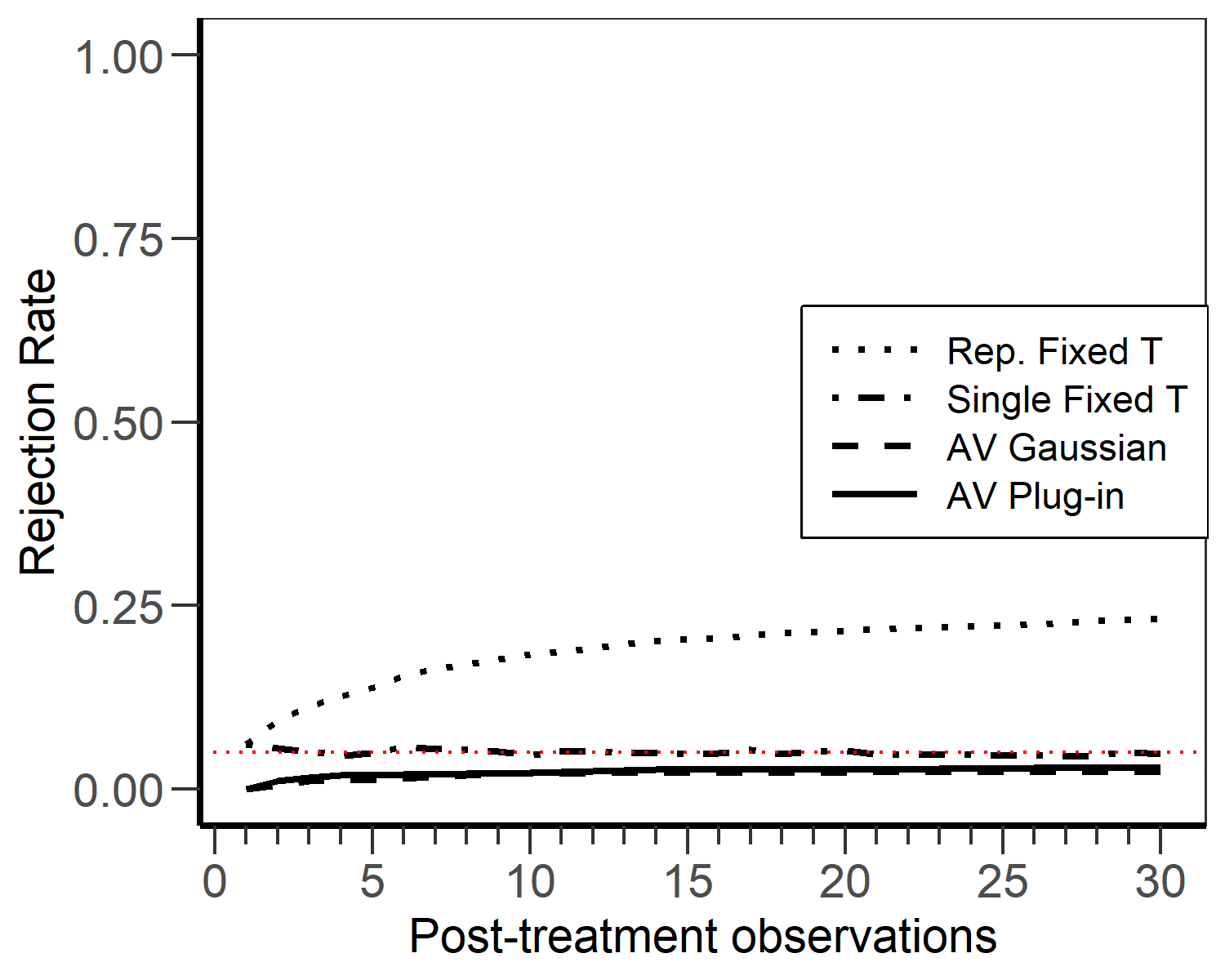}
                    \caption{$H_0: \tau_t=0$}
                    \label{fig:h0did}
                \end{subfigure}%
                \hfill
                \begin{subfigure}[b]{0.45\linewidth}
                    \centering
                    \includegraphics[width=\linewidth]{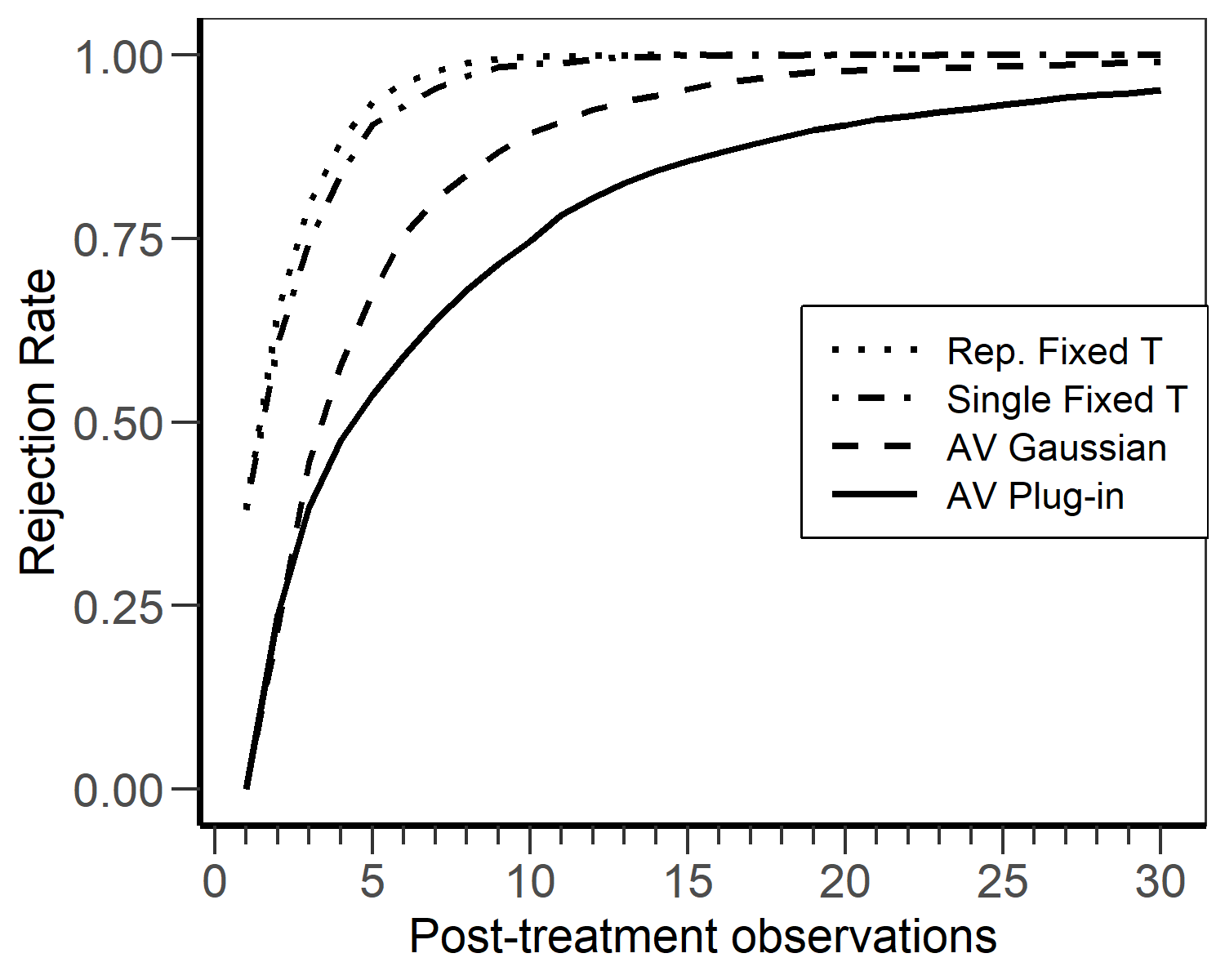}
                    \caption{$H_1: \tau_t=1.5$}
                    \label{fig:h1did}
                \end{subfigure}
                \caption{Rejection rates based on the DiD estimator under $H_0$ and $H_1$ for different testing procedures ($\alpha=0.05$) over time.
                Repeated fixed-$T$: rejection as soon as $p_t$ of fixed-$T$ dips below $\alpha$. 
                Single fixed-$T$: rejection if fixed-$T$ test rejects at time $T$. 
                AV Gaussian: anytime-valid reduced rank test with Gaussian alternative.
                AV Plug-in: anytime-valid reduced rank test with plug-in alternative. 
                Results are based on 2000 simulations from the IFE model with parallel trends and independent standard normal noise. 
                Sample sizes: $T_0=50$, $T_\mathcal{B}=25$, $N=20$. }
                \label{fig:comp}
            \end{figure}
            The rejection rates under the alternative are shown in Figure \ref{fig:h1did}.  
            The power of the repeated fixed-$T$ test should be ignored, as it comes at the costs of high size distortion.
            The power of the single fixed-$T$ method is next highest, but this power can of course only be feasibly attained if we had pre-specified the number of observations to be equal to $T$.
            At a given time $T$, we see that the anytime-valid methods pay a price in terms of power.
            In exchange, a single anytime-valid method attains the displayed power at every time: not just at a single pre-specified time $T$.
            We highlight this difference in Section \ref{sec:shaded}.
            
            Among the anytime-valid tests, the test with Gaussian alternative demonstrates greater power compared to the plug-in method.
            This is expected, as it is log-optimal in this setting, whereas the plug-in method must learn the distribution.
            Both anytime-valid tests converge to full power as the number of post-treatment observations grows.

            \subsubsection{Anytime-valid versus single fixed-\textit{T}} \label{sec:shaded}
                For a fair comparison between the anytime-valid tests and the fixed-$T$ test, we now compare our tests to a \emph{single} fixed-$T$ test. We find that our anytime-valid tests can benefit greatly from being able to reject early or continue testing.
                Figure \ref{fig:measure} presents the rejection rates over time for the anytime-valid test with Gaussian alternative and that of a fixed-$T$ test performed after either 3 (Figure \ref{fig:fix3}) or 6 (Figure \ref{fig:fix6}) post-treatment observations.  
                We highlight differences in rejection rates by shading regions where the anytime-valid test is preferred in dark grey, and the other regions in light grey.
                
                In Figure \ref{fig:fix3}, we study a fixed-$T$ test that is executed after 3 post-treatment observations: $T = T_0 + 3$. 
                We see that most of the advantage of the anytime-valid method stems from rejections occurring after $T$, but it can also benefit from rejecting early (before $T$).  
                The fixed-$T$ test has higher rejection rates only within a narrow band of the number of post-treatment observations.
        
                \begin{figure}[ht!]
                    \centering
                    \begin{subfigure}[b]{0.45\linewidth}
                        \centering
                        \includegraphics[width=\linewidth]{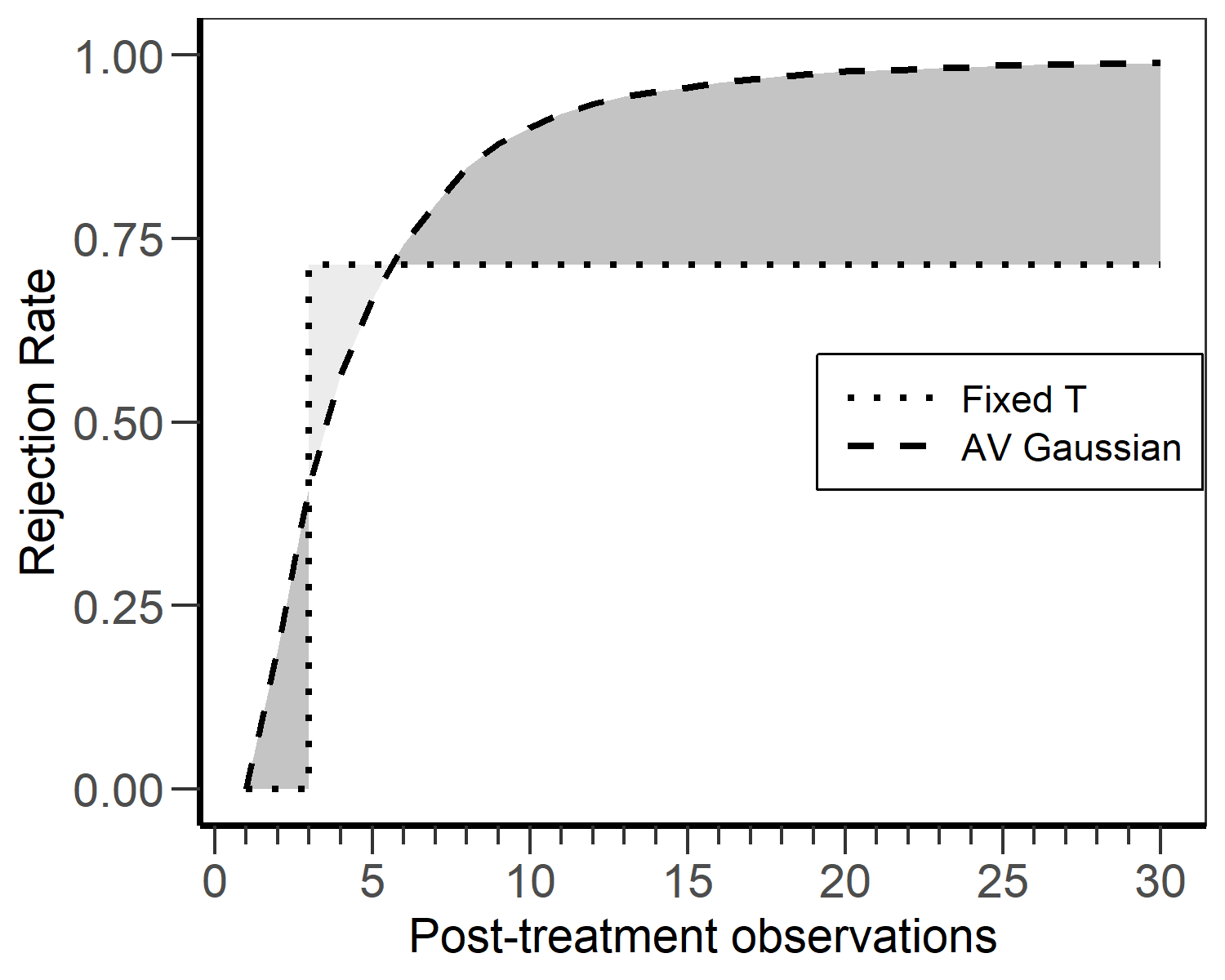}
                        \caption{Testing after 3 observations}
                        \label{fig:fix3}
                    \end{subfigure}%
                    \hfill
                    \begin{subfigure}[b]{0.45\linewidth}
                        \centering
                        \includegraphics[width=\linewidth]{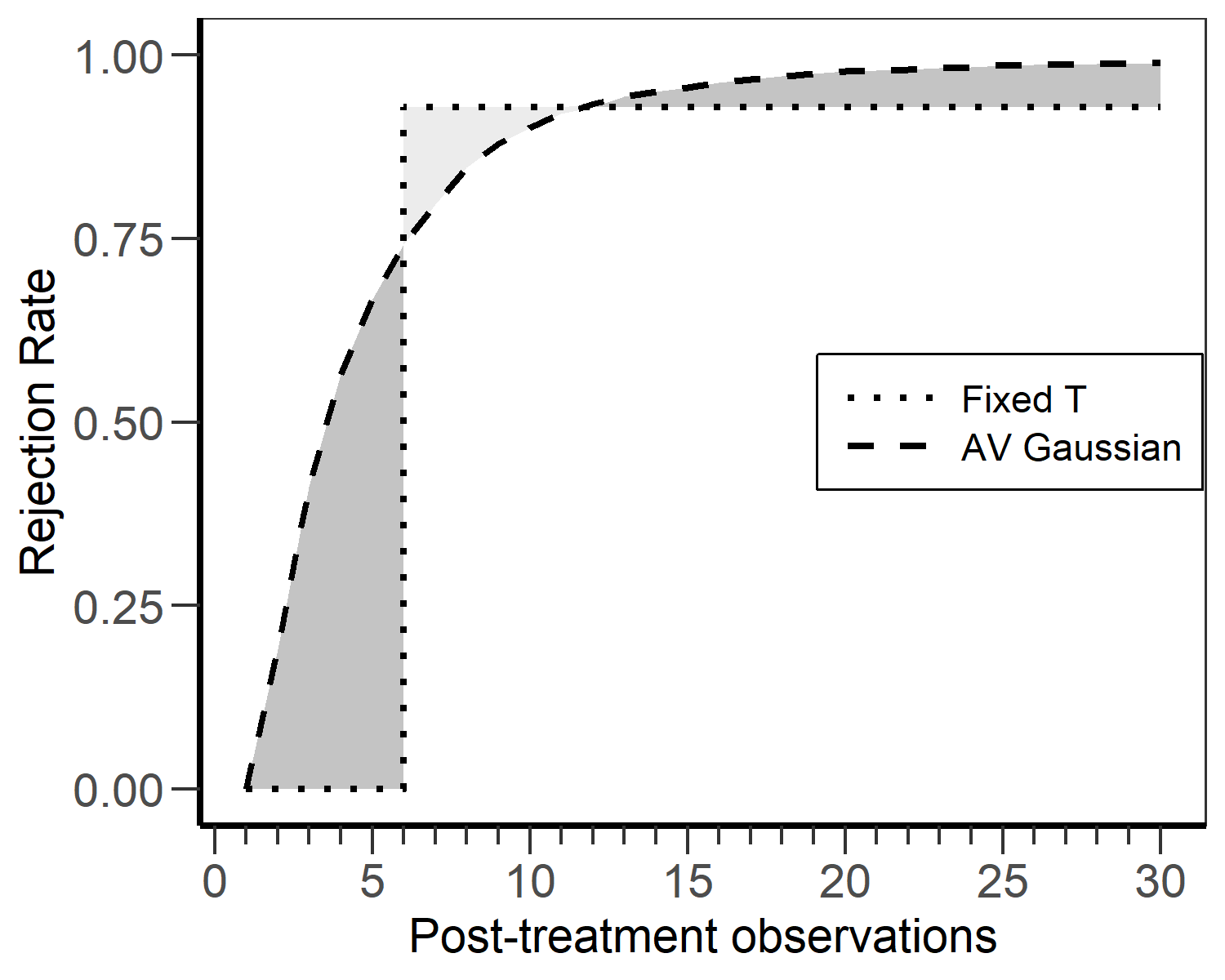}
                        \caption{Testing after 6 observations}
                        \label{fig:fix6}
                    \end{subfigure}
                    \caption{Differences in power between the anytime-valid test with Gaussian alternative and a single fixed-$T$ test conducted after either 3 or 6 post-treatment observations. 
                    Shaded areas represent differences in rejection rates over time. 
                    Simulation settings are the same as in Figure \ref{fig:comp}.}
                    \label{fig:measure}
                \end{figure}
            
                In Figure \ref{fig:fix6}, we make the same comparison for a fixed-$T$ test that rejects after 6 post-treatment observations. 
                Most of the benefit of the anytime-valid test now comes from the ability to reject earlier than the fixed-$T$ test, though some advantage remains for rejections occurring later. 
                
                Comparing the volume of the dark gray and light gray regions in Figure \ref{fig:comp} suggests that this anytime-valid test offers some intertemporal advantage over the fixed-$T$ tests.  
                We study this in more detail in the following section.
                
            \subsubsection{Intertemporal preferences}
                In practice, one may have intertemporal preferences due to, the (opportunity) costs involved with conducting an experiment. 
                To capture these preferences over time, we consider a simple discounted version of power.
                
                Suppose $H_0$ is false, and a decision-maker derives utility from knowing this at $t > T_0$.  
                Consider a normalized discounted utility model, where the expected utility $U_S$ of test $S$ is a discounted sum of its rejection probabilities:
                
                \begin{equation}
                    \mathbb{E}[U_S] = \sum_{t=T_0+1}^T \delta^t P[\text{$H_0$ rejected by test $S$ at $t' \leq t$}], \label{eq:utility}
                \end{equation}
                for some scalar discount factor $\delta \in (0,1]$.  
                When $\delta=1$, the difference in utility between the anytime-valid and fixed-$T$ is simply the difference in volume of the dark grey area and the light grey area in Figure \ref{fig:comp}.
                A smaller $\delta$ implies that the volume of earlier shaded regions is weighted more heavily.
                One economic interpretation of $\delta$ is that it represents the (opportunity) costs of continuing the study.
                A smaller $\delta$ suggests that future rejections carry higher costs, making them less important for the decision-maker.
        
                As seen in Section \ref{sec:shaded}, the intertemporal advantage of an anytime-valid test over a fixed-$T$ test varies with its rejection time $T$.
                We now demonstrate that for a wide range of discount factors $\delta$, the anytime-valid tests are preferred over almost every possible fixed-$T$ test.
                In practice, the $T$ for which the fixed-$T$ test attains the highest discounted utility is unknown a priori, further supporting the use of anytime-valid tests. 
                
               In Figure \ref{fig:delta_did}, we compare the utility of different fixed-$T$ tests with the utility of the anytime-valid tests.
               Each $x$-axis tick corresponds to a fixed-$T$-test executed \textit{only} at that specific time $T$.
               We shade the regions of discount factors $\delta$ where fixed-$T$ tests performed at time $T$ are preferred over the anytime-valid tests.  
               Figure \ref{fig:deltagaussdid} shows that for $\delta \geq 0.8$, the anytime-valid test with Gaussian alternative dominates \emph{all} fixed-$T$ tests. Hence, for a (mildly) patient decision-maker, there does not exist any $T$ at which it would have been better to perform a fixed-$T$ test.
               For low discount factors, the fixed-$T$ test with few post-treatment observations is preferred.
               
               In contrast, Figure \ref{fig:deltavovkdid} shows that the plug-in test is outperformed by a substantial range of fixed-$T$ tests for all values of $\delta$. However, as opposed to the fixed-$T$ test, this plug-in method requires no prior knowledge on the effect size. This suggests that in this setting, the plug-in method is only beneficial when there is high uncertainty on the actual effect size.

                \begin{figure}[ht!]
                    \centering
                    \begin{subfigure}[b]{0.49\linewidth}
                        \centering
                        \includegraphics[width=\linewidth]{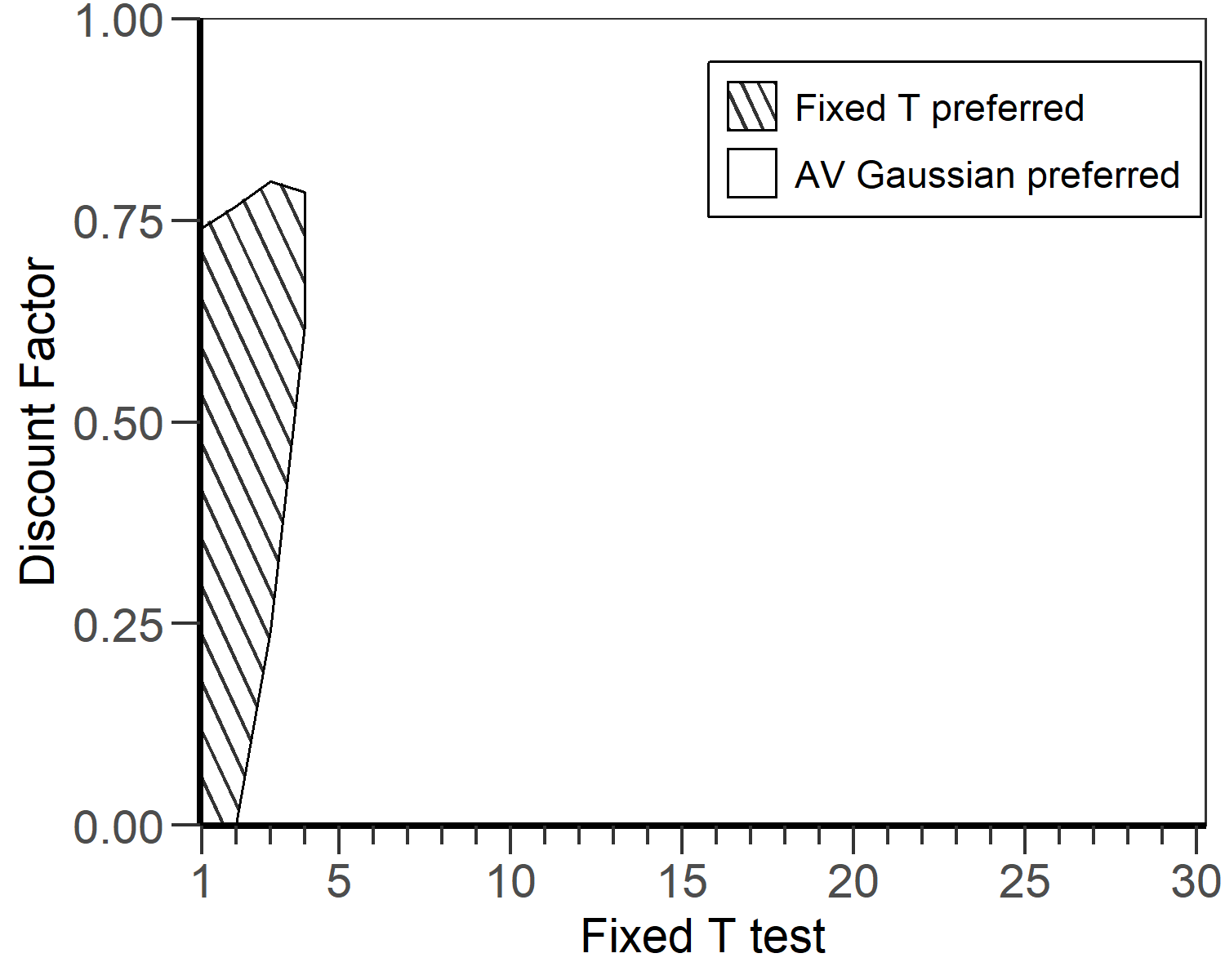}
                        \caption{Anytime-valid test with Gaussian alternative vs. range of fixed-$T$ tests}
                        \label{fig:deltagaussdid}
                    \end{subfigure}%
                    \hfill
                    \begin{subfigure}[b]{0.49\linewidth}
                        \centering
                        \includegraphics[width=\linewidth]{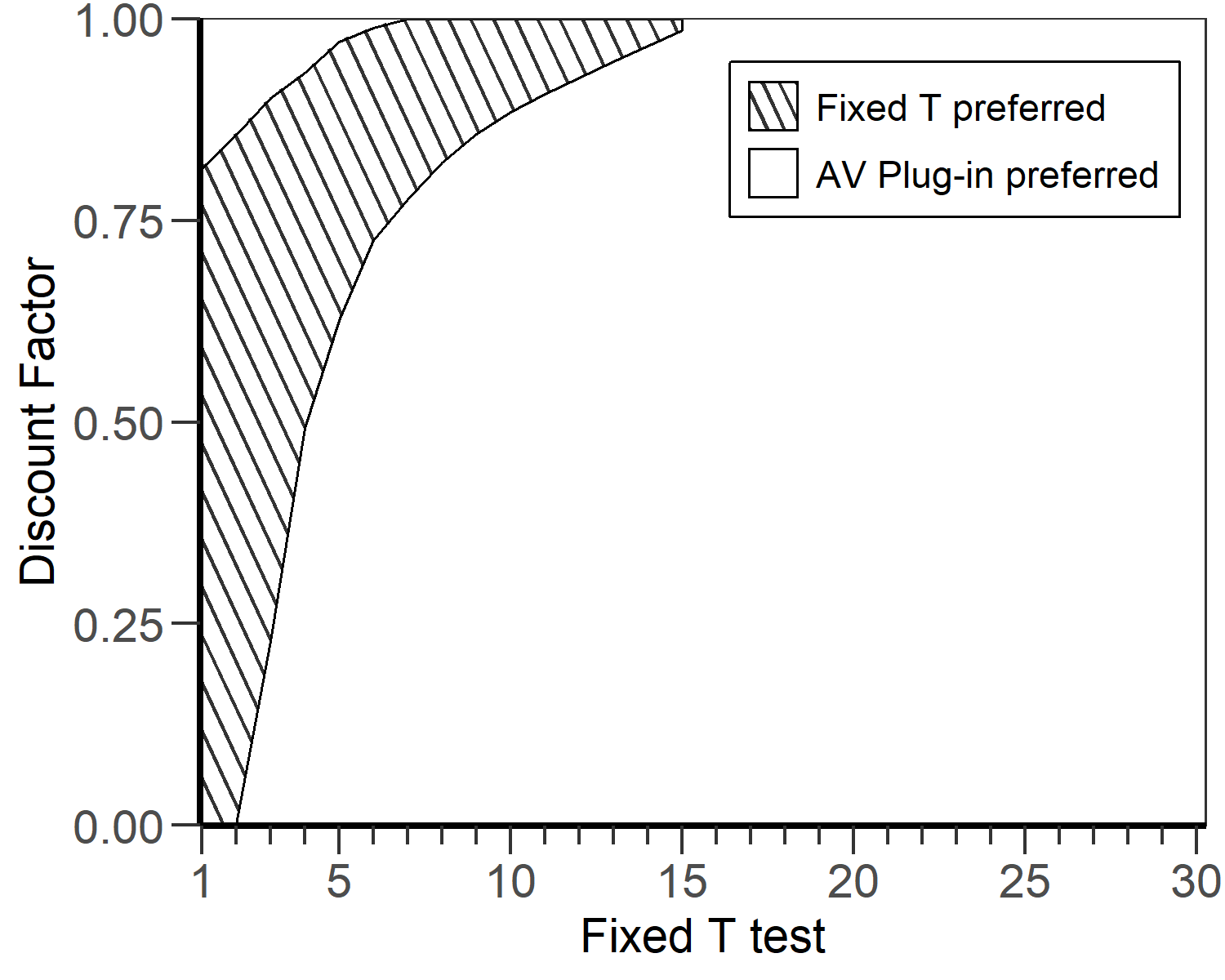}
                        \caption{Anytime-valid test with plug-in alternative vs. range of fixed-$T$ tests}
                        \label{fig:deltavovkdid}
                    \end{subfigure}
                    \caption{Regions of discount factors $\delta$ in which the fixed-$T$ test performed at different levels of $T$ achieves higher discounted utility, as defined in (\ref{eq:utility}), than anytime-valid tests.
                    Simulation settings are the same as in Figure \ref{fig:comp} with $H_1: \tau_t = 1.5$.}
                    \label{fig:delta_did}
                \end{figure}

    \subsection{Synthetic control method and misspecification} \label{sec:scm}
        In Section \ref{sec:did}, we considered a stylised setting, in which the treatment estimates were exactly exchangeable.
        In this setting, we illustrate our methodology on SCM and simultaneously demonstrate the robustness under possible misspecification. 
        We show how size distortion, caused by violations of the conditions in Proposition \ref{th:scm}, can be effectively reduced by employing a block structure. 
        In addition, we consider a setting where the anytime-valid test with Gaussian alternative is based on the incorrect effect size specification. 
        There, we show how adaptive mixtures across multiple alternatives can enhance power. 
        Lastly, we also consider a setting with a dynamic treatment effect, which has a varying strength over time.
    
        In this setup, we consider 3 time-varying factors ($r=3$) and generate the time-varying factors and errors through a VAR(1) process defined as follows:
        \begin{align}
            \bm{\lambda}_t &= \rho_\lambda\bm{\lambda}_{t-1}+ \bm{\xi}_t, & \bm{\xi}_t &\sim N(0,(1-\rho_\lambda^2)I_3)\label{eq:scm1} \\
            \bm{\varepsilon}_t &= \rho_{\varepsilon}\bm{\varepsilon}_{t-1} + \bm{\nu}_t, & \bm{\nu}_t &\sim N(0,\sigma^2(1-\rho_\varepsilon^2)I_{N+1})\label{eq:scm2}
        \end{align}
        where $\rho_\lambda, \rho_\varepsilon \in (-1,1)$ are the autoregressive parameters of $\bm{\lambda}_t$ and $\bm{\varepsilon}_t$, and we set $\sigma^2=1$.
        Proposition \ref{th:scm} implies that the tests are exactly valid when $\rho_\lambda = \rho_\varepsilon = 0$. 
        Since the rejection rates over time and intertemporal preferences in this exactly valid case are similar to the DiD setting (see Appendix \ref{app:similar}), we immediately turn to the effect of non-exchangeability on the size.
    
    \subsubsection{Size distortion due to non-exchangeability}
        We examine the size distortion of three tests: a fixed-$T$ test carried out after 12 post-treatment observations, and two reduced-rank anytime-valid tests based on 30 post-treatment observations. 
        The sample of the anytime-valid tests is deliberately set larger as its size control should also hold after $T=12$. 
        
        Table \ref{tab:b3} presents the size of these tests for different sample dimensions and values of $\rho_\lambda$ and $\rho_\varepsilon$, imposing a block-structure ($B=3$). 
        We also consider the case where $B=1$ in \ref{app:blocks}, where we find settings with high size distortions, but these are similar across fixed-$T$ and anytime-valid tests\footnote{As we impose a block structure of $B$, we rescale the pre- and post-treatment sample lengths with $B$ to ensure a fair comparison between $B=1$ and $B=3$.}.
    
        As expected, when the autoregressive components are equal to 0 ($\rho_\lambda = \rho_\varepsilon = 0$), all tests maintain size control across all sample dimensions. 
        The anytime-valid tests are slightly undersized for this finite horizon. This is not surprising, as they are valid for any final horizon $T$.
        When the autoregressive component of the time-varying factors $\rho_\lambda$ increases, we observe larger sizes for all three tests. 
        These distortions are largest when $\rho_\lambda$ is large and the pre-treatment sample size is small ($T_0 = 20$). 
        In these small samples, the SCM estimator does not adequately reconstruct the factor loadings of the treated unit, leading to non-exchangeability of the SCM estimates. 
        When $T_0$ is large,  most tests exhibit relatively low size distortion. 
        This especially holds for the Gaussian anytime-valid test (see column 6 of Table \ref{tab:b3})
        
        When the idiosyncratic serial correlation ($\rho_\varepsilon$) grows, the size increases, regardless of the dimension of the pre-treatment sample. 
        This is because the SCM only filters out variation that is shared across units, and these idiosyncratic shocks are unit-specific by construction.
        In practice, one often assumes that the idiosyncratic components $\varepsilon_{it}$ have some common variation across units and their serial dependence can therefore be captured by the unobserved time-varying factors.
        
        Overall, the size distortions observed for the fixed-$T$ test and the anytime-valid tests do not differ substantially.
        However, between the two anytime-valid tests, the plug-in method appears less robust than the Gaussian alternative.
        An explanation for this is that the flexibility of this plug-in method also makes it more sensitive to violations of exchangeability.
        As the plug-in method adaptively learns the distribution of the ranks, it gradually learns in what way the exchangeability is violated and attempts to exploit this.
    
        \setlength{\tabcolsep}{3.5pt}
        \renewcommand{\arraystretch}{0.5}
        \begin{table}[h!]
\centering
\caption{Rejection rates under $H_0$ $(\alpha=0.05)$ with $B=3$} \label{tab:b3}
\small
\begin{tabular}{@{}cc|ccccccccccc@{}}
\toprule
\midrule
&&\multicolumn{9}{c}{$\bm{\rho_\varepsilon=0}$} \\
\midrule
&& \multicolumn{3}{c}{Fixed-$T$} & \multicolumn{3}{c}{AV Gaussian} & \multicolumn{3}{c}{AV Plug-in} \\ 
\cmidrule(lr){3-5} \cmidrule(lr){6-8} \cmidrule(lr){9-11}
 $\rho_\lambda$ & $N$ & $T_0=20$ & $T_0=50$ & $T_0=100$ & $T_0=20$ & $T_0=50$ & $T_0=100$ & $T_0=20$ & $T_0=50$ & $T_0=100$  \\
\midrule
\multirow{2}{*}{0} &20 & 0.06 & 0.06 & 0.05 & 0.02 & 0.03 & 0.02 & 0.02 & 0.02 & 0.02 \\ 
 &50 & 0.05 & 0.05 & 0.05 & 0.02 & 0.04 & 0.02 & 0.02 & 0.02 & 0.02 \\ 
\midrule
\multirow{2}{*}{0.25} &20 & 0.06 & 0.06 & 0.06 & 0.03 & 0.03 & 0.02 & 0.02 & 0.02 & 0.02 \\ 
 &50 & 0.06 & 0.05 & 0.05 & 0.02 & 0.04 & 0.02 & 0.02 & 0.03 & 0.02 \\ 
\midrule
\multirow{2}{*}{0.5} &20 & 0.07 & 0.06 & 0.06 & 0.04 & 0.04 & 0.03 & 0.04 & 0.03 & 0.03 \\ 
 &50 & 0.06 & 0.06 & 0.05 & 0.03 & 0.03 & 0.02 & 0.03 & 0.03 & 0.03 \\ 
\midrule
\multirow{2}{*}{0.75} &20 & 0.10 & 0.10 & 0.09 & 0.05 & 0.05 & 0.03 & 0.10 & 0.07 & 0.06 \\ 
 &50 & 0.08 & 0.08 & 0.08 & 0.05 & 0.05 & 0.03 & 0.07 & 0.05 & 0.05 \\ 
\midrule
$\rho_\lambda$ & $N$ & \multicolumn{9}{c}{$\bm{\rho_\varepsilon=0.25}$} \\ 
\midrule
\multirow{2}{*}{0} &20 & 0.06 & 0.07 & 0.07 & 0.03 & 0.04 & 0.03 & 0.03 & 0.04 & 0.05 \\ 
 &50 & 0.07 & 0.06 & 0.07 & 0.03 & 0.05 & 0.02 & 0.03 & 0.03 & 0.03 \\ 
\midrule
\multirow{2}{*}{0.25} &20 & 0.07 & 0.07 & 0.06 & 0.04 & 0.04 & 0.03 & 0.04 & 0.04 & 0.03 \\ 
 &50 & 0.07 & 0.08 & 0.08 & 0.03 & 0.04 & 0.03 & 0.04 & 0.03 & 0.04 \\ 
\midrule
\multirow{2}{*}{0.5} &20 & 0.08 & 0.08 & 0.07 & 0.04 & 0.04 & 0.03 & 0.05 & 0.04 & 0.04 \\ 
 &50 & 0.08 & 0.07 & 0.06 & 0.03 & 0.04 & 0.03 & 0.05 & 0.04 & 0.04 \\ 
\midrule
\multirow{2}{*}{0.75} &20 & 0.13 & 0.11 & 0.10 & 0.07 & 0.05 & 0.04 & 0.10 & 0.08 & 0.08 \\ 
 &50 & 0.10 & 0.09 & 0.08 & 0.06 & 0.05 & 0.03 & 0.07 & 0.07 & 0.06 \\ 
\midrule
$\rho_\lambda$ & $N$ & \multicolumn{9}{c}{$\bm{\rho_\varepsilon=0.5}$} \\ 
\midrule
\multirow{2}{*}{0} &20 & 0.10 & 0.10 & 0.09 & 0.05 & 0.04 & 0.03 & 0.06 & 0.06 & 0.07 \\ 
 &50 & 0.11 & 0.10 & 0.09 & 0.05 & 0.05 & 0.03 & 0.06 & 0.07 & 0.06 \\ 
\midrule
\multirow{2}{*}{0.25} &20 & 0.09 & 0.10 & 0.09 & 0.05 & 0.05 & 0.03 & 0.08 & 0.06 & 0.07 \\ 
 &50 & 0.10 & 0.10 & 0.10 & 0.04 & 0.05 & 0.03 & 0.07 & 0.07 & 0.07 \\ 
\midrule
\multirow{2}{*}{0.5} &20 & 0.10 & 0.09 & 0.09 & 0.05 & 0.04 & 0.04 & 0.07 & 0.07 & 0.07 \\ 
 &50 & 0.10 & 0.11 & 0.10 & 0.05 & 0.04 & 0.03 & 0.06 & 0.07 & 0.08 \\ 
\midrule
\multirow{2}{*}{0.75} &20 & 0.12 & 0.13 & 0.11 & 0.09 & 0.06 & 0.05 & 0.13 & 0.10 & 0.11 \\ 
 &50 & 0.12 & 0.11 & 0.11 & 0.06 & 0.06 & 0.04 & 0.10 & 0.10 & 0.10 \\ 
\midrule
\bottomrule
\end{tabular}
\begin{tablenotes} 
    \item\textit{Notes:} 2000 repetitions of simulation design from IFE model as described in (\ref{eq:scm1},\ref{eq:scm2}). Fixed-$T$: test by \cite{abadie2021synthetic}, AV Gaussian: Anytime-valid reduced rank test with Gaussian alternative, AV Plug-in: Anytime-valid reduced rank test with plug-in alternative. The fixed-$T$ test is performed after $12B=36$ post-treatment observations. $T_\mathcal{B}=T_0/2$ and $T-T_0=30B=90$.
\end{tablenotes}
\end{table}

        \subsubsection{Power of anytime-valid Gaussian method under mis-specification}\label{sec:mispec}
            In the previous simulations, we used the true effect size for the Gaussian anytime-valid method.
            We now study the sensitivity of the power to a possible misspecification of the effect size. 
            The key finding is that the power can vary quite strongly if the effect size is misspecified, and that this can be overcome by taking a mixture over different plausible effect sizes.

            Using the same IFE setup with $\rho_\lambda=\rho_\varepsilon=0$, Figure \ref{fig:diffpriors} shows the power of the anytime-valid test with Gaussian alternative when the true effect size is off by a factor $c > 0$. 
            For $c > 1$, this means that we test against an effect size that is larger than the true effect size.
            For $c < 1$, we test against an effect size that is smaller than the true one.
            
            We find that if $c > 1$, the rejection rate is higher at the beginning, but it grows less quickly later on. 
            The tests with smaller $c$ start out less powerful, but grow more powerful later on. 
            This illustrates that different $c$ correspond to different patience preferences.\footnote{\cite{koning2024continuous} discusses how in a Gaussian location model, these different effect sizes correspond to optimal $e$-values for different power targets.}
    
            \begin{figure}[ht!]
            \centering
            \begin{subfigure}[b]{0.45\linewidth}
                \centering
                \includegraphics[width=\linewidth]{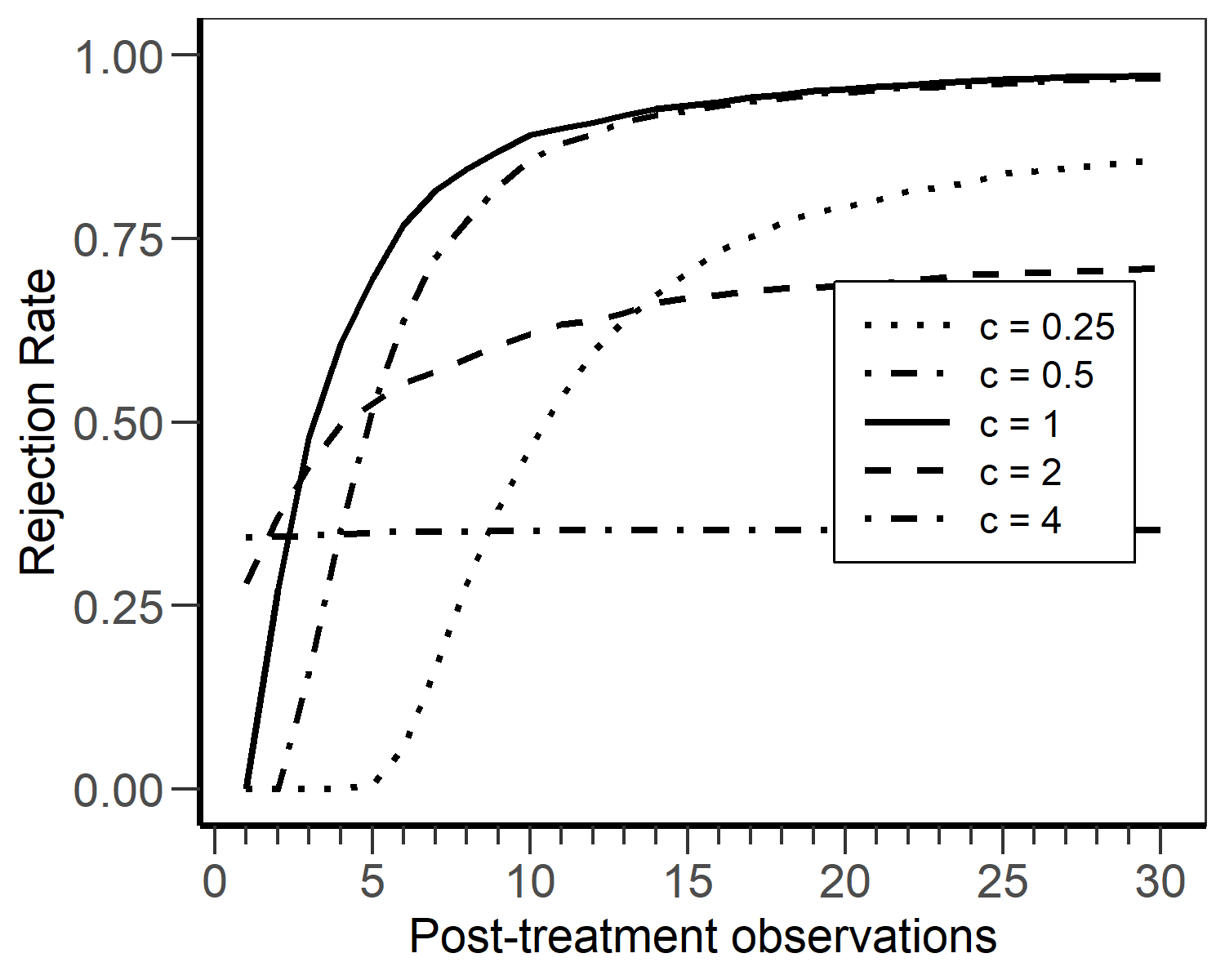}
                \caption{Log-optimal vs. misspecified Gaussians }
                \label{fig:diffpriors}
            \end{subfigure}%
            \hfill
            \begin{subfigure}[b]{0.45\linewidth}
                \centering
                \includegraphics[width=\linewidth]{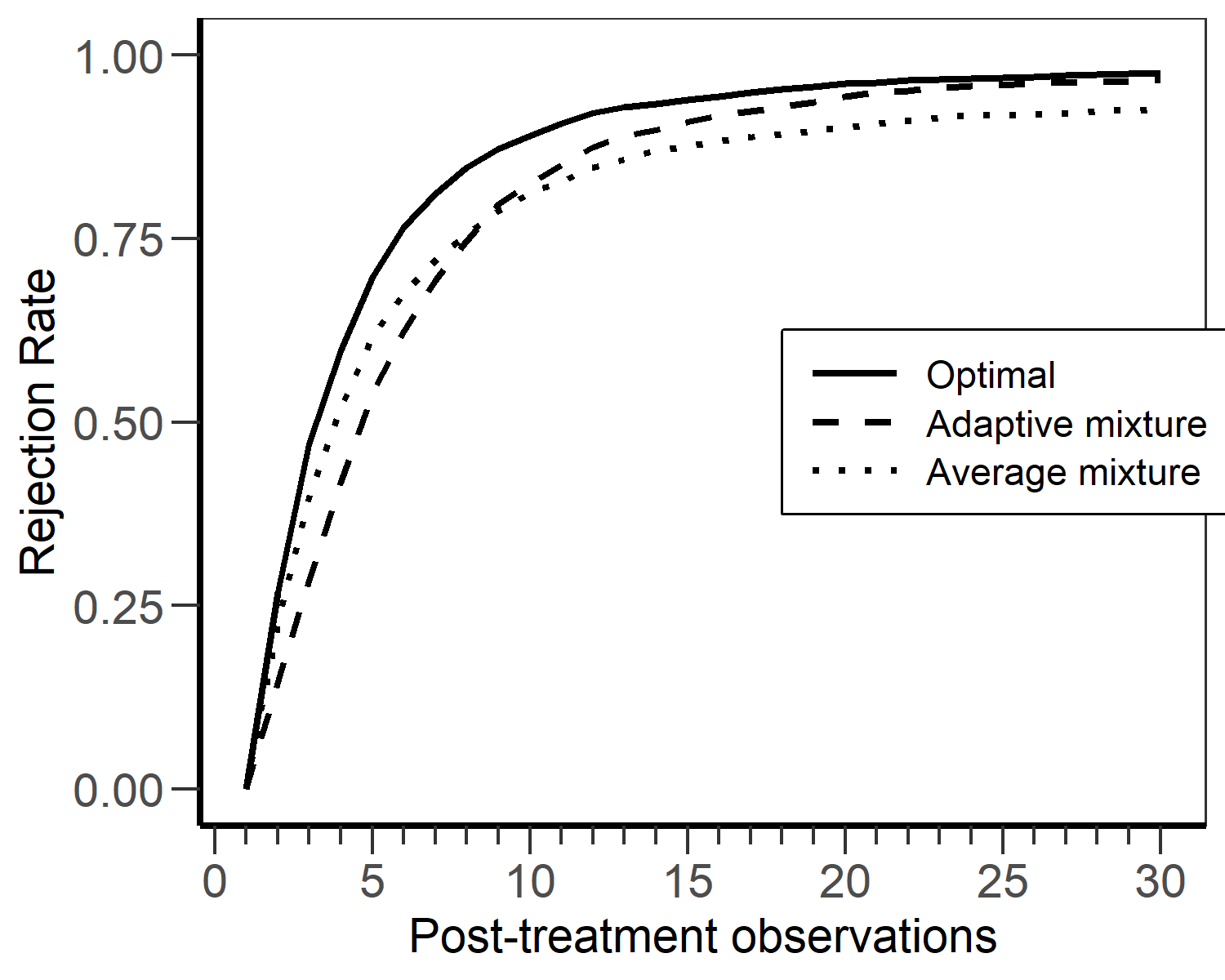}
                \caption{Log-optimal vs. mixtures}
                \label{fig:mixing}
            \end{subfigure}
            \caption{Rejection rates based on the SCM estimator under $H_1:\tau_t=2$ of different testing procedures ($\alpha=.05$) over time. 
            We draw 2000 simulations from the IFE with $\rho_\varepsilon=\rho_\lambda=0$ and independent standard normal noise. 
            The sample dimensions are $T_0=50,T_\mathcal{B}=25,N=20$.}
            \label{fig:priors}
        \end{figure}
            
            In practice, the actual effect size is often unknown. 
            Let $C$ denote a countable set of candidate $c$'s. 
            Proposition \ref{th:mixing} in Appendix \ref{app:minimax} implies that if $1\in C$, then the average over the martingales implied by $C$ has asymptotic log-optimal growth. 
            Since this average is actually an adaptive weighted average over $e$-values, we call this strategy the \textit{adaptive mixture}. 
            Another possibility is to take a uniform average over all the candidate $e$-values and constructing a martingale through their running product. 
            We call this strategy the \textit{average mixture}.
        
            We compare our mixture strategies, with $C=\{0.25,0.5,1,2,4\}$, to the log-optimal rejection rates in Figure \ref{fig:mixing}. 
            Both weighting strategies are less powerful than the log-optimal strategy, which is explained by the inclusion of the sub-optimal martingales. 
            After $T=10$, the adaptive mixture starts outperforming the average mixture and its power approaches that of the log-optimal strategy.

        \subsubsection{Power under dynamic treatment effects}
            In practice, the treatment effect may be dynamic, with which we mean that the treatment effect may change over time. 
            In this section, we consider a setting where the treatment effect grows linearly over time $\tau_t=1 +\frac{t-T_0}{15}$.
            We compare the fixed-$T$ test, plug-in method with sequential ranks, and an \textit{adaptive mixture} (Section \ref{sec:mispec}) over reduced rank anytime-valid Gaussian tests with $C=\{0.25,0.5,1,2,4\}$. 
            In short, the results show that although the anytime-valid methods remain effective, their advantage over the fixed-$T$ test is slightly reduced.  
        
            Figure \ref{fig:dyn_delta} presents the intertemporal preference plots for both of the anytime-valid tests. 
            The anytime-valid test with Gaussian alternative is still preferred over almost all fixed-$T$ tests for discount factors $\delta>0.9$. 
            This is slightly weaker than the constant treatment effect, where this held for all fixed-$T$ tests. 
            The plug-in test statistic seems to perform less well.
            It is dominated by a large set of fixed-$T$ tests for every discount factor $\delta$. 
            An explanation for this is that the distribution of the ranks is not \textit{stable}, which makes it difficult to effectively learn the distribution of the ranks.
            One important note is that the inherent uncertainty regarding the dynamics of the treatment effect simultaneously make it more difficult to specify at which $T$ to test for a fixed-$T$ test. 
            Hence, the increased range for which the fixed-$T$ is preferred over these anytime-valid tests compared to the instantanuous setting is partially offset by the additional insecurity in specifying $T$.
            
            \begin{figure}[ht!]
                \centering
                \begin{subfigure}[b]{0.45\linewidth}
                    \centering
                    \includegraphics[width=\linewidth]{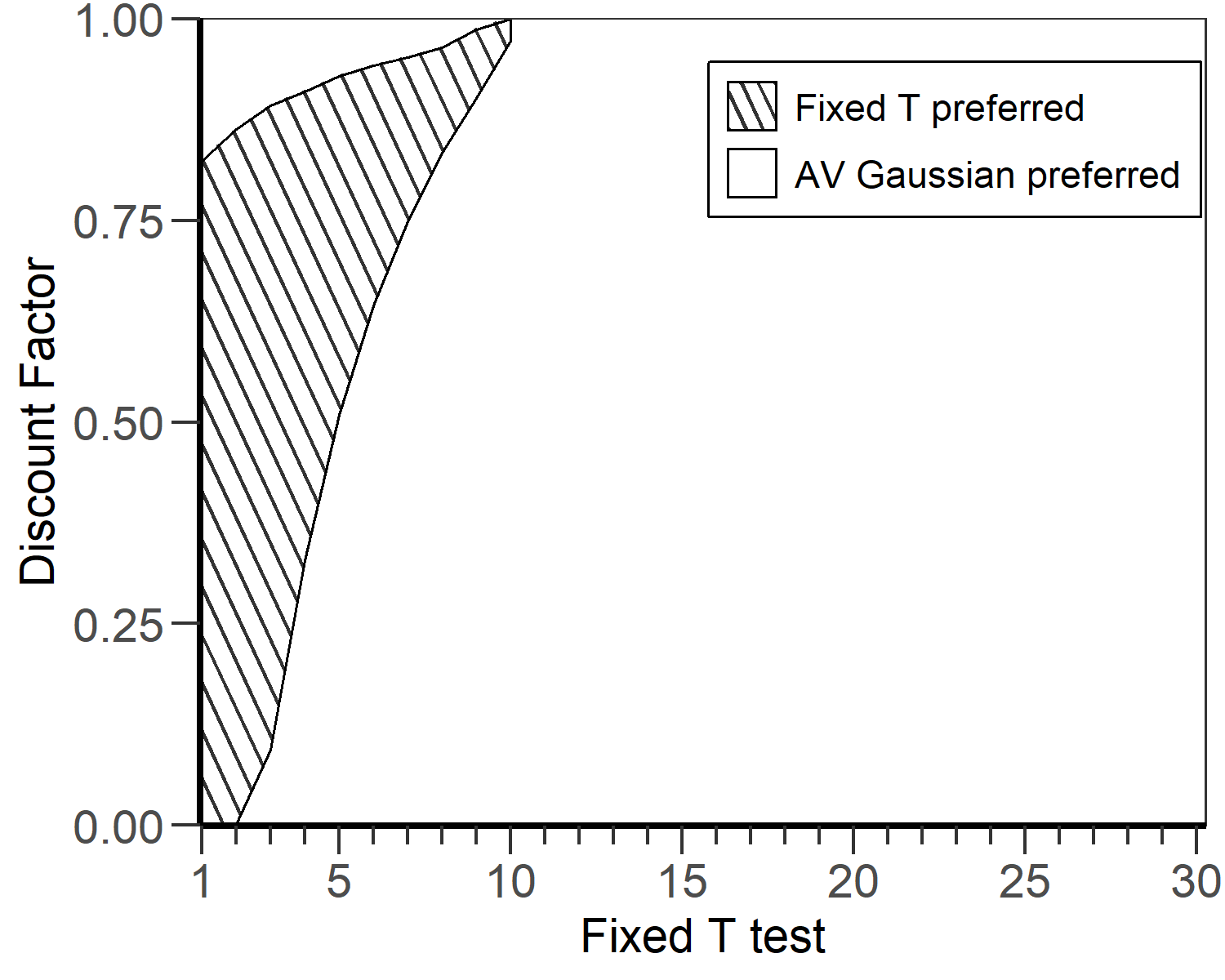}
                \caption{Adaptive mixture of anytime-valid Gaussian tests vs. range of fixed-$T$ tests}
                    \label{fig:dyn_gauss}
                \end{subfigure}%
                \hfill
                \begin{subfigure}[b]{0.45\linewidth}
                    \centering
                    \includegraphics[width=\linewidth]{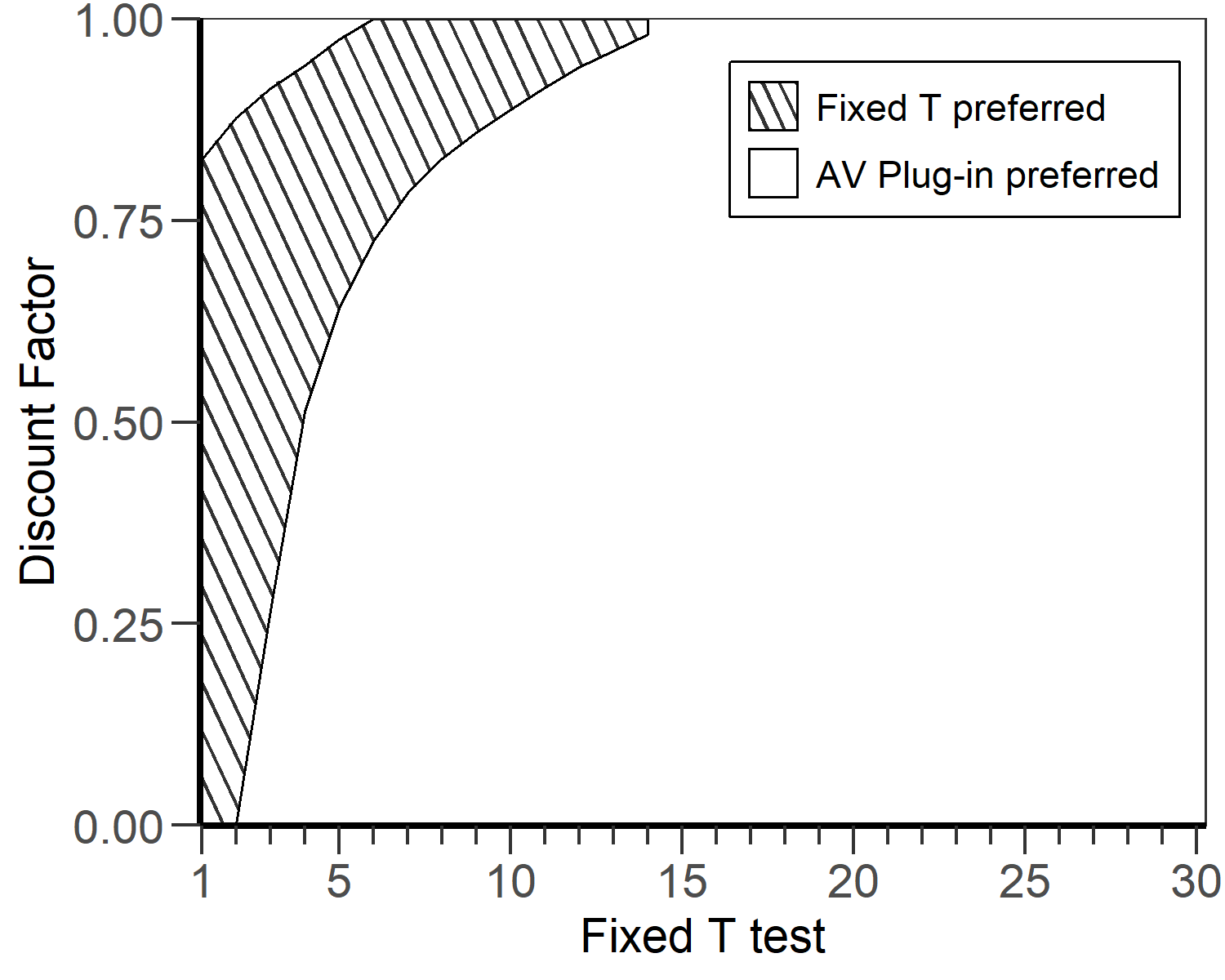}
                    \caption{Anytime-valid test with plug-in alternative vs. range of fixed-$T$ tests}
                    \label{fig:dyn_vovk}
                \end{subfigure}
                \caption{Regions of discount factors $\delta$ in which the fixed-$T$ test at different levels of $T$ attains higher discounted utility than the anytime-valid tests. 
                Test is based on the SCM estimator under $H_1:\tau_t=1 +\frac{t-T_0}{15}$. 
                We consider the fixed-$T$ test, the adaptive mixture of reduced rank Gaussian martingales, and the plug-in test based on sequential ranks. 
                Based on 2000 simulations from the IFE with $\rho_\varepsilon=\rho_\lambda=0$ and independent standard normal noise. 
                The sample dimensions are $T_0=50,T_\mathcal{B}=25,N=20$.}
                \label{fig:dyn_delta}
            \end{figure}
\newpage
\section{Discussion}
    This paper introduces anytime-valid testing for program evaluation. 
    We construct anytime-valid $p$-values and derive conditions under which these $p$-values have optimal shrinkage rates. 
    In the case of composite alternatives, we propose two strategies: either learning the most promising alternative, or transforming a composite alternative on the outcomes to a simple alternative on the ranks. 
    We show how these methods are often preferred over a standard test that requires a pre-specified number of observations $T$ for a simple intertemporal discounted notion of power.

    In future work, we plan to more extensively study the setting with a dynamic treatment effect. 
    Such settings are challenging in the fixed-$T$ setting because the correct value for $T$ is harder to find, and challenging in the sequential setting as it makes the alternative hard to learn / pre-specify. 
    A fruitful idea may be to design techniques that are tailored to sequentially learning dynamic treatment effects.

    Finally, we note that there are multiple techniques to further increase power based on test martingales. 
    For a given significance level $\alpha$, \cite{fischer2024improving} show that the power of an anytime-valid test can be slightly, but almost surely, improved by avoiding `overshooting'.
    This is done by capping the maximum of our test statistics such that the martingale never exceeds $\frac{1}{\alpha}$, thereby strictly increasing the density of the $e$-value in other regions.
    In addition, if after a certain time $T$, one wishes to terminate the experiment, randomized Markov's inequality can be used to have an almost surely lower critical value at time $T$ \citep{ramdas2023randomized}.
    
\bibliographystyle{apalike} 
\bibliography{references.bib}  

\begin{appendix}

\newpage

\section{Proofs}\label{sec:proofs}
    \subsection{Proof of Theorem \ref{th:rank_test}}\label{sec:prranks}
        We first show that $e_t$ is a sequential $e$-value. 
        For every $t >0$, we have under the null hypothesis \eqref{eq:hyp_rank},
        \begin{align*}
            \mathbb{E}_{S_t,R_t}[e_t\mid\mathcal{G}_{t-1}]
                &= \mathbb{E}_{S_t}\left[\mathbb{E}_{R_t}[e_t\mid\mathcal{G}_{t-1},S_t]\mid \mathcal{G}_{t-1}\right]
                = \mathbb{E}_{S_t}\left[\mathbb{E}_{R_t}\left[\frac{S_t(R_t)}{\tfrac{1}{t}\sum_{i=1}^t S_t(i)} \middle| \mathcal{G}_{t-1},S_t\right]\middle|\mathcal{G}_{t-1}\right] \\
                &= \mathbb{E}_{S_t}\left[\frac{\mathbb{E}_{R_t}\left[S_t(R_t) \mid\mathcal{G}_{t-1}, S_t\right]}{\tfrac{1}{t}\sum_{i=1}^t S_t(i)}\right]
                = \mathbb{E}_{S_t}\left[\frac{\mathbb{E}_{R_t}\left[S_t(R_t) \mid\mathcal{G}_{t-1}\right]}{\tfrac{1}{t}\sum_{i=1}^t S_t(i)}\right]\\
                &= \mathbb{E}_{S_t}\left[\frac{\tfrac{1}{t}\sum_{i=1}^t S_t(i)}{\tfrac{1}{t}\sum_{i=1}^t S_t(i)}\right]
                = 1,
        \end{align*}
        where we condition on $S_t$ in the first equality, use that $S_t\perp R_t|\mathcal{G}_{t-1}$ in the fourth equality, and use that $R_t|\mathcal{G}_{t-1}$ is uniform on $\{1, \dots, t\}$ in the fifth equality.
        The non-negativity of $e_t$ follows from the non-negativity of $S_t$.
        Lastly, the denominator can only be zero if the numerator is, since all the summands in the demoninator are non-negative and the denominator $S_t(R_t)$ is one of the summands. 
        Hence $e_t$ is a sequential $e$-value.

        For the log-optimality result, we first recognize that since $S_t$ is $\mathcal{G}_{t-1}$-measurable, $S_t\perp R_t|\mathcal{G}_{t-1}$, such that $e_t$ is indeed a sequential $e$-value. 
        Next, observe that $\sum_{i=1}^t g_t(i|\mathcal{G}_{t-1}) = 1$, because $g_t(\cdot | \mathcal{G}_{t-1})$ is a density w.r.t. the counting measure on $\{1,\dots,t\}$.
        As a consequence, 
        \begin{align*}
            e_t 
                = \frac{g_t(R_t|\mathcal{G}_{t-1})}{\frac{1}{t}\sum_{i=1}^t g_t(i|\mathcal{G}_{t-1})} 
                = \frac{g_t(R_t|\mathcal{G}_{t-1})}{\frac{1}{t}}.
        \end{align*}
        Finally, note that $1/t$ is the uniform density on $\{1, \dots, t\}$, which is exactly the conditional density of $R_t$ under our null hypothesis \eqref{eq:hyp_rank}.
        As a result, this is the sequential likelihood ratio as defined in \eqref{eq:logopt} and so its running product is the log-optimal martingale.
    \subsection{Proof of Theorem \ref{th:red_rank_test}}\label{sec:prredranks}
        First, we note that if under $H_0$, $\widetilde{S}_t \perp \widetilde{R}_t|\mathcal{G}_{t-1}$, it also holds that $\widetilde{S}_t \perp \widetilde{R}_t|\widetilde{\mathcal{G}}_{t-1}$, as the distribution of $\widetilde{S}_t$ only depends on $\mathcal{G}_{t-1}$ under $H_0^\text{red-rank}$.  
        We now show that $\tilde{e}_t$ is a conditional $e$-value. 
        For every $t \geq 0$, we have under the null hypothesis \eqref{eq:redh0},
        \begin{align*}
            \mathbb{E}_{\widetilde{S}_t,\widetilde{R}_t}[\tilde{e}_t|\widetilde{\mathcal{G}}_{t-1}]
                &= \mathbb{E}_{\widetilde{S}_t}\left[\mathbb{E}_{\widetilde{R}_t}[\tilde{e}_t|\widetilde{\mathcal{G}}_{t-1},\widetilde{S}_t]\mid \widetilde{\mathcal{G}}_{t-1}\right]
                = \mathbb{E}_{\widetilde{S}_t}\left[\mathbb{E}_{\widetilde{R}_t}\left[\frac{\widetilde{S}_t(\widetilde{R}_t)/q_t^{\widetilde{R}_t}}{\sum_{i=1}^{T_0+1} \widetilde{S}_t(i)} \middle| \widetilde{\mathcal{G}}_{t-1},\widetilde{S}_t\right]\middle|\widetilde{\mathcal{G}}_{t-1}\right] \\
                &= \mathbb{E}_{\widetilde{S}_t}\left[\frac{\mathbb{E}_{\widetilde{R}_t}\left[\widetilde{S}_t(\widetilde{R}_t)/q_t^{\widetilde{R}_t}\middle|{\widetilde{\mathcal{G}}_{t-1}},\widetilde{S}_t\right]}{\sum_{i=1}^{T_0+1} \widetilde{S}_t(i)} \middle|\widetilde{\mathcal{G}}_{t-1}\right]
                = \mathbb{E}_{\widetilde{S}_t}\left[\frac{\mathbb{E}_{\widetilde{R}_t}\left[\widetilde{S}_t(\widetilde{R}_t)/q_t^{\widetilde{R}_t}\middle|{\widetilde{\mathcal{G}}_{t-1}}\right]}{\sum_{i=1}^{T_0+1} \widetilde{S}_t(i)} \middle|\widetilde{\mathcal{G}}_{t-1}\right]\\         
                &= \mathbb{E}_{\widetilde{S}_t}\left[\frac{\sum_{i=1}^{T_0+1} \widetilde{S}_t(i)}{\sum_{i=1}^{T_0+1} \widetilde{S}_t(i)}\middle|\widetilde{\mathcal{G}}_{t-1}\right]
                = 1,
        \end{align*}
        where we condition on $\widetilde{S}_t$ in the first equality, and in the third equality we use that $\widetilde{S}_t\perp\widetilde{R}_t|\widetilde{\mathcal{G}}_{t-1}$ and that $q_t^1,\dots,q_t^{T_0+1}$ are $\widetilde{\mathcal{G}}_{t-1}$-measurable.  
        The fourth equality plugs in the distribution of $\widetilde{R}_t|\widetilde{\mathcal{G}}_{t-1}$ under $H_0^{\text{red-rank}}$ given in \eqref{eq:redh0}.
        The non-negativity of $\tilde{e}_t$ follows from the non-negativity of $\widetilde{S}_t$.
        Lastly, the denominator can only be zero if the numerator is, since all the summands in the demoninator are non-negative and the denominator $\widetilde{S}_t(\widetilde{R}_t)$ is one of the summands. Hence $\tilde{e}_t$ is a conditional $e$-value.

        We now turn to the log-optimality claim. 
        Again, since $\widetilde{S}_t$ is $\widetilde{\mathcal{G}}_{t-1}$-measurable, we have that $\widetilde{S}_t \perp \widetilde{R}_t|\widetilde{\mathcal{G}}_{t-1}$ such that $\tilde{e}_t$ is a conditional $e$-value.
        Using that $\tilde{g}_t(i|\widetilde{\mathcal{G}}_{t-1})$ is a density w.r.t. the counting measure on $\{1,\dots,T_0+1\}$, it integrates to one: $\sum_{i=1}^{T_0+1}\tilde{g}_t(i|\widetilde{\mathcal{G}}_{t-1})=1$. Rewriting $\tilde{e}_t$,
        \begin{align*}
            \tilde{e}_t=\frac{\tilde{g}_t(\widetilde{R}_t|\widetilde{\mathcal{G}}_{t-1})/q_t^{\widetilde{R}_t}}{\sum_{i=1}^{T_0+1} \tilde{g}_t(i|\widetilde{\mathcal{G}}_{t-1})}=\frac{\tilde{g}_t(\widetilde{R}_t|\widetilde{\mathcal{G}}_{t-1})}{q_t^{\widetilde{R}_t}}.
        \end{align*}
        Since the numerator is the conditional density of the alternative, and the denominator is the conditional density under $H_0^{\text{red-rank}}$, $\tilde{e}_t$ is a conditional likelihood ratio as defined in \eqref{eq:logopt} and hence its running product is a log-optimal test martingale.
        \qed

    \subsection{Proof of Theorem \ref{th:vovkdom}}\label{app:prvovk}
        When reducing from the sequential ranks to the reduced sequential ranks, we lose the information about the placement of the new observation among the previous post-treatment observations. 
        It turns out that we can actually cleanly decompose the sequential ranks into a combination of the reduced sequential ranks and some auxiliary random variable $N_t$ that captures this lost information. 
        The proof strategy will be to show that under the post-exchangeable alternative $H^\text{post-exch.}$, this auxiliary random variable $N_t$ is pure noise. 
        Next, we use Jensen's inequality to show that this noise is harmful in terms of the expected-log power.
        
        Recall that we denote the test statistics for $e_t$ and $\tilde{e}_t$ by $S_t(R_t)$ and $\widetilde{S}_t(\widetilde{R}_t)$, respectively. 
        As we are interested in the log-power, all expectations are taken with respect to $\mathbb{Q}$.
        The proof of this theorem is divided into three steps. 

        In step 1, we show that the normalization constants in $e_t$ and $\tilde{e}_t$ are redundant for the plug-in test statistics, and therefore we can write $e_t(R_t)=S_t(R_t)$ and $\tilde{e}_t(\widetilde{R}_t)=\widetilde{S}(\widetilde{R}_t)/q_t^{\widetilde{R}_t}$.
        In step 2, we show that conditional on the reduced rank $\widetilde{R}_t$ and filtration $\mathcal{G}_{t-1}$, the smoothed version $V_t$ of the sequential rank $R_t$ is uniformly distributed under $H_1^\text{post-exch}$. 
        This is the key argument of the proof: it implies that learning the distribution of ranks with the same reduced rank is futile and can only add noise.
        Finally, step 3 uses the tower property by conditioning on the reduced rank, and applies Jensen's inequality to relate the log-power of the standard plug-in test, $\mathbb{E}[\log e_t|\mathcal{G}_{t-1}]$, to that of the reduced rank plug-in test, $\mathbb{E}[\log \tilde{e}_t|\mathcal{G}_{t-1}]$.

        \paragraph{Step 1: Redundancy normalization constant}
            The difference between the test statistics and their corresponding $e$-values $e_t,\tilde{e}_t$ is that the $e$-values contain a normalization constant that ensures the expectation under the null equals 1. 
            We demonstrate that this normalization is redundant in this setting, by showing that the conditional expectations of $S_t$ and $\widetilde{S}_t/q_t$ are already equal to 1 under $H_0^\text{rank}$.
            
            First, we show that this holds for $S_t$. Recall that under the null $R_t|\mathcal{G}_{t-1}$ is uniform on $\{1,\dots,t\}$, and $u_t\sim U(0,1)$ is independently drawn from $R_t$.
            It follows that, conditional on $\mathcal{G}_{t-1}$, our smoothed rescaled rank $V_t=\frac{R_t-u_t}{t}$ is uniform on $[0,1]$ under $H_0^\text{rank}$. 
            We use this to calculate the conditional expectation of our test statistic $S_t$,
            \begin{align}
                \mathbb{E}[S_t(R_t)|\mathcal{G}_{t-1}]&= \mathbb{E}\left[\widehat{f}_t(V_t(R_t)) \middle|\mathcal{G}_{t-1}\right]= 
                \mathbb{E}\left[\mathbb{E}\left[\widehat{f}_t(V_t(R_t)) \middle|\mathcal{G}_{t-1},\widehat{f}_t\right]\middle| \mathcal{G}_{t-1}\right]\\
                &=\mathbb{E}\left[\widehat{F}(1)-\widehat{F}(0)\middle| \mathcal{G}_{t-1}\right]=1,
            \end{align}
            where we use the law of iterated expectation based on $\tilde{f}_t$ in the second equality. 
            The third equality uses that under the null, $\tilde{f}_t\perp V_t|\mathcal{G}_{t-1}$ and $V_t|\mathcal{G}_{t-1}\sim U(0,1)$.
            Finally, the fourth equality uses that $\widehat{f}_t$ is a density on $[0,1]$.
    
            Next, we show that the conditional expectation of $\widetilde{S}_t(\widetilde{R}_t)/q_t^{\widetilde{R}_t}$ equals 1. 
            Recall from \eqref{eq:redh0} that under $H_0$, $\widetilde{R}_t|\mathcal{G}_{t-1}\sim \text{Cat}\{q_t^1,\dots,q_t^{T_0+1}\}$. 
            Then,
            \begin{align}
                \mathbb{E}[\widetilde{S}_t(\widetilde{R}_t)/q_t^{\widetilde{R}_t}|\mathcal{G}_{t-1}]&= \mathbb{E}\left[\frac{\widehat{F}\left(\sum_{i=1}^{\widetilde{R}_t}q_t^{i}\right)-\widehat{F}\left(\sum_{i=1}^{\widetilde{R}_t-1}q_t^{i}\right)}{q_t^{\widetilde{R}_t}} \middle|\mathcal{G}_{t-1}\right]\\ 
                &=\mathbb{E}\left[\mathbb{E}\left[\frac{\widehat{F}\left(\sum_{i=1}^{\widetilde{R}_t}q_t^{i}\right)-\widehat{F}\left(\sum_{i=1}^{\widetilde{R}_t-1}q_t^{i}\right)}{q_t^{\widetilde{R}_t}} \middle|\mathcal{G}_{t-1},\widehat{f}_t\right]\middle| \mathcal{G}_{t-1}\right]\\
                &=\mathbb{E}\left[\sum_{j=1}^{T_0+1}q_t^j\frac{\widehat{F}\left(\sum_{i=1}^{j}q_t^{i}\right)-\widehat{F}\left(\sum_{i=1}^{j-1}q_t^{i}\right)}{q_t^{j}} \middle| \mathcal{G}_{t-1}\right]\\
                &=\mathbb{E}\left[\widehat{F}\left(\sum_{i=1}^{T_0+1}q_t^{i}\right)-\widehat{F}\left(0\right) \middle| \mathcal{G}_{t-1}\right]\\
                &=\mathbb{E}\left[\widehat{F}\left(1\right)-\widehat{F}\left(0\right) \middle| \mathcal{G}_{t-1}\right]=1
            \end{align}
            where the second equality uses the tower property, and the third equality uses that $\widehat{f}_t\perp \widetilde{R}_t|\mathcal{G}_{t-1}$ and that $\widetilde{R}_t|\mathcal{G}_{t-1}\sim \text{Cat}\{q_t^1,\dots,q_t^{T_0+1}\}$. 
            The fourth equality uses that that the sum in the numerator telescopes and uses the convention that $\sum_{i=1}^0 g(i)=0$.
            Last, the fifth equality uses that $q_t^1,\dots,q_t^{T_0+1}$ sum to one as they are probabilities of a categorical distribution on $T_0+1$ options.
    
            Hence, we can directly write $e_t(R_t)=S_t(R_t)$ and $\tilde{e}_t(\widetilde{R}_t)=S_t(\widetilde{R}_t)/q_t^{\widetilde{R}_t}$.

        \paragraph{Step 2: Uniformity of $V_t$ given previous ranks and current reduced rank\newline}
            \noindent We now derive the distribution of the rescaled smoothed ranks $V_t$, conditional on $\mathcal{G}_{t-1}$ and $\widetilde{R}_t$, under $H_1^\text{post-exch.}$. 
            In particular, we show that this distribution is uniform on the interval $\left[\sum_{i=1}^{\widetilde{R}_t-1}q_t^{i},\sum_{i=1}^{\widetilde{R}_t}q_t^{i}\right]$.

            To do this, we first derive the conditional distribution of the sequential rank given $\mathcal{G}_{t-1}$ and $\widetilde{R}_t$, and later use this to infer the conditional distribution of $V_t$. 
            We start by decomposing the sequential rank $R_t$ into a function of reduced ranks $\widetilde{R}_{T_0+1},...,\widetilde{R}_{t}$ and some auxiliary discrete random variable $N_t$ that adds the information about the placement of the $t$'th observation among other post-treatment observations with the same reduced rank $\widetilde{R}_t$. 
        
            For a given $\widetilde{R}_t$ and $\mathcal{G}_{t-1}$, we can write $R_t$ as follows
            \begin{align}
                R_t &= \sum_{s=T_0+1}^{t-1}I[\widetilde{R}_s<\widetilde{R}_t] + (\widetilde{R}_{t}-1) +N_t
            \end{align}
            where the first term counts the previous post-treatment observations with lower reduced ranks, the second term corresponds to the pre-treatment outcomes that have a lower rank, and $N_t$ is as described above.
        
            As post-treatment data are exchangeable under $H_1^\text{post-exch.}$, and since there are $n+1$ slots to place something between $n$ observations, $N_t$ has the following conditional distribution, 
            \begin{align*}
                N_t|\mathcal{G}_{t-1},\widetilde{R}_{t}\sim \text{Unif} \left\{1,\dots,1+\sum_{s=T_0+1}^{t-1}I[\widetilde{R}_s=\widetilde{R}_t]\right\}.
            \end{align*}
            
            Using the definition of $q_t^i=\frac{1}{t}\left(1+\sum_{s=1}^{t-1}I[\widetilde{R}_s=i]\right)$, the support of $N_t$ can be rewritten more compactly as $\left\{1,...,tq_t^{\widetilde{R}_t}\right\}$.
    
            Knowing the distribution of $N_t$, we now derive the distribution of $R_t$.
            Rearranging terms in $R_t$, we obtain              
            \begin{align}
                R_t &= \sum_{i=1}^{\widetilde{R}_{t-1}}\sum_{s=T_0+1}^{t-1}I[\widetilde{R}_s=i] + (\widetilde{R}_{t}-1) +N_t\\
                &= \sum_{i=1}^{\widetilde{R}_{t-1}}\left(1+\sum_{s=T_0+1}^{t-1}I[\widetilde{R}_s=i]\right) +N_t\\
                &= t\sum_{i=1}^{\widetilde{R}_{t-1}}q_t^i+N_t,
            \end{align}
            where the third equality again uses the definition of $q_t^i$. 
            It follows that conditional on $\mathcal{G}_{t-1}$ and $\widetilde{R}_{t-1}$, the sequential rank $R_t$ is uniformly distributed on $\left\{t\sum_{i=1}^{\widetilde{R}_{t-1}}q_t^i+1,\dots,t\sum_{i=1}^{\widetilde{R}_{t}}q_t^i\right\}$. 
            Since $V_t = \frac{R_t-u_t}{t}$, and $u_t\sim U(0,1)$ is drawn independently from $R_t$, we conclude that 
            \begin{align*}
                V_t|\mathcal{G}_{t-1},\widetilde{R}_{t-1}\sim U\left(\sum_{i=1}^{\widetilde{R}_t-1}q_t^{i},\sum_{i=1}^{\widetilde{R}_t}q_t^{i}\right).
            \end{align*}

        \paragraph{Step 3: Finalizing the proof}
            We now use the findings of step 1 and 2 to conclude the proof. 
            Recall that we need to prove the inequality in $\eqref{eq:vovkineq}$. 
            Starting from the LHS,
            \begin{align}
                 \mathbb{E}\left[\log e_t|\mathcal{G}_{t-1}\right]&=\mathbb{E}\left[\log \widehat{f}_t\left(V_t\right)\middle| \mathcal{G}_{t-1}\right]\\ &= 
                 \mathbb{E}\left[\mathbb{E}\left[\log \widehat{f}_t\left(V_t\right)\middle|\mathcal{G}_{t-1},\widetilde{R}_t\right]\middle|\mathcal{G}_{t-1}\right]\\
                 &\leq \mathbb{E}\left[\log\mathbb{E}\left[ \widehat{f}_t\left(V_t\right)\middle|\mathcal{G}_{t-1},\widetilde{R}_t\right]\middle|\mathcal{G}_{t-1}\right] \\
                 &=\mathbb{E}\left[\log\left(\int_{\sum_{i=1}^{\widetilde{R}_t-1}q_t^{i}}^{\sum_{i=1}^{\widetilde{R}_t}q_t^{i}}\frac{\widehat{f}_t\left(v\right)dv}{q_t^{\widetilde{R}_t}}\right)\middle|\mathcal{G}_{t-1}\right]\\
                 &= \mathbb{E}\left[\log\left(\frac{\widehat{F}\left({\sum_{i=1}^{\widetilde{R}_t}q_t^{i}}\right)-\widehat{F}\left({\sum_{i=1}^{\widetilde{R}_t-1}q_t^{i}}\right)}{q_t^{\widetilde{R}_t}}\right)\middle|\mathcal{G}_{t-1}\right],\\
                 &=\mathbb{E} \left[\log\tilde{e}_t|{\mathcal{G}}_{t-1}\right]
            \end{align}
            where the first equality uses the result from step 1, the second equality applies the tower property, and the first inequality follows from Jensen's inequality. 
            The third equality uses the result from step 2, that $V_t |\mathcal{G}_{t-1},\widetilde{R}_{t}$ is uniformly distributed on $\left(\sum_{i=1}^{\widetilde{R}_t-1}q_t^{i},\sum_{i=1}^{\widetilde{R}_t}q_t^{i}\right)$.
            The third equality also uses the fact that $\tilde{f}_t$ is independent of $\widetilde{R}_t$, given $\mathcal{G}_{t-1}$.  
            The fourth equality is based on the definition of the CDF $\widehat{F}(\cdot)$. 
            Finally, the fifth equality uses the result from step 1.\qed

\newpage

\section{Minimax Regret Property for Test Martingales}\label{app:minimax}
    The following proposition shows that the adaptive mixture over $k$ $e$-values that attains minimax `regret' is a simple average over their corresponding martingales. Regret is defined here as the difference between the growth rate of the mixture martingale and that of the fastest growing candidate martingale.
    Note that this mixture still constitutes data-dependent mixing over the $e$-values, as this mixing is done proportionally to the size of their martingales at time $t-1$.
    The regret bound also implies that the average growth of the mixture martingale converges to that of the fastest growing test martingale.
    \begin{proposition}\label{th:mixing}
        Suppose $(W_t^1)_{t\ge 0},\dots,(W_t^k)_{t\ge 0}$ are test martingales for $H_0$ adapted to the common filtration $\mathcal{F}=(\mathcal{F}_t)_{t\ge 0}$.  
        For $j=1,\dots,k$ and $t\geq1$, set $e_t^j:=W_t^j/W_{t-1}^j$ and assume that these $e$-values have support on $[0,\infty)$. Define  
        \begin{equation*}
            \widetilde{W}_t:=\frac{1}{k}\sum_{j=1}^k W_t^j,\qquad t\ge 0 .
        \end{equation*}
        Then, $(\widetilde{W}_t)_{t\ge 0}$ is a test martingale  that enjoys the following regret bound for every $t\ge 1$,
        \begin{equation}\label{eq:regret}
            \mathcal{R}_t(\widetilde{W}_t) := \max_{1\le j\le k}\left\{\log W_t^j\right\}-\log\widetilde{W}_t\le\log k .
        \end{equation}
        Also, $\log k$ is the tightest minimax regret bound among all running products that use $\mathcal{F}_{t-1}$-measurable convex combinations of $e_t^1,\dots,e_t^k$.
    \end{proposition}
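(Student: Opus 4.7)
My plan is to split the argument into three parts: verifying that $\widetilde{W}_t$ is a test martingale, establishing the pathwise upper bound $\log k$, and showing that this bound is tight over all admissible strategies. The first two parts are short; the third is the only interesting step.

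For the martingale property, I would simply invoke linearity of conditional expectation: a convex combination of non-negative supermartingales starting at $1$ is itself a non-negative supermartingale starting at $1$. Concretely, $\widetilde{W}_0=\frac{1}{k}\sum_j W_0^j=1$, and $\mathbb{E}[\widetilde{W}_t\mid\mathcal{F}_{t-1}]=\frac{1}{k}\sum_j\mathbb{E}[W_t^j\mid\mathcal{F}_{t-1}]\le\frac{1}{k}\sum_j W_{t-1}^j=\widetilde{W}_{t-1}$ under $H_0$. For the upper regret bound, the key observation is the pathwise inequality $\widetilde{W}_t=\frac{1}{k}\sum_j W_t^j\ge\frac{1}{k}\max_j W_t^j$ obtained by dropping $k-1$ non-negative summands; taking logarithms and rearranging gives \eqref{eq:regret}.

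The main obstacle, and the only non-routine piece of the argument, is tightness. I would prove it by an adversarial construction already at the first time step, which suffices because any admissible strategy has deterministic weights at $t=1$ (since $\mathcal{F}_0$ is trivial), so later adaptivity is not a confound. Given any convex weights $(\lambda_1^1,\dots,\lambda_1^k)$ summing to one, pigeonhole yields an index $j^{\star}$ with $\lambda_1^{j^{\star}}\le 1/k$. I would then exhibit a null distribution $\mathbb{P}\in H_0$ under which $e_1^{j^{\star}}$ takes the value $M$ with probability $1/M$ and the value $0$ otherwise, while $e_1^{i}\equiv 0$ for every $i\ne j^{\star}$. Each such $e_1^j$ is non-negative, has conditional mean at most $1$, and has support in $[0,\infty)$, so all conditions of the proposition are met. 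On the positive-probability event $\{e_1^{j^{\star}}=M\}$ the candidate running product equals $\lambda_1^{j^{\star}}M$, while $\max_j\log W_1^{j}=\log M$, so the regret on this event is $-\log\lambda_1^{j^{\star}}\ge\log k$. Hence no $\mathcal{F}_{t-1}$-measurable convex-combination strategy can achieve worst-case regret strictly below $\log k$.

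The delicate part is the adversarial construction in step three: I need the constructed $e$-values to concentrate all of their mass on a single (adversarially chosen) index $j^{\star}$ while still being valid sequential $e$-values. Allowing zero realizations (permitted by the stated support $[0,\infty)$) is essential; restricting to strictly positive $e$-values would break the argument. Once this construction is in place, the pigeonhole step and the evaluation of the regret on the high-value event are straightforward.
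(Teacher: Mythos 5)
Your proposal is correct and follows essentially the same route as the paper's proof: convexity for the supermartingale property, dropping $k-1$ non-negative summands for the pathwise bound $\widetilde{W}_t\ge\frac{1}{k}\max_j W_t^j$, and tightness via pigeonhole on the smallest first-period weight combined with zeroing out all other $e$-values at $t=1$. The only (immaterial) difference is that you give the surviving $e$-value a two-point distribution on $\{0,M\}$ where the paper simply sets it to the constant $1$; both realize regret $-\log\lambda_1^{j^\star}\ge\log k$ with positive probability.
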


    \begin{proof}
        Non-negativity and unit-start of $(\widetilde{W_t})_{t\geq0}$ are directly inherited from the candidate martingales. 
        The martingale property is also retained; for any $t\geq 1$,
        \begin{align*}
           \sup_{\mathbb{P}\in H}\mathbb{E}^\mathbb{P}\left[\widetilde{W}_t\mid\mathcal{F}_{t-1}\right]
           &=    \sup_{\mathbb{P}\in H}\mathbb{E}^\mathbb{P}\left[\frac1k\sum_{i=1}^k W_t^j
           \middle|\mathcal{F}_{t-1}\right]
           \\
           &\leq\frac1k\sum_{i=1}^k\sup_{\mathbb{P}\in H}\mathbb{E}^\mathbb{P}\!\bigl[W_t^i\mid\mathcal{F}_{t-1}\bigr]
           \le\frac1k\sum_{i=1}^k W_{t-1}^i
           =\widetilde{W}_{t-1},
        \end{align*}
        hence $\{\widetilde{W}_t\}_{t\geq0}$ is a test martingale.
        
        The regret bound follows directly from the definition of $\{\widetilde{W}_t\}_{t\geq0}$,
        \begin{equation*}
            \max_{1 \le j\le k}\left\{\log W_t^j\right\}-\log\widetilde{W}_t = 
            \log \left(\frac{\max_{1 \le j\le k}\left\{ W_t^j\right\}}{\frac{1}{k} \sum_{i=1}^kW_t^i}\right)
            \leq  \log \left(\frac{\max_{1 \le j\le k}\left\{ W_t^j\right\}}{\frac{1}{k} \max_{1 \le j\le k}\left\{ W_t^j\right\}}\right) = \log k.
        \end{equation*}
        To show that this is the tightest possible regret bound, we show that any mixture strategy can incur a regret of $\log k$ at any $t$.
        For all $t\geq1$, denote with $\bm{\lambda}_t=(\lambda_t^1,\dots,\lambda^k_t)$ a vector of $\mathcal{F}_{t-1}$-measurable convex weights.
        Let $\widehat{W}_t$ denote the martingale corresponding to this mixture strategy,
        \begin{equation*}
            \widehat{W}_t := \prod_{s=1}^t \sum_{i=1}^k\lambda_{s}^i e_s^i,\quad \text{for all $t\geq1$.}
        \end{equation*}
        
        Now choose $j_*:=\arg\min_{1\leq j\leq k}\lambda_1^j$; then $\lambda_1^{j_*}\le1/k$.
        Consider the following realizations of the candidate $e$-values,
        \begin{equation*}
            e_t^i = \begin{cases}
                        0, \quad \text{if $t=1$ and $i\neq j_*$} \\
                        1,\quad \text{else}.
                    \end{cases}
        \end{equation*}
        Then, for all $t\geq 1$, $\widehat{W}_t = \lambda_1^{j_*}$ and $\max_{1\leq j\leq k }W_{t}^i=1$. 
        So for any weighting strategy, there exists a realization of the candidate $e$-values for which,
        \begin{equation*}
            \max_{1\le j\le k}\left\{\log W_t^j\right\}-\log\widehat{W}_t= 0-\log \lambda_1^{j_*}\ge\log k,
        \end{equation*}
        and hence $\log k$ is the tightest possible regret bound.
        \end{proof}

\newpage

\newpage

\section{Efficient approximation of Gaussian reduced rank test statistic}\label{app:effgauss}
    Consider the reduced rank Gaussian setting as described in Section \ref{sec:gauss}. 
    First, we condition on the pre-treatment outcomes $Y^\text{pre}=Y_1,\dots,Y_{T_0}$. 
    Then, the conditional probability of observing reduced rank $\widetilde{R}_t=\tilde{r}$ is given by 
    \begin{align}
        g_t(\tilde{r}|Y^\text{pre}) = \Phi\left(Y^\text{pre}_{[\tilde{r}]}-\frac{\mu_\text{post}}{\sigma}\right)-\Phi\left(Y^\text{pre}_{[\tilde{r}-1]}-\frac{\mu_\text{post}}{\sigma}\right),
    \end{align}
    where $\Phi$ is the standard normal CDF and with slight abuse of notation $Y^{\text{prev}}_{[0]}=-\infty$ and $Y^{\text{prev}}_{[T_0+1]}=\infty$.
        
    Using the law of conditional expectation, the conditional probabilities that $\widetilde{R}_t=\tilde{r}_t$ given $\widetilde{R}_{1}=\tilde{r}_1,\dots,\widetilde{R}_{t-1}=\tilde{r}_{t-1}$ are proportional to the following expectation,
        
    \begin{align*}
        \widetilde{f}_t(\tilde{r}_t)\propto \mathbb{E}^{Y^\text{pre}}\left[\prod_{j=1}^{T_0+1}g_t(j|Y^\text{pre})^{\#\{i \in \{T_0+1,...,t\}|\tilde{r}_i=j\}}\right].
    \end{align*}

    Below, we provide an algorithm for efficiently finding this expectation.
    \newpage
    \begin{algorithm}[ht!]
        \caption{Efficient computation of optimal Gaussian test statistic}\label{alg:reduced_rank}
        \KwIn{
        \begin{itemize}
            \item Time index: $t > T_0$
            \item Previously observed reduced ranks: $\tilde{r}_{s}$, for $s < t$
            \item Effect size: $\frac{\mu_\text{post} - \mu_\text{pre}}{\sigma}$
            \item Number of pre-treatment observations: $T_0$
            \item Sample size: $T$
        \end{itemize}
        }
        \KwOut{Optimal test statistic $\widetilde{S}_t \propto \tilde{f}_t$}

        \BlankLine
        \For{$m = 1, \dots, M$}{
        Draw $X_1^m,\dots,X_{T_0}^m \overset{\text{i.i.d.}}{\sim} \mathcal{N}(0,1)$.
        }

        \BlankLine

        \For{$\widetilde{r} = 1, \dots, T_0 + 1, m = 1, \dots, M$}{
         Compute the conditional probability $g_t(\widetilde{r} \mid X_1^m, \dots X_{T_0}^m)$ using:
        \begin{align*}
             g_t(\widetilde{r} \mid X_1^m,\dots X_{T_0}^m) &= \Phi\left( X^m_{[\widetilde{r}]} - \frac{\mu_\text{post}-\mu_\text{pre}}{\sigma} \right) \\
             &-  \Phi\left( X^m_{[\widetilde{r}-1]} - \frac{\mu_\text{post}
             -\mu_\text{pre}}{\sigma} \right)
         \end{align*}
         where $X^m_{[0]} = -\infty$ and $X^m_{[T_0+1]} = \infty$.
        }
        \BlankLine

        \For{$\widetilde{r}_t = 1, \dots, T_0 + 1$}{
       Compute the test statistic for $\widetilde{R}_t = \widetilde{r}_t$ given the previous reduced ranks $\widetilde{r}_1, \dots, \widetilde{r}_{t-1}$:
        \begin{align*}
            \widetilde{S}_t(\widetilde{r}_t) = \frac{1}{M} \sum_{i=1}^M \left[ \prod_{j=1}^{T_0+1} g_t(j \mid X_1^m,\dots X_{T_0}^m)^{\#\{i \in \{T_0+1, \dots, t\} \mid \tilde{r}_i = j \}} \right]
        \end{align*}
        }
    \end{algorithm}

\newpage

\section{Additional simulation results}
    \subsection{Size distortions without block-structure}\label{app:blocks}
        We study the size properties under the same simulation setting as in Table \ref{tab:b3}, but now with a block size of $B=1$. 
        
        From Table $\ref{tab:b1}$, we again see that all tests maintain size control across all sample dimensions, when $\rho_\lambda=\rho_\varepsilon=0$. 
        However, in the non-exchangeable settings, the size distortions are more pronounced. 
        We see that there is substantially more size distortion compared to when $B=3$. 
        For example, the size of the anytime-valid Gaussian test in the `least exchangeable' setting where $\rho_\lambda=0.75, \rho_\varepsilon=0.5$, $T_0=20$, and $N=20$, is now 20\%, as opposed to $9$\% for $B=3$. 
        
        It seems that when $B=1$, the test mainly suffers from a high correlation in the idiosyncratic terms ($\rho_\varepsilon)$. 
        When the pre-treatment sample $T_0$ is large and $\rho_\varepsilon$ is low, the methods still control size reasonably well. 
        The relative performance of the three tests is similar to when $B=3$.
        
        \begin{table}[h!]
\centering
\caption{Rejection rates under $H_0$ $(\alpha=0.05)$ with $B=1$} \label{tab:b1}
\small
\begin{tabular}{@{}cc|ccccccccccc@{}}
\toprule
\midrule
&&\multicolumn{9}{c}{$\bm{\rho_\varepsilon=0}$} \\
\midrule
&& \multicolumn{3}{c}{Fixed-$T$} & \multicolumn{3}{c}{AV Gaussian} & \multicolumn{3}{c}{AV Plug-in} \\ 
\cmidrule(lr){3-5} \cmidrule(lr){6-8} \cmidrule(lr){9-11}
 $\rho_\lambda$ & $N$ & $T_0=20$ & $T_0=50$ & $T_0=100$ & $T_0=20$ & $T_0=50$ & $T_0=100$ & $T_0=20$ & $T_0=50$ & $T_0=100$  \\
\midrule
\multirow{2}{*}{0} &20 & 0.06 & 0.06 & 0.05 & 0.03 & 0.03 & 0.02 & 0.02 & 0.03 & 0.02 \\ 
 &50 & 0.05 & 0.05 & 0.04 & 0.03 & 0.03 & 0.02 & 0.02 & 0.03 & 0.02 \\ 
\midrule
\multirow{2}{*}{0.25} &20 & 0.06 & 0.06 & 0.06 & 0.05 & 0.04 & 0.03 & 0.03 & 0.03 & 0.03 \\ 
 &50 & 0.07 & 0.07 & 0.07 & 0.05 & 0.04 & 0.03 & 0.03 & 0.03 & 0.03 \\ 
\midrule
\multirow{2}{*}{0.5} &20 & 0.09 & 0.09 & 0.08 & 0.07 & 0.06 & 0.05 & 0.06 & 0.06 & 0.04 \\ 
 &50 & 0.09 & 0.07 & 0.08 & 0.07 & 0.04 & 0.03 & 0.06 & 0.03 & 0.05 \\ 
\midrule
\multirow{2}{*}{0.75} &20 & 0.13 & 0.12 & 0.12 & 0.12 & 0.10 & 0.08 & 0.14 & 0.13 & 0.11 \\ 
 &50 & 0.13 & 0.11 & 0.09 & 0.12 & 0.07 & 0.05 & 0.12 & 0.08 & 0.07 \\ 
\midrule
$\rho_\lambda$ & $N$ & \multicolumn{9}{c}{$\bm{\rho_\varepsilon=0.25}$} \\ 
\midrule
\multirow{2}{*}{0} &20 & 0.09 & 0.10 & 0.09 & 0.06 & 0.07 & 0.05 & 0.04 & 0.06 & 0.05 \\ 
 &50 & 0.08 & 0.09 & 0.09 & 0.06 & 0.06 & 0.05 & 0.04 & 0.05 & 0.06 \\ 
\midrule
\multirow{2}{*}{0.25} &20 & 0.10 & 0.10 & 0.11 & 0.08 & 0.07 & 0.07 & 0.07 & 0.07 & 0.06 \\ 
 &50 & 0.10 & 0.10 & 0.12 & 0.08 & 0.06 & 0.05 & 0.07 & 0.06 & 0.06 \\ 
\midrule
\multirow{2}{*}{0.5} &20 & 0.13 & 0.13 & 0.11 & 0.12 & 0.10 & 0.07 & 0.11 & 0.10 & 0.08 \\ 
 &50 & 0.11 & 0.11 & 0.11 & 0.11 & 0.07 & 0.06 & 0.10 & 0.08 & 0.08 \\ 
\midrule
\multirow{2}{*}{0.75} &20 & 0.17 & 0.15 & 0.15 & 0.17 & 0.12 & 0.11 & 0.18 & 0.16 & 0.16 \\ 
 &50 & 0.16 & 0.14 & 0.14 & 0.15 & 0.11 & 0.10 & 0.17 & 0.15 & 0.12 \\ 
\midrule
$\rho_\lambda$ & $N$ & \multicolumn{9}{c}{$\bm{\rho_\varepsilon=0.5}$} \\ 
\midrule
\multirow{2}{*}{0} &20 & 0.13 & 0.15 & 0.16 & 0.11 & 0.13 & 0.11 & 0.10 & 0.13 & 0.14 \\ 
 &50 & 0.14 & 0.14 & 0.16 & 0.12 & 0.10 & 0.09 & 0.11 & 0.12 & 0.14 \\ 
\midrule
\multirow{2}{*}{0.25} &20 & 0.15 & 0.15 & 0.14 & 0.13 & 0.13 & 0.11 & 0.13 & 0.15 & 0.15 \\ 
 &50 & 0.16 & 0.16 & 0.15 & 0.15 & 0.13 & 0.10 & 0.14 & 0.15 & 0.14 \\ 
\midrule
\multirow{2}{*}{0.5} &20 & 0.17 & 0.16 & 0.16 & 0.15 & 0.14 & 0.13 & 0.17 & 0.19 & 0.18 \\ 
 &50 & 0.16 & 0.16 & 0.17 & 0.15 & 0.13 & 0.12 & 0.17 & 0.18 & 0.17 \\ 
\midrule
\multirow{2}{*}{0.75} &20 & 0.20 & 0.21 & 0.19 & 0.20 & 0.19 & 0.15 & 0.25 & 0.26 & 0.24 \\ 
 &50 & 0.19 & 0.19 & 0.18 & 0.19 & 0.18 & 0.13 & 0.25 & 0.24 & 0.21 \\ 
\midrule
\bottomrule
\end{tabular}
\begin{tablenotes} 
    \item\textit{Notes:} 2000 repetitions of simulation design from IFE model as described in (\ref{eq:scm1},\ref{eq:scm2}). Fixed-$T$: test by \cite{abadie2021synthetic}, AV Gaussian: Anytime-valid reduced rank test with Gaussian alternative, AV Plug-in: Anytime-valid reduced rank test with plug-in alternative. The fixed-$T$ test is performed after $12$ post-treatment observations. $T_\mathcal{B}=T_0/2$ and $T-T_0=30$.
\end{tablenotes}
\end{table}

        \newpage

        \subsection{Similar rejection rates for SCM and DiD}\label{app:similar}
        
        \begin{figure}[ht!]
            \centering
            \begin{subfigure}[b]{0.45\linewidth}
                \centering
                \includegraphics[width=\linewidth]{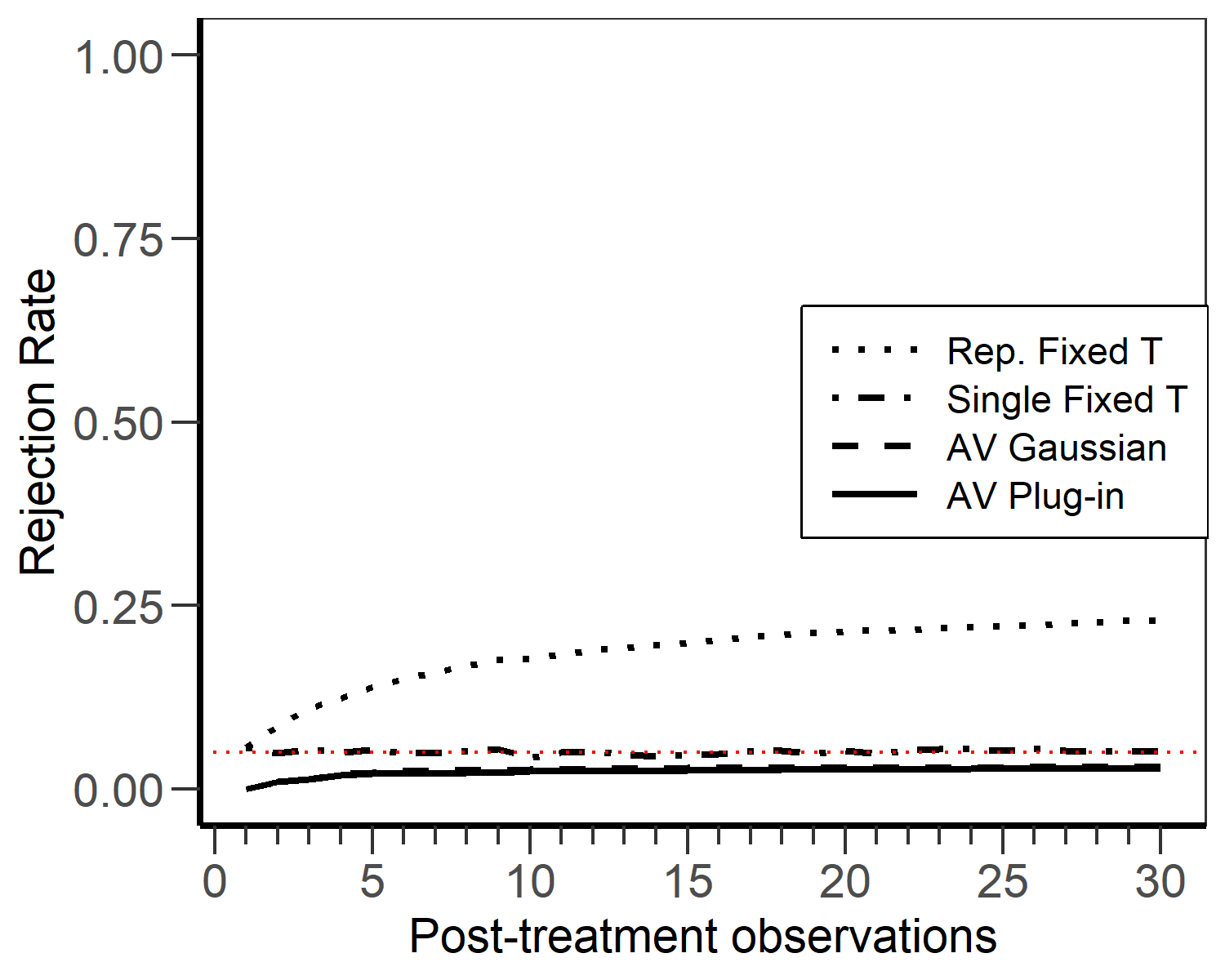}
                \caption{ $H_0: \tau_t=0$}
                \label{fig:apsim1}
            \end{subfigure}%
            \hfill
            \begin{subfigure}[b]{0.45\linewidth}
                \centering
                \includegraphics[width=\linewidth]{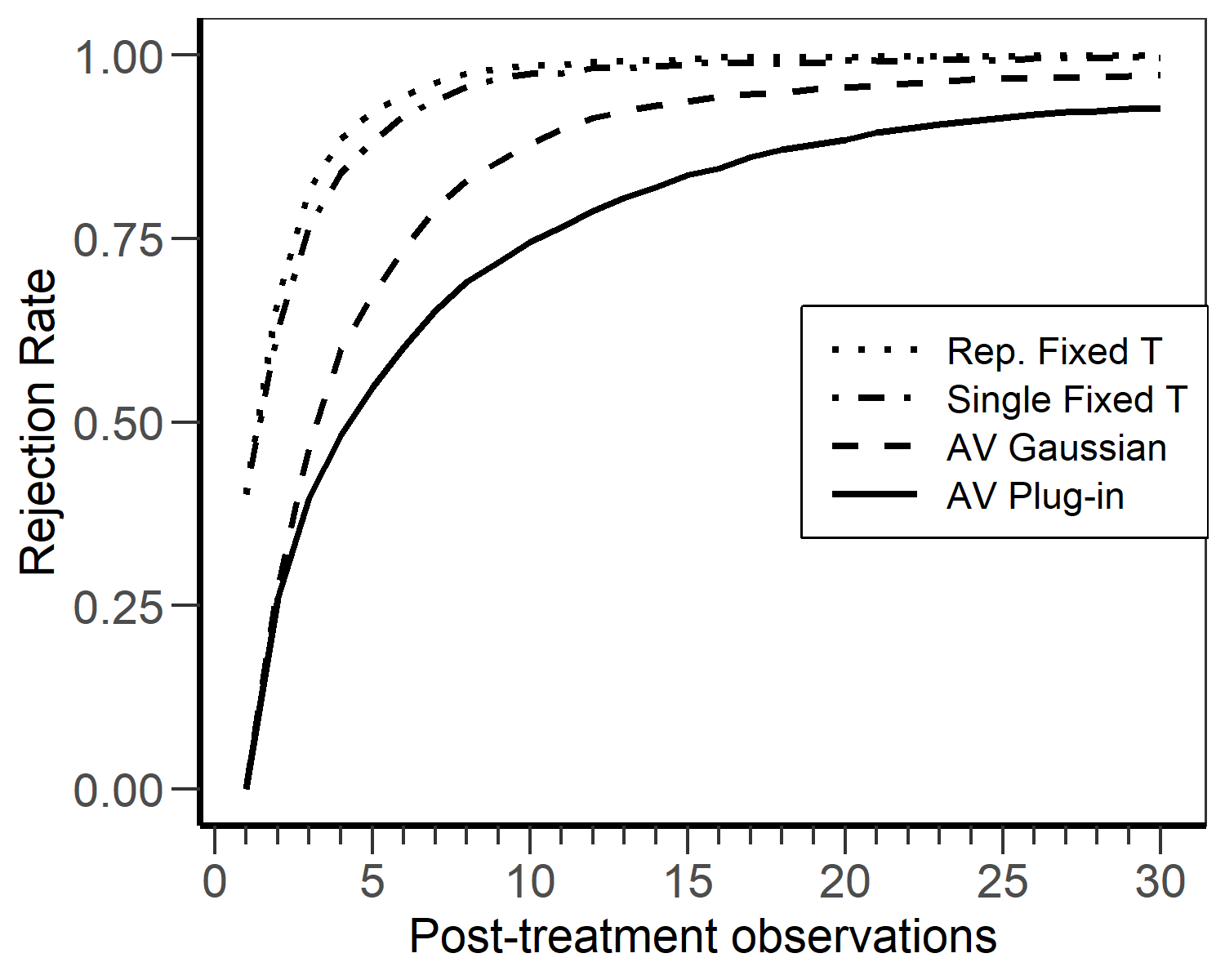}
                \caption{$H_1: \tau_t=2$}
                \label{fig:apsim2}
            \end{subfigure}
            \caption{Rejection rates based on the SCM estimator under $H_0$ and $H_1$ for different testing procedures ($\alpha=0.05$) over time.
                Repeated fixed-$T$: rejection as soon as $p_t$ of fixed-$T$ dips below $\alpha$. 
                Single fixed-$T$: rejection if fixed-$T$ test rejects at time $T$. 
                AV Gaussian: anytime-valid reduced rank test with Gaussian alternative.
                AV Plug-in: anytime-valid reduced rank test with plug-in alternative. 
                Results are based on 2000 simulations from the IFE model with parallel trends and independent standard normal noise. 
                Sample sizes: $T_0=50$, $T_\mathcal{B}=25$, $N=20$. }
            \label{fig:app_comp}
        \end{figure}
        
        \begin{figure}[ht!]
            \centering
            \begin{subfigure}[b]{0.49\linewidth}
                \centering
                \includegraphics[width=\linewidth]{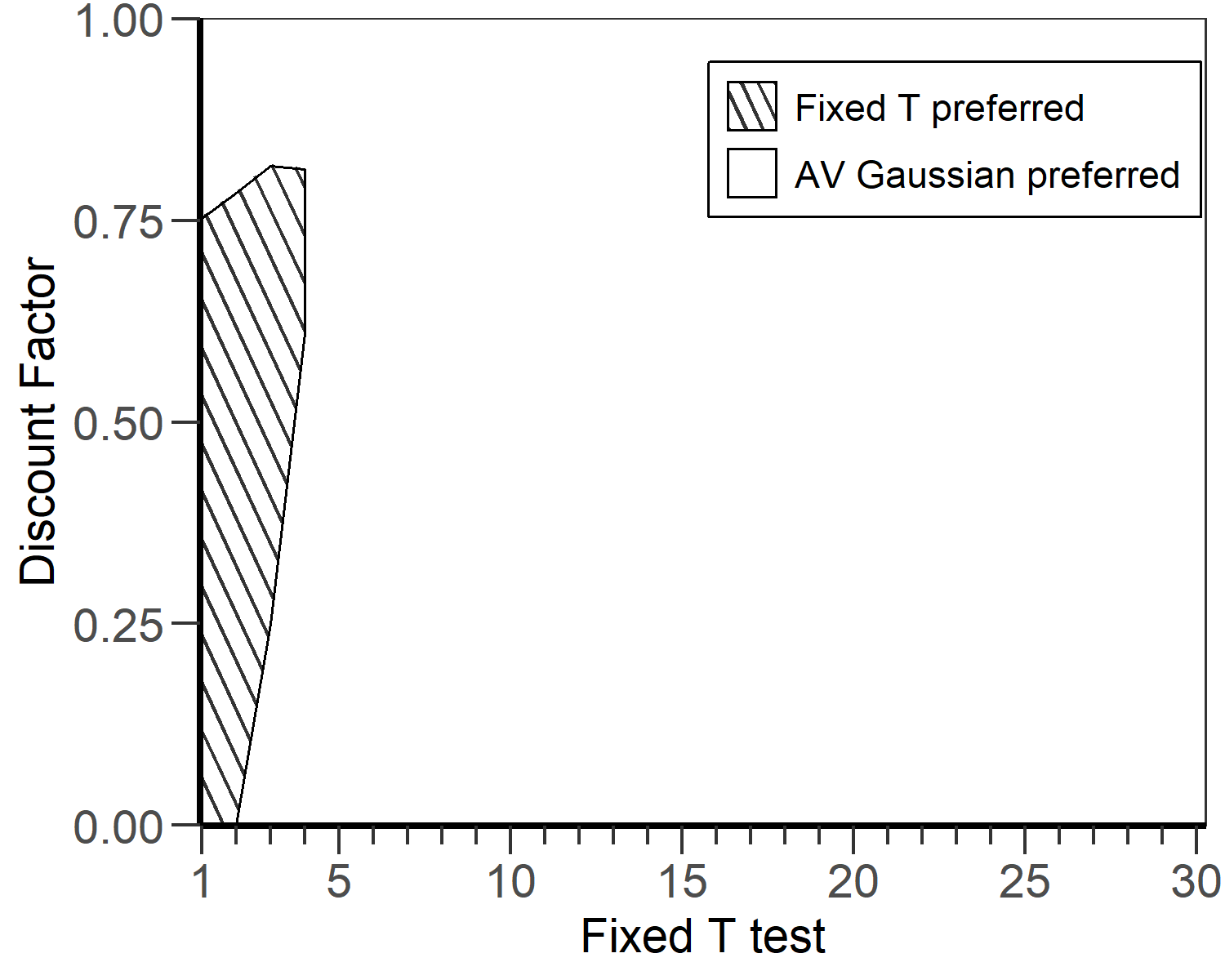}
            \caption{Anytime-valid test with Gaussian alternative vs. multiple fixed-$T$ tests}
                \label{fig:apdelta1}
            \end{subfigure}%
            \hfill
            \begin{subfigure}[b]{0.49\linewidth}
                \centering
                \includegraphics[width=\linewidth]{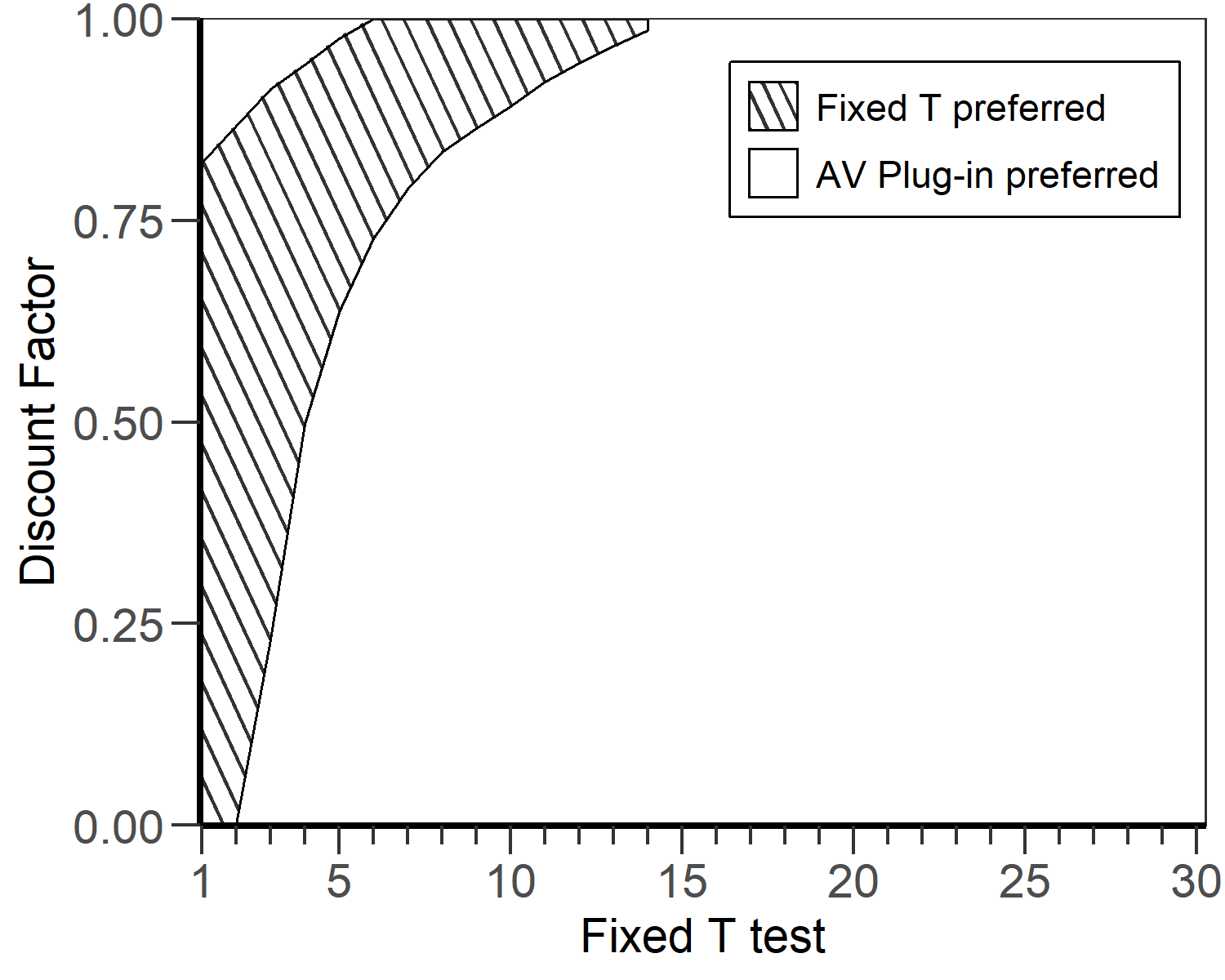}
            \caption{Anytime-valid test with plug-in alternative vs. multiple fixed-$T$ tests}
                \label{fig:apdelta2}
            \end{subfigure}
            \caption{Regions of discount factors $\delta$ in which the fixed-$T$ test at different levels of $T$ attains higher discounted utility than the anytime-valid tests. 
            The simulation settings are the same as in Figure \ref{fig:apsim2}.}
            \label{fig:app_delta}
        \end{figure}

\newpage 

 \section{Fixed-\textit{T} test} \label{app:fixedT}
    In this section, we present a fixed-$T$ test that can be used for inference within the IFE. 
    Specifically, we focus on the permutation test proposed by \cite{abadie2021synthetic}, which, like our approach, is based on split conformal inference. 
    This stands in contrast to the full conformal inference approach adopted by \cite{chernozhukov2021exact}.
    
    Let $
    \widehat{\boldsymbol{\tau}} = \left\{\widehat{\tau}_t \right\}_{t \in \mathcal{B} \cup \{T_0+1, \dots, T\}}$
    represent a sequence of treatment effect estimators, which are assumed to be exchangeably distributed under the null hypothesis, $H_0$. 
    Define $\Pi$ as the set of all possible length-$(T - T_0)$-combinations of the elements in $\mathcal{B} \cup \{T_0+1, \dots, T\}$. For each $\pi \in \Pi$, let $\pi(i)$ indicate the $i$th smallest element in $\pi$. 
    We then express 
    $\widehat{\bm{e}}_\pi = \left( \widehat{\tau}_{\pi(1)}, \dots, \widehat{\tau}_{\pi(T-T_0)} \right)$ and
    $
    \widehat{\bm{e}} = \left( \widehat{\tau}_{T_0+1}, \dots, \widehat{\tau}_{T} \right).$
    
    The test statistic used by \cite{abadie2003economic} is given by
    \[
    \bar{S}_t(\bm{e}) = \sum_{t=1}^{T-T_0} |e_t|,
    \]
    though they note that more general $L_p$ norms can be employed. 
    For one-sided tests, the absolute value operator can be put outside of the summation. 
    Now, the permutation $p$-value constructed as
    \[
    \widehat{p} = \frac{1}{|\Pi|} \sum_{\pi \in \Pi} 1\{\bar{S}(\widehat{e}_\pi) \geq \bar{S}(\widehat{e})\},
    \]
    is valid for $H_0$.
 \end{appendix}

\end{document}